\def\l@subsubsection#1#2{}
\def\l@subsection#1#2{}
\newtcolorbox[blend into=figures]{boxfigure}[2][]
{ float*=htb,width=\textwidth,lower separated=false, center upper, 
center title,title={#2},every float=\centering,#1}
\declaretheoremstyle[shaded={rulecolor=gray,rulewidth=1pt, bgcolor={rgb}{1,1,1}}]{boxed}
\declaretheoremstyle[spacebelow=\parsep,
    spaceabove=\parsep]{unboxed}
\declaretheoremstyle[shaded={rulecolor=MidnightBlue,rulewidth=1pt, bgcolor={rgb}{1,1,1}}]{boxedColour}
\declaretheoremstyle[shaded={rulecolor=Thistle,rulewidth=1pt, bgcolor={rgb}{1,1,1}}]{secboxed}
\declaretheoremstyle[shaded={rulecolor=YellowOrange,rulewidth=1pt, bgcolor={rgb}{1,1,1}}]{terboxed}
\declaretheoremstyle[shaded={rulecolor=Green,rulewidth=1pt, bgcolor={rgb}{1,1,1}}]{tetraboxed}
\declaretheorem[style=boxed]{lemma}
\declaretheorem[style=unboxed,sibling=lemma ]{remark}
\declaretheorem[sibling=lemma, style=boxed]{corollary}
\declaretheorem[style=boxed]{definition}
	\newcommand{\red}[1]{\textcolor{Red}{#1}}
	\newcommand{\fref}[1]{Fig.~\ref{#1}}
    \newcommand{\ket}[1]{\vert  #1 \rangle}
    \newcommand{\bra}[1]{\langle #1 |}
\newcommand{\cT}{\mathcal{T}}
\newcommand{\BC}{\ensuremath{\mathcal{BC}}}
\newcommand{\CF}{\ensuremath{\mathcal{CF}}}
\newcommand{\CD}{\ensuremath{\mathcal{CD}}}
	\renewcommand{\vec}[1]{\mathbf{#1}}
	\newcommand*{\eps}{\varepsilon}
\newcommand{\safezone}{trusted region\xspace}
\newcommand{\footnoteremember}[2]{\footnote{\label{#1}#2}\newcounter{#1}\setcounter{#1}{\value{footnote}}}
\newcommand{\footnoterecall}[1]{\hyperref[#1]{\footnotemark[\value{#1}]}}
\begin{document}

\title{Composable security in relativistic\\ quantum cryptography}

\author{V. Vilasini}
\affiliation{Department of Mathematics, University of York, Heslington, York, YO10 5DD, UK}
\affiliation{Institute for Theoretical Physics, ETH Z\"{u}rich, 8093 Z\"{u}rich, Switzerland}
\email{vilasini.phys@gmail.com}

\author{Christopher Portmann}
\affiliation{Department of Computer Science, ETH Z\"{u}rich, 8092 Z\"{u}rich, Switzerland}
\email{chportma@ethz.ch}

\author{Lídia del Rio}
\affiliation{Institute for Theoretical Physics, ETH Z\"{u}rich, 8093 Z\"{u}rich, Switzerland}
\email{lidia@phys.ethz.ch}

\date{}

\begin{abstract}
Relativistic protocols have been proposed to overcome some impossibility results in classical and quantum cryptography. In such a setting, one takes the location of honest players into account, and uses the fact that information cannot travel faster than the speed of light to limit the abilities of dishonest agents. For example, various relativistic bit commitment protocols have been proposed  \cite{Kent2012,Lunghi2015}. Although it has been shown that bit commitment is sufficient to construct oblivious transfer \cite{Unruh2010}, composing specific relativistic protocols in this way is known to be insecure \cite[Appendix~A]{Kaniewski2015}. A composable framework is required to perform such a modular security analysis, but no known frameworks can handle models of computation in Minkowski space.

By instantiating the systems model from the Abstract Cryptography framework \cite{Maurer2011} with Causal Boxes \cite{Portmann2017}, we obtain such a composable framework, in which messages are assigned a location in Minkowski space (or superpositions thereof). This allows us to analyse relativistic protocols and to derive novel possibility and impossibility results. We show that (1) coin flipping can be constructed from the primitive channel with delay, (2) biased coin flipping,  bit commitment and channel with delay are all impossible without further assumptions, and (3) it is impossible to improve a channel with delay. Note that the impossibility results also hold in the computational and bounded storage settings. This implies in particular non-composability of all proposed relativistic bit commitment protocols, of bit commitment in the bounded storage model \cite{Damgard}, and of biased coin flipping \cite{Chailloux2013}.
\end{abstract}

\maketitle

\section*{{Contents}}
\begin{center}

\begin{tabular}{p{0.4\textwidth}  p{0.4\textwidth} }

    \textbf{Section \ref{sec:introduction}.} Introduction
    & \textbf{Appendix \ref{appendix:causalboxes}.} The causal box framework \\
    \textbf{Section \ref{sec:framework}.} Framework
    & \textbf{Appendix \ref{appendix:proofs}.} Proofs\\
    \textbf{Section \ref{sec:results}.} Results
    & \textbf{Appendix \ref{appendix:extraresults}.} Unfair resources \\
     \textbf{Section \ref{sec:discussion}.}  Discussion

\end{tabular}

\end{center}




\newpage 

\section{Introduction}
\label{sec:introduction}

\paragraph{What this paper is about.}
We address construction of resources\footnote{The Abstract Cryptography framework \cite{Maurer2011} views cryptography as a resource theory: a protocol constructs a resource (e.g.\ a system that produces a random coin flip) from some other resource (e.g.\
a system that allows bit commitment). In the Universal Composability framework  \cite{Canetti}, resources correspond to ideal functionalities.} (e.g., an ideal coin flip or bit commitment) in relativistic quantum cryptography, and security definitions that are robust under composition of constructions. We prove new constructibility and impossibility results. By ``relativistic'' we mean basic special relativity: Minkowski space-time with limited signalling speed.

\paragraph{A cryptographic resource: bit commitment.} To illustrate the need for a composable analysis of relativistic quantum cryptography, we focus on bit commitment protocols, which have attracted interest in recent years \cite{Kent1999,Kent2012,Kaniewski2013,Lunghi2015}. Bit commitment is a crucial cryptographic primitive, from which we can construct oblivious transfer\footnoteremember{fn:bc}{Constructing oblivious transfer (and thus multi-party computation) from bit commitment holds only in the quantum case \cite{Unruh2010}.} \cite{Unruh2010}, multi-party computation\footnoterecall{fn:bc} \cite{Unruh2010}, coin flipping \cite{Blum1983}, and zero-knowledge proofs \cite{Kilian1988}.

A bit commitment protocol (\BC) between two players (say Alice and Bob) typically involves two phases. In the \emph{commit phase}, Alice commits to a bit $a \in \{0,1\}$ with Bob by exchanging information with him. 
In the \emph{open phase}, Alice chooses to open her commitment to Bob and reveals her bit to him through an exchange of information. Intuitively speaking, security of bit commitment has two requirements: 
 \begin{description}
     \item[Hiding:] when Alice is honest, Bob has no information about $a$ before the open phase.
    \item[Binding:] when Bob is honest,  Alice must not be able to change the value of $a$ between the commit and open phases without him detecting her malicious behavior.
 \end{description} 

These requirements can be formalized under different security definitions. Not all models of security of \BC\ are  composable: for example the $\epsilon$-weakly binding definition of \cite{Kaniewski2013} is not. There, Alice is allowed to commit to a bit without knowing its value, which if used as a subroutine in a coin flipping protocol, would allow dishonest players to perfectly correlate the coin flips from different coins. Similar weaknesses in current definitions of relativistic bit commitment have been exploited to show that using these protocols as subroutine in a larger cryptosystem is insecure~\cite[Appendix A]{Kaniewski2015}. In this work, we model security such that the constructed \BC\ resource can be securely used in arbitrary context. Let us first review some known results.

\paragraph{Impossiblity of classical bit commitment.}
In 2001, Canetti and Fischlin showed that constructing a \BC\ resource without any setup assumptions is impossible \cite{Canetti2001}. They proved this for a classical non-relativistic setting through a classical man-in-the-middle attack (MITM).
Consider a cheating Alice simultaneously running two \BC\ protocols: one with Bob, in which she is the committer, and one with Charlie, in which she is the receiver. She can commit to Charlie's bit with Bob by simply forwarding their messages to each other during the commit phase. 
Note that the proof from \cite{Canetti2001} is restricted to the classical setting, and does not imply the impossibility of constructing a \BC\ resource in either quantum or relativistic settings.

\paragraph{Impossibility of quantum bit commitment.}
Using a stand-alone definition with information-theoretic security, Mayers, and Lo and Chau \cite{Mayers1997,Lo1997,Lo1998} independently showed between 1996 and 1997 that no secure quantum bit commitment protocol can be constructed without further assumptions (for example regarding the operations that (dishonest) parties can perform on their systems), because due to Uhlmann's theorem, if Bob cannot distinguish between the commitment to a $0$ or a $1$, then there exists a unitary on Alice's system allowing her to change the commitment from $0$ to $1$. Possibility results are obtained by restricting the adversary's capabilities. For example, Unruh showed in \cite{Unr11} that if the adversary has bounded quantum memory, bit commitment that is composable in certain restricted settings is possible.\footnote{The model used in \cite{Unr11} does not guarantee security when a protocol is composed with itself. There is thus no contradiction with the impossibility proof for bit commitment in the bounded storage model in this work, which shows that any bit commitment protocol run in parallel with another instance if itself is insecure.} In \cite{Unr13} Unruh also shows that everlasting quantum bit commitment is achievable, if we assume signature cards as trusted setup.

\paragraph{Relativistic protocols.}
In the hope of avoiding such attacks without making unproven assumptions on the adversary's capabilities, one turns to relativistic protocols and imposes relativistic causal constraints on agents located in Minkowski space---no-signalling between space-like separated agents and a maximum propagation speed for signals.
An example is Kent’s 2012  relativistic \BC\ protocol \cite{Kent2012}, which is immune to the Mayers-Lo-Chau attack, since the sender splits into two space-like separated agents who can no longer perform suitable unitaries on their joint systems. Like other relativistic \BC\ protocols, this protocol implements a timed commitment which is secure only within a time window given by  the time taken by light to travel between remote agents. However, it only satisfies a non-composable, weakly-binding security definition \cite{Kaniewski2013}. As we will see, this protocol is susceptible to a man-in-the-middle attack and therefore cannot be securely run as a subroutine in arbitrary protocols. 

\paragraph{Composability of relativistic protocols.}
In relativistic settings, the existing negative results  are obtained by analyzing specific examples of protocols and attacks where composition fails \cite{Kaniewski2013,Kaniewski2015}. However, without an overall coherent framework for modelling composability in relativistic cryptography, it is impossible to obtain general positive and negative results.

\subsection{Overview and scope of our results}

In this work we introduce a framework for modelling composable cryptographic security in the presence of classical, quantum and no-signalling adversaries, and apply it to prove new positive and negative results in relativistic quantum cryptography.
We do this by modelling the abstract information-processing systems of the \emph{Abstract Cryptography} framework \cite{Maurer2011} as \emph{Causal Boxes} \cite{Portmann2017}, which we instantiate with Minkowski space-time.
Our framework can also be applied to situations where agents exchange a superposition of different numbers of messages in a superposition of orders in time, and provides an operational formalism for studying indefinite causal structures. 

We analyse three cryptographic resources, defined in Section~\ref{sec:framework}. Coin flipping (\CF, including biased variations) and bit commitment (\BC) are standard in the composable security literature, though in this work our formalization involves space-time---inputs and outputs are produced at certain locations in Minkowski space. We also introduce a channel with delay (\CD), which is motivated by the fact that in relativistic bit commitment protocols, the commitment is automatically opened after some (predefined) time, thus resembling a \CD\ more than a \BC.\footnote{There may be different ways of modeling a relativistic bit commitment resource, e.g., the committer may have the option of aborting before the commitment is opened, see the discussion in Sec.~\ref{sssec: CD}.} The following results are summarized \fref{fig:results}. 

\begin{figure}[tb]
\centering
	\begin{tikzpicture}[line width=0.20mm, scale=0.8, transform shape]
		\node[draw,text width=1.5cm,minimum height=2cm,minimum width=2cm,text centered] (BC) at (1,1) {Bit commitment};
		\draw [arrows={-latex'}] (2,1)--node[midway,above]{\cite{Unruh2010}}(3,1); 
		\draw (3,0) rectangle node[rectangle split,rectangle split parts=2]{Oblivious \nodepart{second} transfer} (5,2);
		\draw [arrows={-latex'}] (5,1)--node[midway,above]{\cite{Unruh2010}}(6,1); \draw (6,0) rectangle node[rectangle split,rectangle split parts=3]{Multi \nodepart{second} party \nodepart{third} computations} (8,2);
		\draw [arrows={-latex'}] (1,0)--node[midway,swap,auto]{\cite{Demay2013}}(1,-1);
		\node[draw,text width=1.5cm,minimum height=2cm,minimum width=2cm,text centered] (CF) at (1,-2) {Coin flipping};
        \draw [arrows={-latex'}] (0,1)--node[midway,above]{\cite{Canetti2001}}(-1,1); \draw (-3,0) rectangle node[rectangle split,rectangle split parts=3]{Zero \nodepart{second} knowledge \nodepart{third} proofs} (-1,2);
		\draw [arrows={-latex'}, blue] (1,-4)--node[auto] {\textbf{Theorem~\ref{claim: constructibility}}} (1,-3);
		\node[draw,blue,text width=1.5cm,minimum height=2cm,minimum width=2cm,text centered] (CD) at (1,-5) {Channel with delay};
	    \node[draw,text width=1.8cm,minimum height=2cm,minimum width=2cm,text centered] (channel) at (5.5,-2) {Direct communication};

		\draw[thick, red] (2.75,-1)--(3.75,0); \draw[thick, red] (3.75,-1)--(2.75,0);
		\draw[thick, red] (2.75,-2.5)--(3.75,-1.5); \draw[thick, red] (3.75,-2.5)--(2.75,-1.5); 
		\draw[thick, red] (2.75,-4)--(3.75,-3); \draw[thick, red] (3.75,-4)--(2.75,-3);
		\draw[thick, red] (-2.5,-5.5)--(-1.5,-4.5); \draw[thick, red] (-1.5,-5.5)--(-2.5,-4.5);
		\draw [arrows={-latex'}, blue] (-1.2,-5)--(-3,-5); 
		\draw[blue] (-5,-6) rectangle node[blue, rectangle split,rectangle split parts=4]{Channel \nodepart{second} with \nodepart{third} 
	    larger	\nodepart{fourth} delay} (-3,-4);
		\node[align=center, blue] at (-0.7,-5) {\huge{$\mathbf{\times n}$}}; 
		
		\node[align=center, red] at (-2,-4.2) {\bf Theorem~\ref{claim: impos2}};
        \draw[arrows={-latex'}, blue] (channel) to (BC); \node[align=center, red] at (4.75,-0.5) {\textbf{Corollary~\ref{claim: bcimpos}}};
        \draw[arrows={-latex'}, blue] (channel) to (CF); \node[align=center, red] at (3.25,-1.2) {\textbf{Theorem~\ref{claim: impossibility}}};
        \draw[arrows={-latex'}, blue] (channel) to (CD); \node[align=center, red] at (4.75,-3.5) {\textbf{Corollary~\ref{claim: cdimpos}}};
		
	\end{tikzpicture}
	\caption{{\bf Summary of our results.} We assume Minkowski space-time with limited speed of signalling (upper bounded by the speed of light in vacuum, $c$). Existing results are represented in black and new results (obtained in this paper), in blue and red.  
	An arrow $\mathcal R \to \mathcal S$ means that it is possible to construct resource $\mathcal S$ from resource $\mathcal R$. When the arrow is crossed, that means that no such construction exists.}
	\label{fig:results}
\end{figure}
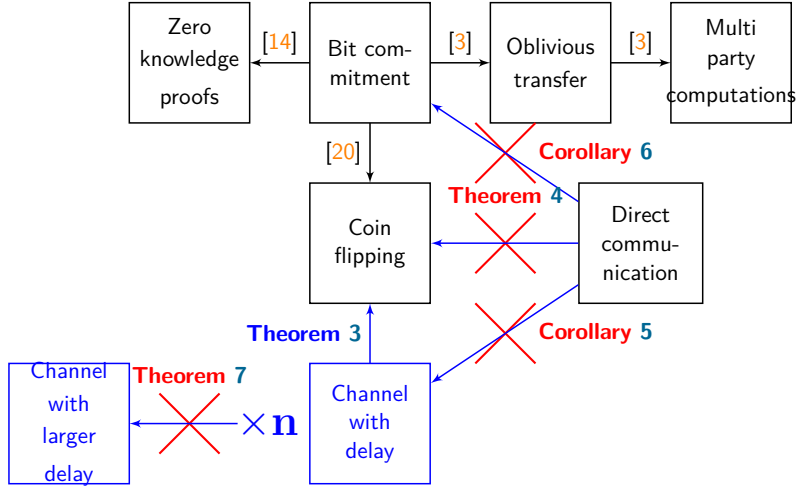

\paragraph{Constructibility results.}
We show that an  unbiased coin flipping resource \CF\ can be constructed from a channel with delay resource, \CD\  (Theorem~\ref{claim: constructibility}). For comparison, Blum's protocol \cite{Blum1983},  constructs  a weaker, biased\footnote{Originally, Blum's protocol constructs an \emph{unfair} coin flip, in which one party can abort after seeing the flip \cite{Blum1983}. This may be transformed into a \emph{biased} coin flip if the honest party flips a coin locally when the dishonest party aborts \cite{Demay2013}.} coin flipping resource from a bit commitment resource \cite{Demay2013}.  We provide an explicit protocol to construct \CF\ from \CD\ and prove its security. The proof holds even in the presence of adversaries that are not bounded by quantum physics, but only non-signalling constraints.

\paragraph{Impossibility results.}
In Theorem~\ref{claim: impossibility} we show that constructing a (biased) coin flipping resource is impossible in the relativistic setting without additional setup assumptions (e.g., the presence of a shared resource such as \CD). This result holds even if the players are only bounded by non-signalling constraints,\footnoteremember{fn:NS}{A non-signalling player can generate non-signalling correlations between their own trusted agents at different locations. Note however that that if we were to allow two distrusting players (Alice and Bob) to generate non-signalling correlations between them, this would have to be modeled as an extra setup assumption, namely a shared resource.} or if we restrict the adversary to being computationally bounded or having bounded storage. Impossibility of bit commitment follows from Blum's construction~\cite{Blum1983,Demay2013} of $\CF$ from $\BC$ (Corollary~\ref{claim: bcimpos}), and impossibility of constructing a channel with delay \CD\ follows from Theorem~\ref{claim: constructibility} (Corollary~\ref{claim: cdimpos}).

Since the literature on relativistic bit commitment also studies the case of extending the time during which such a commitment holds, we also look at the task of constructing a channel with a long delay $\CD_{\text{long}}$ from multiple channels (labelled by $i$) with shorter delays $\{\CD_{\text{short}}^i\}_i$. We show that this again is impossible without other setup assumptions than the assumed channels with delays $\{\CD_{\text{short}}^i\}_i$ (Theorem~\ref{claim: impos2}). This impossibility result holds irrespective of whether the protocol is classical, quantum or non-signalling,\footnoterecall{fn:NS} and also holds if the adversary is computationally bounded.

\paragraph{Consequences of these results.}
Many quantum protocols have been proposed in the relativistic setting to circumvent classical impossibility results for \BC. To the best of our knowledge, none of these protocols have been successfully used as subroutines in larger cryptosystems (which is the main motivation for constructing such primitives), and attempts to do so are insecure~\cite[Appendix A]{Kaniewski2015}. But due to the lack of composable framework that can model Minkowski space, it has been impossible to prove whether composable constructions of these resources do exist. Our results show that allowing quantum (and even non-signalling\footnoterecall{fn:NS}) protocols that respect relativistic constraints is not sufficient to construct \BC, \CF, or \CD\ without additional assumptions. This implies that none of the proposed relativistic bit commitment schemes are composable (e.g., \cite{Kent1999,Kent2012,Kaniewski2013,Lunghi2015}). This extends to the non-relativistic setting (e.g., \cite{Chailloux2013}), since a non-realtivistic protocol corresponds to the special case where all players are in the same position in space (and thus do not have any constraints on the speed of communication). Our proof also holds against computationally bounded adversaries, and adversaries with bounded storage, which implies that results in the bounded storage model are not composable either (e.g., \cite{Damgard}).

The other problem considered in the literature on relativisitic bit commitment is extending the time of a commitment. Our results show that this cannot be done with a composable definition of timed commitment (see the definition of \CD\ in Sec.~\ref{ssec: resources} and following discussion). Hence the techniques used in \cite{Lunghi2015,Chakraborty2015} to extend the time of a relativistic bit commitment cannot be used in a composable way. As for previous results, this holds as well if the adversary is computationally limited or has bounded quantum memory.

The framework naturally allows positive results to be proven as well---by making extra setup assumptions. This approach was used by Unruh~\cite{Unr13}, who showed (everlasting) quantum bit commitment is achievable if we assume signature cards as trusted setup. In this work we construct a \CF\ resource from a \CD, and leave open the problem of finding weaker assumptions that still allow $\CF$ or $\BC$ to be constructed.

\subsection{Structure of this paper.}

In Sec.~\ref{sec:framework} we introduce the model that we use to prove our results. We explain the Abstract Cryptography framework in Sec.~\ref{ssec: AC}. We give an overview of Causal Boxes instantiated with Minkowski space in Sec.~\ref{ssec: CB}---a formal presentation of Causal Boxes is given in Appendix~\ref{appendix:causalboxes}. And in Sec.~\ref{ssec: resources} we define the two party resources $\CF$, $\CD$, and $\BC$. Our results are then presented in Sec.~\ref{sec:results} and the proofs are given in Appendix~\ref{appendix:proofs}. Finally, we conclude in Sec.~\ref{sec:discussion} with a discussion of these results.

\section{Framework}
\label{sec:framework}
\subsection{Composable security: the abstract cryptography framework \texorpdfstring{\cite{Maurer2011}}{ACcite}}
\label{ssec: AC}

\subsubsection{Resources, converters and distinguishers.}

Let us review the basics of the abstract cryptography framework. \cite{Maurer2011} 
The following is adapted from  \cite{Portmann2014} for the case of protocols between two mutually distrusting parties (e.g., bit commitment, coin flipping) and has been simplified for our purposes. We refer the reader to \cite{Maurer2011} and \cite{Portmann2014} for more general definitions and further examples. 

\paragraph{Abstract systems.} Abstract cryptography views cryptography as a resource theory: a protocol constructs a resource from some other resource, e.g., Blum's protocol \cite{Blum1983} constructs a coin flipping resource from a bit commitment resource. In this section we introduce the building blocks of the framework---resources, converters (e.g., protocols) and a notion of distance (distinguishability) between resources---and in Section~\ref{ssec: CB} we explain how these objects are instantiated with Causal Boxes~\cite{Portmann2017}.

A \emph{resource} $\mathcal R$ in a two party setting is an (abstract) system with interfaces $i \in \{A,B\}$, each accessible to a user $i$ (and their trusted agents) providing them with certain controls. An operation that is performed by a party at their interface is modeled as a \emph{converter}: a system $\alpha$ with an outside and an inside interface, the inner interface connects to an interface $i$ of the resource, and the outer interface becomes the new interface of the resulting resource. We write $\alpha_i \mathcal R$ to denote the resource resulting from connecting $\alpha$ to the $i$ interface of $\mathcal R$. This is illustrated in \fref{fig: converter}.

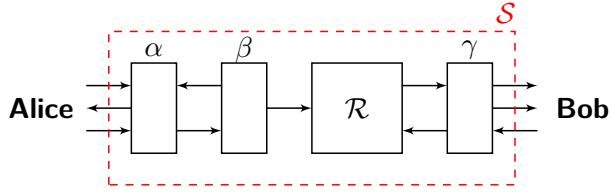
\begin{figure}[tb]
\centering
    \begin{tikzpicture}[line width=0.20mm, scale=0.6]
        \draw (0,0) rectangle (1,2); \draw (2,0) rectangle (3,2); \draw (4,0) rectangle node{$\mathcal{R}$}(6,2); \draw (7,0) rectangle (8,2); 
        \node[align=center] at (0.5,2.3) {$\alpha$}; \node[align=center] at (2.5,2.3) {$\beta$};
        \node[align=center] at (7.5,2.3) {$\gamma$};
        \draw[arrows={-latex'}] (-1,1.5)--(0,1.5); \draw[arrows={-latex'}] (0,1)--(-1,1); \draw[arrows={-latex'}] (-1,0.5)--(0,0.5);
        \draw[arrows={-latex'}] (2,1.5)--(1,1.5);
        \draw[arrows={-latex'}] (1,0.5)--(2,0.5);
        \draw[arrows={-latex'}] (3,1)--(4,1);
        \draw[arrows={-latex'}] (6,1.5)--(7,1.5);
        \draw[arrows={-latex'}] (7,0.5)--(6,0.5); \draw[arrows={-latex'}] (8,1.5)--(9,1.5);
        \draw[arrows={-latex'}] (8,1)--(9,1);
        \draw[arrows={-latex'}] (9,0.5)--(8,0.5); \node[align=center] at (-2,1) {\textbf{Alice}}; \node[align=center] at (10,1) {\textbf{Bob}}; \draw [dashed, red] (-0.5,-0.7) rectangle (8.5,2.7); \node[align=center, red] at (8.3,3.1) {\textbf{$\mathcal{S}$}};
    \end{tikzpicture}
    \caption{Starting from a resource $\mathcal R$, converters $\alpha, \beta$ and $\gamma$ construct a new resource $\mathcal S = \alpha_A\beta_A\gamma_B \mathcal R$. The sequences of arrows at the interfaces between objects represent (arbitrary) rounds of communication. For simplicity, we may omit the indices, $\mathcal S = \alpha \beta \mathcal R \gamma$, so that converters to the left of the resource ($\alpha, \beta$) are implicitly connected to Alice's interface, and converters on the right ($\gamma$) are connected to Bob's.}
    \label{fig: converter}
\end{figure}

 



\paragraph{Distinguishing resources.}
The security of a cryptographic system is quantified in terms of \emph{distinguishability} from a corresponding ideal system (Fig.~\ref{fig: dist}).  
For example, the ideal resource ``random bit generator'', $\mathcal{S}$, would be a black box that generates and outputs a uniformly random bit at a time $t$ which is independent of everything outside the box. A specific practical implementation $\mathcal R$ of this functionality could be a quantum protocol: prepare a qubit in a state $\frac1{\sqrt2}(\ket{0}+\ket1)$, measure it in the $Z$-basis and output the measurement result at time $t$. Treated as black boxes, both resources $\mathcal{R}$ and $\mathcal{S}$  output a uniformly random classical bit and cannot be distinguished by an outsider. 
For more complex resources, we may ask: distinguishability from whose perspective? 
Here, the traditional notion of an adversary is generalized to an arbitrary \emph{distinguisher} which models not only possible adversarial behaviour but also the whole environment of a cryptographic protocol. In other words, a distinguisher models information-processing steps that could take place before, after or during the protocol under consideration.

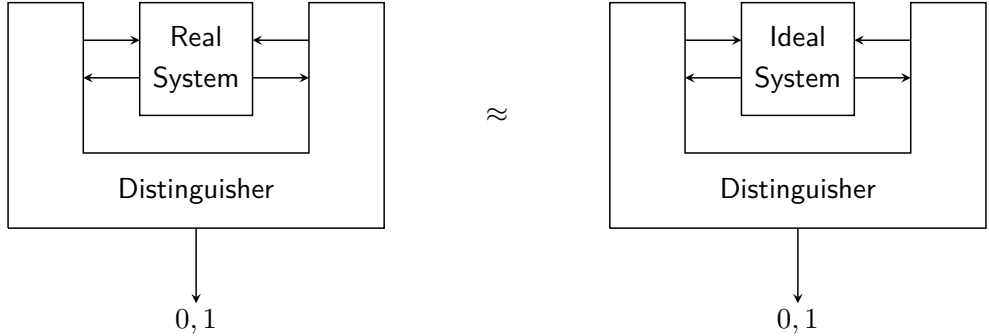
\begin{figure}[tb]
	\centering
	\begin{tikzpicture}[line width=0.20mm, scale=1.0, transform shape]
		\draw (0,0)--(0,3)--(1,3)--(1,1)--(4,1)--(4,3)--(5,3)--(5,0)--(0,0); \draw (8,0)--(8,3)--(9,3)--(9,1)--(12,1)--(12,3)--(13,3)--(13,0)--(8,0);
		\draw (1.75,1.5) rectangle node [rectangle split,rectangle split parts=2] {Real \nodepart{second} System} (3.25,3); 
		\draw (9.75,1.5) rectangle node [rectangle split,rectangle split parts=2] {Ideal \nodepart{second} System} (11.25,3);
		\node [align=center] at (2.5,0.5) {Distinguisher}; \node [align=center] at (10.5,0.5) {Distinguisher}; \node [align=center] at (6.5,1.5) {$\approx$};
		\draw [arrows={-stealth}] (1.75,2)--(1,2); \draw [arrows={-stealth}] (1,2.5)--(1.75,2.5); \draw [arrows={-stealth}] (3.25,2)--(4,2); \draw [arrows={-stealth}] (4,2.5)--(3.25,2.5);
		\draw [arrows={-stealth}] (9.75,2)--(9,2); \draw [arrows={-stealth}] (9,2.5)--(9.75,2.5); \draw [arrows={-stealth}] (11.25,2)--(12,2); \draw [arrows={-stealth}] (12,2.5)--(11.25,2.5);
		\draw [arrows={-stealth}] (2.5,0)--(2.5,-1); \draw [arrows={-stealth}] (10.5,0)--(10.5,-1);
		\node [align=center] at (2.5,-1.25) {$0,1$};
		\node [align=center] at (10.5,-1.25) {$0,1$};
	\end{tikzpicture}
	\caption{{\bf Security in terms of distinguishers.} Composable security of a real resource is defined in terms of the success probability of a class of distinguishers (for example computationally bounded or unbounded, classical, quantum or non-signalling) in distinguishing the real system from the ideal one.
	A distinguisher, modelling all the environment of a resource, is given black-box access to either the real or the ideal systems and a complete description of the input-output behaviour of both systems and must guess which one it was interacting with by outputting either a $0$ or a $1$. The distinguishing advantage is then given by the statistical distance between the two random bits output when interacting with the real and ideal systems, respectively.}
	\label{fig: dist}
\end{figure}

\begin{definition}[Distinguishing advantage \cite{Portmann2014}]
  \label{definition: dist adv} 
  A \emph{distinguisher} (Figure~\ref{fig: dist}) for two resources $\mathcal R, \mathcal S$ is a system $\mathcal{D}$ with two interfaces: an inside interface that connects to all the interfaces of a resource, $\mathcal R$ or $\mathcal S$, and an outside interface that outputs a single bit: a guess whether it is interacting with $\mathcal R$ or $\mathcal S$. The advantage of a specific distinguisher $\mathcal D$ is then given by \[d^{\mathcal{D}}(\mathcal{R}, \mathcal{S}) = \left| \Pr \left[ \mathcal D ( \mathcal R ) = 0 \right] - \Pr \left[ \mathcal D ( \mathcal S ) = 0 \right] \right|,\] where $\mathcal D(\mathcal R)$ is the output of $\mathcal D$ when interacting with $\mathcal R$.
  
  The distinguishing advantage for a class of distinguishers $\mathbb D$ is defined as
  \[ d^{\mathbbm D}(\mathcal R,\mathcal S) = \sup_{\mathcal D \in \mathbbm D} d^{\mathcal D}(\mathcal R,\mathcal S).\]
\end{definition}

The distinguishing advantage is a pseudo-metric on the space of resources satisfying the identity, symmetry and triangle inequality properties \cite{Portmann2014}. If a class of distinguishers $\mathbb D$ is such that for every $\mathcal D \in \mathbbm D$, $\mathcal D \alpha \in \mathbbm D$, then the pseudo-metric is non-increasing under application of the converter $\alpha$, i.e.\ $d^{\mathbbm{D}}(\alpha R, \alpha S) \leq d^{\mathbb{D}}(R, S)$.

\paragraph{Classes of distinguishers.}
Changing the power of the distinguisher (e.g., with some computational or memory bound, or performing only classical, quantum or non-signalling operations) results in different metrics and different levels of security. For example, if a protocol provides classical computational security, this means that the resource constructed may be perfectly indistinguishable from an ideal resource when considering only computationally bounded distinguishers, but they could be easily distinguished using computationally unbounded (or quantum) distinguishers. This is addressed in more detail in the following.

\subsubsection{Cryptographic security.}

We want to address questions such as ``does a protocol $\Pi$ construct the ideal resource $\mathcal S$ from an initial resource $\mathcal R$?'' The resource constructed will essentially depend on which players may be honest. For example, in the case of coin flipping, if both parties are honest we expect the protocol to construct a resource that provides each party with a copy of the same uniformly random bit. But if a party is dishonest, this might be a too strong requirement. Instead, we ``only'' construct a resource that allows the dishonest party to either abort if she does not like the value of the generated bit, or to bias the bit towards either $0$ or $1$. \cite{Demay2013}

In the case of two party protocols, we want to make a statement about three cases: where both parties are honest, Alice is dishonest, and Bob is dishonest. The resources available to the players are given by a tuple $(R, R_A, R_B)$, where $R$ denotes the shared resource when both a honest, $R_A$ is available to an honest Bob and dishonest Alice (presumably, providing more functionalities to Alice than $R$), and $R_B$ is shared between an honest Alice and dishonest Bob. Likewise, the constructed resources are also given by such a tuple $(S,S_A,S_B)$.

A two-player protocol $\Pi= (\Pi_A, \Pi_B)$ is essentially a pair of converters that can be connected to the interfaces of the shared resources $(R, R_A, R_B)$. When both are honest, the resulting system is given by $\Pi_A R\, \Pi_B$ (the ``real system''), which should be close to indistinguishable from the ideal resource $S$.

When Alice is dishonest, the protocol $\Pi_A$ is removed in the corresponding real system, because we do not know what protocol a dishonest player would follow. On the ``real'' side we now have $R_A \Pi_B$. On the ideal side, we have $S_A$, but in most cases $R_A \Pi_B$ and $S_A$ are trivially distinguishable since Alice's interface of $R_A \Pi_B$ is generally very different from her interface of $S_A$: $S_A$ provides an idealized interface, which, in the case of coin flipping, might allow Alice to abort. In the real system, $R_A \Pi_B$ Alice receives messages from Bob, and could provoke an abort by sending invalid messages or not responding.

To allow for the comparison and define security against dishonest Alice, we require the existence of a converter (or \emph{simulator}) $\sigma_A$ which when connected to Alice's interface of $S_A$ makes these two systems close to indistinguishable. Note that connecting this simulator $\sigma_A$  only makes Alice weaker, since any operation performed by the simulator could equivalently be performed by an adversary connected directly to the interface of the ideal resource. Further, the simulator's behaviour is independent of the internal workings of the ideal functionality $S_A$. Security in the case of a dishonest Bob is defined similarly.

\begin{definition}[Cryptographic security \cite{Portmann2014}] 
\label{definition: security}
A protocol $\Pi=(\Pi_A,\Pi_B)$ constructs $\mathcal{S} = (S, S_A, S_B)$ from $\mathcal{R} = (R, R_A, R_B)$ within a distance $\epsilon$, 
with respect to a set $\mathbbm D$ of distinguishers and a set $\mathbbm S \ni \Pi_A, \Pi_B$ of converters, 
if the following conditions hold:
\begin{align*}
    & d^{\mathbbm D}(\Pi_A R \Pi_B, S)\leq \eps, \\
    \exists\, \sigma_A \in \mathbbm S, \quad  & d^{\mathbbm D}( R_A \Pi_B, \sigma_A S_A)\leq \eps, \\
    \exists\, \sigma_B \in \mathbbm S,  \quad & d^{\mathbbm D}( \Pi_A R_A, S_B \sigma_B)\leq\eps.
\end{align*}
We sometimes write $\mathcal R \xrightarrow{\Pi} \mathcal S$ to denote such a constructions. These conditions are illustrated in Fig.~\ref{fig: security}.
\end{definition}

\begin{figure}[tbp]
	\begin{subfigure}{1.0\textwidth}
		\centering
		\begin{tikzpicture}[line width=0.20mm, scale=0.55]
			\draw [white] (-1,0) rectangle (18, 5); \draw (0,0) rectangle (1,4); \draw (4,0) rectangle (5,4); \draw (2,2.5) rectangle node{$R$} (3,3.5); \draw (11,1) rectangle node{$S$} (13, 3);
			\draw [arrows={-stealth}] (1,2.83)--(2,2.83); \draw [arrows={-stealth}] (2, 3.166)--(1,3.166); \draw [arrows={-stealth}] (4,2.83)--(3,2.83); \draw [arrows={-stealth}] (3, 3.166)--(4,3.166);
			\draw [arrows={-stealth}] (1,0.5)--(4,0.5); \draw [arrows={-stealth}] (4,0.83)--(1,0.83); \draw [arrows={-stealth}] (1,1.16)--(4,1.16);
			\draw [arrows={-stealth}] (-1,1.33)--(0,1.33); \draw [arrows={-stealth}] (0,2.66)--(-1,2.66); \draw [arrows={-stealth}] (6,1.33)--(5,1.33); \draw [arrows={-stealth}] (5,2.66)--(6,2.66);
			\draw [arrows={-stealth}] (10,1.66)--(11,1.66); \draw [arrows={-stealth}] (11,2.32)--(10,2.32); \draw [arrows={-stealth}] (14,1.66)--(13,1.66); \draw [arrows={-stealth}] (13,2.32)--(14,2.32);
			\node [align=center] at (0.5,4.5) {$\Pi_A$};
			\node [align=center] at (4.5,4.5) {$\Pi_B$};
			\node [align=center] at (8,2) {\large{$\approx_{\epsilon}$}};
		\end{tikzpicture}
		
        $$d^{\mathbbm D}(\Pi_A \, R\, \Pi_B, S)\leq \eps$$		

		\caption{When both parties are honest, the composition of Alice's and Bob's protocol with their shared resource must be $\eps$-indistinguishable from the constructed resource $S$.}
	\end{subfigure}
	
	\begin{subfigure}{1.0\textwidth}
		\centering
		\begin{tikzpicture}[line width=0.20mm, scale=0.55]
			\draw [white] (-1,0) rectangle (18, 5); \draw (4,0) rectangle (5,4); \draw (2,2.5) rectangle node{$R_A$} (3,3.5); \draw (13,1) rectangle node{$S_A$} (15, 3); \draw [red] (11,1) rectangle (12,3);
			\draw [arrows={-stealth}] (1,2.83)--(2,2.83); \draw [arrows={-stealth}] (2, 3.166)--(1,3.166); \draw [arrows={-stealth}] (4,2.83)--(3,2.83); \draw [arrows={-stealth}] (3, 3.166)--(4,3.166);
			\draw [arrows={-stealth}] (1,0.5)--(4,0.5); \draw [arrows={-stealth}] (4,0.83)--(1,0.83); \draw [arrows={-stealth}] (1,1.16)--(4,1.16);
			\draw [arrows={-stealth}] (6,1.33)--(5,1.33); \draw [arrows={-stealth}] (5,2.66)--(6,2.66);
			\draw [arrows={-stealth}] (12,1.66)--(13,1.66); \draw [arrows={-stealth}] (13,2.32)--(12,2.32); \draw [arrows={-stealth}] (16,1.66)--(15,1.66); \draw [arrows={-stealth}] (15,2.32)--(16,2.32);
			\draw [arrows={-stealth}, red] (11,2.665)--(10,2.665); \draw [arrows={-stealth}, red] (10,2.332)--(11,2.332); \draw [arrows={-stealth}, red] (10,1.999)--(11,1.999);
			\draw [arrows={-stealth}, red] (11,1.666)--(10,1.666); \draw [arrows={-stealth}, red] (10,1.333)--(11,1.333);
			\node [align=center] at (4.5,4.5) {$\Pi_B$};
			\node [align=center, red] at (11.5,3.5) {$\sigma_A$};
			\node [align=center] at (8,2) {\large{$\approx_{\epsilon}$}};
		\end{tikzpicture}
		
		$$ \red{\exists\, \sigma_A \in \mathbbm S}, \quad  d^{\mathbbm D}( R_A\, \Pi_B, \red{\sigma_A} \, S_A)\leq \eps$$
		
		\caption{When Alice is dishonest and Bob is honest, the resulting real system obtained by removing Alice's honest protocol must be $\eps$-simulatable  by connecting a converter   $\sigma_A$ (called a  \emph{simulator}) to Alice's interface of  corresponding ideal system, $S_A$.}
	\end{subfigure}
	
	\begin{subfigure}{1.0\textwidth}
		\centering
		\begin{tikzpicture}[line width=0.20mm, scale=0.55]
			\draw [white] (-1,0) rectangle (18, 5); \draw (0,0) rectangle (1,4); \draw (2,2.5) rectangle node{$R_B$} (3,3.5); \draw (11,1) rectangle node{$S_B$} (13, 3); \draw [red] (14,1) rectangle (15,3);
			\draw [arrows={-stealth}] (1,2.83)--(2,2.83); \draw [arrows={-stealth}] (2, 3.166)--(1,3.166); \draw [arrows={-stealth}] (4,2.83)--(3,2.83); \draw [arrows={-stealth}] (3, 3.166)--(4,3.166);
			\draw [arrows={-stealth}] (1,0.5)--(4,0.5); \draw [arrows={-stealth}] (4,0.83)--(1,0.83); \draw [arrows={-stealth}] (1,1.16)--(4,1.16);
			\draw [arrows={-stealth}] (-1,1.33)--(0,1.33); \draw [arrows={-stealth}] (0,2.66)--(-1,2.66); 
			\draw [arrows={-stealth}] (10,1.66)--(11,1.66); \draw [arrows={-stealth}] (11,2.32)--(10,2.32); \draw [arrows={-stealth}] (14,1.66)--(13,1.66); \draw [arrows={-stealth}] (13,2.32)--(14,2.32);
			\draw [arrows={-stealth}, red] (15,2.665)--(16,2.665); \draw [arrows={-stealth}, red]  (16,2.332)--(15,2.332); \draw [arrows={-stealth}, red] (15,1.999)--(16,1.999); 
			\draw [arrows={-stealth}, red] (16,1.666)--(15,1.666); \draw [arrows={-stealth}, red]  (15,1.333)--(16,1.333);
			\node [align=center] at (0.5,4.5) {$\Pi_A$};
			\node [align=center, red] at (14.5,3.5) {$\sigma_B$};
			\node [align=center] at (8,2) {\large{$\approx_{\epsilon}$}};
		\end{tikzpicture}
		
		$$ \red{\exists\, \sigma_B \in \mathbbm S},  \quad d^{\mathbbm D}( \Pi_A \, R_A,   S_B\, \red{\sigma_B})\leq\eps$$
		
		\caption{When Bob is dishonest and Alice is honest, the resulting real system obtained by removing Bob's honest protocol must be $\eps$-simulatable by connecting a converter $\sigma_B$ to Bob's interface of the corresponding ideal system, $S_B$.}
	\end{subfigure}
	\caption{\label{fig: security}The three conditions from Definition~\ref{definition: security}.}
\end{figure}


A \emph{possibility result} for a construction $\mathcal R \xrightarrow{\Pi} \mathcal S$ with parameters $(\eps, \mathbbm S, \mathbbm D)$ is a statement of the form:  there exists a protocol $\Pi=(\Pi_A,\Pi_B)$ that $\eps$-constructs $\mathcal S$ from $\mathcal R$, i.e.
\begin{align} \exists \, \Pi_A, \Pi_B, \sigma_A, \sigma_B \in \mathbbm S, 
\quad \forall \, \mathcal D \in \mathbbm D, \quad & d^{\mathcal D}(\Pi_A R \Pi_B, S)\leq \eps, \label{eq:security.0} \\
    & d^{\mathcal D}( R_A \Pi_B, \sigma_A S_A)\leq \eps, \label{eq:security.A} \\
    & d^{\mathcal D}( \Pi_A R_A, S_B \sigma_B)\leq\eps. \label{eq:security.B}
\end{align}
We then say that $\mathcal R$ is \emph{stronger} than $\mathcal S$.
An \emph{impossibility result} with the same parameters has the form:  there exists no  protocol $\Pi=(\Pi_A,\Pi_B)$ that $\eps$-constructs $\mathcal S$ from $\mathcal R$,
$$\forall \, \Pi_A, \Pi_B, \sigma_A, \sigma_B \in \mathbbm S, 
\quad \exists \, \mathcal D \in \mathbbm D, \quad \text{either condition \eqref{eq:security.0}, \eqref{eq:security.A}, or  \eqref{eq:security.B} does not hold}. $$

The strength of a security proof depends on the range of the class $\mathbbm S$ of simulators and protocols, the class  $\mathbbm D$ of distinguishers used in the security definition, as well as the assumed and constructed resources $\mathcal R$ and $\mathcal S$.
For construction results, a strong statement has the form  ``we can easily construct $\mathcal S$ from $\mathcal R$, and we can easily simulate any cheating behaviour, such that even a very powerful distinguisher could not tell apart our construction from the ideal system.'' Therefore, ideally we would want $\mathbbm S$ to be restricted to converters that are easy to implement physically, and we want the set of distinguishers $\mathbbm D$ to be as general as possible.
For impossibility results, a strong statement has the form ``we can always easily distinguish any system constructed from $\mathcal R$ from the resource $\mathcal S$, even if we allow for very powerful protocols and simulators.'' Therefore, we try to make $\mathbbm S$ to be as general as possible, and we restrict $\mathbbm D$ to correspond to efficient or otherwise easy to implement distinguishers.\footnote{In some settings, we may want to give more power to one of the players. This is the case for blind computation results \cite{Broadbent2009,Dunjko2014,Dunjko2016}, where for example Bob represents a client with limited computational power and Alice a powerful server (which may for example perform arbitrary quantum operations). In other examples, we may want to restrict honest players to use efficient protocols, while allowing the simulators of dishonest behaviour to be arbitrary. 
    In these and other cases, we can adjust the sets for $\Pi_A, \Pi_B, \sigma_A, \sigma_B $ and $\mathcal D$ to suit the scenario. For the results in this paper, this will not be necessary.}
    
We do not specify what $\mathbbm S$ and $\mathbbm D$ should be used in Definition~\ref{definition: security}, since this will be different for different theorems. For example, when we prove that no protocol can construct a biased coin flipping resource in Theorem~\ref{claim: impossibility}, the proof holds for converters $\Pi_A, \Pi_B, \sigma_A, \sigma_B \in \mathbbm S$ that have unbounded memory, unbounded computational power, and are post-quantum---they are only restricted to be non-signalling. The distinguisher $\mathcal D$ that is used to distinguish the real from ideal system runs these converters internally, and thus has the same computational and memory requirements as these converters.
    
\begin{remark}[Capturing bounded systems]
\label{remark: bounded systems}
Note that when a statement we want to prove involves an existence quantifier (over the set of converters $\mathbbm S$ for a possibility result, and over the set of distinguishers $\mathbbm D$ for an impossibility proof), it is not necessary to define the entire set ($\mathbbm S$, $\mathbbm D$), it is sufficient to convince oneself that the corresponding system does belong in this set. We use this to prove impossibility results for computationally bounded adversaries as well as in the bounded and noisy storage models in Sec.~\ref{sec:results} without defining either the complexity of the systems or the bound on the storage. We achieve this by finding a distinguisher that can distinguish real from ideal systems, and does so by internally running instances of these systems. This means that security already breaks down when the rest of the world (captured by the distinguisher $\mathcal D$) has the same memory bounds as the honest players and simulator in the protocol. Since a model needs the distinguisher to have at least the same power as the players and simulator for a protocol to be composable with itself, our impossibility results holds for any such model, regardless of the exact bounds on the computational power or storage, and irrespective of how this is defined.
\end{remark}

\subsection{Cryptography in relativistic settings:  the Causal Boxes framework \texorpdfstring{\cite{Portmann2017}}{CBcite}}
\label{ssec: CB}

The abstract cryptography framework \cite{Maurer2011} follows a top-down approach to modelling cryptographic security starting from the highest level of abstraction and proceeding downwards, introducing at each level only the minimum necessary specifications. The composability of abstract systems in the abstract cryptography framework makes it possible to provide a general, composable security definition, which is independent from the models of communication or computation. It can then be instantiated with whatever model is needed---here, Causal Boxes to model relativistic cryptography. In this section we give a brief, informal overview of the Causal Boxes framework. A formal introduction may be found in Appendix~\ref{appendix:causalboxes}.

Causal boxes \cite{Portmann2017} are a model of information-processing systems which may interact with each other in arbitrary ways, so long as they respect causality (Fig.~\ref{fig: CB}). 
In broad lines, a causal box $\hat{\Phi}$ is a system  with  input and output wires which may carry quantum or classical information. A concrete example is a physical box containing some optical elements (like beam-splitters) and  connected to optical fiber cables: each wire may carry several messages at different times (or even in a superposition of different times). 
A single instance of a message is modelled as a quantum state in the joint Hilbert space $\mathcal H \otimes l^2(\mathcal{T})$, where $\mathcal H$ is the Hilbert space of the actual classical/quantum message, $\mathcal T$ is a partially ordered set that defines an ordering on the space of messages and $l^2(\mathcal{T})$ is the sequence space with bounded 2-norm.\footnote{This is the state space of a single input/output message. More generally, wires which can carry messages in a superposition of different numbers and time orderings can be represented by the symmetric Fock space of this message space \cite{Portmann2017}. The symmetry comes from the fact that there is no special ordering of the messages other than the space-time ordering, which is already given in the state description itself. See Appendix~\ref{appendix:causalboxes} for further details.}  
In the simple cases where a quantum state $\rho \in \mathcal{H}$ is sent at a well-defined space-time coordinate $P \in \mathcal{T}$, $\mathcal{T}$ can be taken to be Minkowski space-time and we can simply represent the total state as a pair $(\rho, P)$. In this paper we only need to consider such cases.

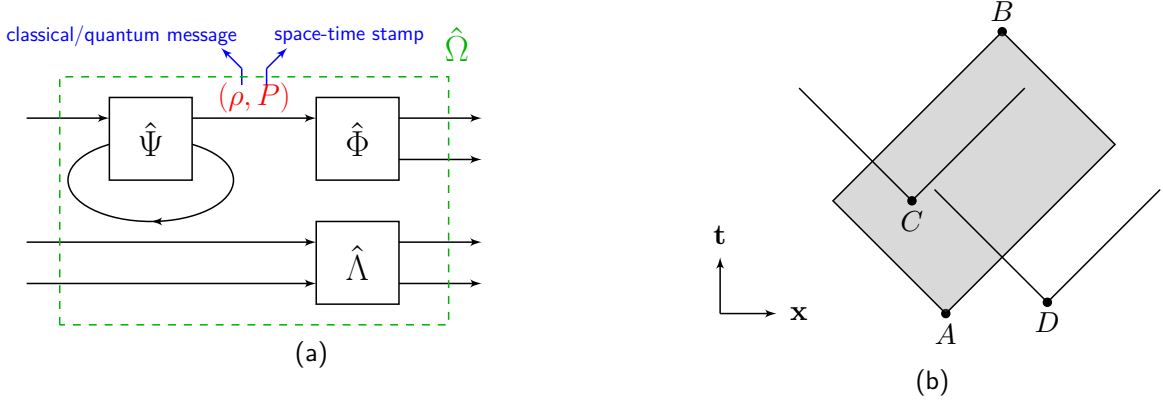
\begin{figure}[t]
\centering
\begin{subfigure}{0.5\textwidth}
	\begin{tikzpicture}[line width=0.20mm, scale=0.55, transform shape]
		\draw[decoration={markings, mark=at position 0.75 with {\arrowreversed{latex'}}}, postaction={decorate}](1,0) ellipse (2 and 1);
    	\draw[fill=white] (0,0) rectangle node{\Huge{$\hat{\Psi}$}}(2,2); \draw (5,0) rectangle node{\Huge{$\hat{\Phi}$}}(7,2);
		\draw [arrows={-latex'}] (-2,1.5)--(0,1.5); \draw [arrows={-latex'}] (2,1.5)--(5,1.5); \node[align=center, red] at (3.5,2) {\huge{\textbf{$(\rho, P)$}}};
		\draw [blue] (3.2,2.3)--(3.2,2.7); \draw [arrows={-latex'}, blue] (3.2,2.7)--(2.7,3.2); \node [align=center, blue] at (0.3,3.5) {\Large{classical/quantum message}};
		\draw [blue] (3.8,2.3)--(3.8,2.7); \draw [arrows={-latex'}, blue] (3.8,2.7)--(4.3,3.2); \node [align=center, blue] at (5.8,3.5) {\Large{space-time stamp}};
		\draw [arrows={-latex'}] (7,0.5)--(9,0.5);  \draw [arrows={-latex'}] (7,1.5)--(9,1.5);
		\draw (5,-3) rectangle node{\Huge{$\hat{\Lambda}$}}(7,-1); \draw [arrows={-latex'}] (-2,-1.5)--(5,-1.5); \draw [arrows={-latex'}] (-2,-2.5)--(5,-2.5);
		\draw [arrows={-latex'}] (7,-1.5)--(9,-1.5); \draw [arrows={-latex'}] (7,-2.5)--(9,-2.5);
		\draw[dashed, black!30!green] (-1.2,-3.5) rectangle (8.2,2.5); \node[align=center, black!30!green] at (8.4,3.3) {\Huge{$\hat{\Omega}$}};
	\end{tikzpicture}
	\caption{ }
	\label{fig: CB}
\end{subfigure}%
\begin{subfigure}{0.5\textwidth}
\centering
\begin{tikzpicture}[line width=0.20mm, scale=1.5]
    \filldraw[black] (0,0) circle (1pt) node[anchor=north] {$A$}; \filldraw[black] (0.5,2.5) circle (1pt) node[anchor=south] {$B$};
    \draw[fill=gray, fill opacity=0.3] (0.5,2.5)--(-1,1)--(0,0)--(1.5,1.5)--cycle; 
    \filldraw[black] (-0.3,1) circle (1pt) node[anchor=north] {$C$}; \filldraw[black] (0.9,0.1) circle (1pt) node[anchor=north] {$D$};
    \draw (-0.3,1)--(0.7,2); \draw (-0.3,1)--(-1.3,2); \draw (0.9,0.1)--(1.9,1.1); \draw (0.9,0.1)--(-0.1,1.1);
    \draw[arrows={-latex'}] (-2,0)--(-2,0.5); \draw[arrows={-latex'}] (-2,0)--(-1.5,0); \node[align=center] at (-2,0.7) {$\vec{t}$}; \node[align=center] at (-1.3,0) {$\vec{x}$};
\end{tikzpicture}
\caption{}
\label{fig: causaldiamond}
\end{subfigure}
\caption{{ \bf a. Causal boxes} are information-processing systems that respect causality and are closed under composition (serial, parallel or loops). Arbitrary composition of the causal boxes $\hat{\Phi}$, $\hat{\Psi}$ and $\hat{\Lambda}$ is a causal box $\hat{\Omega}$.  
{\bf b. Minkowski space-time.} The causal diamond of the space-time points $A$ and $B$ (shaded in gray) with $A\prec B$ is denoted by $D(A,B)$. In this figure, point $C\in D(A,B)$, and point $D$ is space-like separated from $A$ since the future light cone of neither of the points completely contains the future light cone of the other.}
\label{fig: causality}
\end{figure}

\paragraph{Causality condition.}
Causality requires that outputs produced at space-time point $P\in \mathcal{T}$ can depend only on inputs produced in its causal past, $P' \prec P$ (at this stage, $\cT$ could be any set of points equipped with any partial order to represent causality). In general, a causal box is a map from the space of the inputs to the space of the outputs that respects this notion of causality.\footnote{Technically, this implies that there must necessarily be a finite time gap between an input to a causal box and an output that depends on this input modelling the fact that any causal information processing task takes a strictly non-zero amount of time.}
Composition of causal boxes may be done in series,  in parallel or through (feedback) loops (Fig.~\ref{fig: CB}), and arbitrary composition of causal boxes results in a causal box. 
A more technical and detailed description of the framework can be found in Appendix~\ref{appendix:causalboxes}.

\paragraph{Minkowski space-time.} 
In this paper we apply the formalism of Causal Boxes to Minkowski space-time $\mathcal T$, where each coordinate corresponds to a vector $P= (\vec x,t)$ with three dimensions of space and one of time. 
In special relativity, $\mathcal T$ has a natural partial order, ``$P_1=(\vec x_1,t_1) \prec P_2=(\vec{x}_2, t_2)$ if light can reach $\vec {x}_2$  from $\vec x_1$ in time $t_2-t_1$, that is 
if $\| \vec x_2 - \vec x_1 \| \leq c(t_2 - t_1)$, where $c$ is the speed of light.'' In this case we say that space-time point  $P_1$ is in the causal past of $P_2$. If two points are not ordered, we say that they are space-like separated.
The causal diamond of a pair of space-time points, $P_1\prec P_2$, denoted by $D(P_1,P_2)$ is the intersection of the future light cone of $P_1$ with the past light cone of $P_2$. This represents the maximal space-time region that can be affected by events at $P_1$ and also affect events at $P_2$ (Fig.~\ref{fig: causaldiamond}).
In the following, we assume that all players involved in a relativistic cryptographic protocol initially agree upon a coordinate system to represent all space-time points.

\begin{remark}[Range of causal boxes]
\label{remark: cb_range}
Causal Boxes can model not only quantum processes, but also non-signalling systems with quantum and classical inputs (for example, PR-boxes are causal boxes) \cite{Portmann2017}. 
This will be useful in security proofs, for example to cover very powerful adversaries, so let us denote by $\mathbbm C$ the set of all allowed causal boxes in $\cT$, and by $\mathbbm{D}_{\mathbbm C} \subset \mathbbm C$ the subset of systems that are valid distinguishers.

When proving the possibility result in Section~\ref{ssec: cons}  (Theorem~\ref{claim: constructibility}), we show that
$$ \exists\, \Pi,\sigma \in \mathbbm S,  \quad \forall \, \mathcal D \in \mathbbm{D}_{\mathbbm C},\quad  d^{\mathcal D}( \Pi \, R,   S\, \sigma)\leq\eps,$$ where $\mathbbm S$ are just efficient classical systems.
This means that even distinguishers bounded only by non-signalling constraints cannot distinguisher the real from ideal systems, and the construction still holds in the presence of such unrestricted adversaries.

When proving impossibility results in Sections~\ref{ssec: impos} and \ref{ssec: impos_delay}, we show that
$$ \forall\, \Pi,\sigma \in \mathbbm S,  \quad \exists \, \mathcal D \in \mathbbm{D}_{\mathbbm S},\quad  d^{\mathcal D}( \Pi \, R,   S\, \sigma)>\eps,$$ where $\mathbbm S \subset \mathbbm C$ is any set of systems (e.g., classical, computationally limited or with bounded memory) and $\mathbbm{D}_{\mathbbm S}$ is a set of distinguishers with similar requirements. This means firstly that our impossibility results hold even if we consider protocols that are bounded only by non-signalling constraints (the case were $\mathbbm S = \mathbbm C$). And secondly, if we consider a setting where adversaries are limited, then the results carry over to this setting. For example, our impossibility proofs also hold in the bounded storage model (where $\mathbbm S$ and $\mathbbm D_{\mathbbm S}$ have bounded memory) or a computational setting (where $\mathbbm S$ and $\mathbbm D_{\mathbbm S}$ are computationally limited). See also Remark~\ref{remark: bounded systems} in Sec.~\ref{ssec: AC}.
\end{remark}

\subsection{Two-party resources}
\label{ssec: resources}

We may now define the resources needed to model and prove our results. In this section, we model these resources by defining their output values and space-time positions given input values and space-time positions. As in illustration of how this is a special case of the more complete Causal Box model instantiated with Mikowski space, we provide in Appendix~\ref{ssec: appendix.CD} a formal definition of a \CD\ as a causal box.

\subsubsection{Coin flipping (\CF)}
\label{sssec: CF}

A coin flip resource provides two distrustful players with a random coin flip---if they both behave honestly. If one of them is dishonest, then the literature defines different resources that could be constructed. The most common, e.g., \cite{Blum1983}, is to allow the coin flip to be \emph{unfair}: a dishonest player who does not like the outcome can abort before the honest player gets to see this outcome. In \cite{Demay2013}, the authors define a \emph{biased} coin flip, where instead of aborting, a dishonest party can bias the outcome. In this section we follow \cite{Demay2013} and define a $p$-biased coin flip $\CF^p$. We define an unfair coin flip $\CF^{\text{uf}}$ in Appendix~\ref{appendix:unfairCF}, where we prove that $\CF^{1/2}$ can be constructed from $\CF^{\text{uf}}$.

\begin{definition}[Coin flipping, $\CF^p$]
\label{definition: CF}

A $p$-biased coin flip, $\CF^p = \{CF, CF^p_A, CF^p_B\}$, is defined as follows.
\begin{multicols}{2}
\begin{description}
    \item[ $CF$:] Alice receives a uniformly random bit $c$ at location $P$, and Bob receives the same bit at location $P'$.
    
    \item[$CF^p_B$:] Dishonest bob receives his coin flip output $c$ in advance at location $P_1$ and at location $P_2\succ P_1$ he may input a bit $b$ (which may depend on the value of $c$). Alice receives a bit $c_o^A$ at location $P\succ P_2$: with probability $p$ she receives $c_o^A = b$, else  $c_o^A = c$.   
    Causality requirement: $P_1\prec P_2 \prec P$. 

    \item [$CF^p_A$:] analogous to $CF^p_B$, with the roles reversed.
\end{description}

\begin{center}
		\begin{tikzpicture}[line width=0.20mm, scale=0.8]
			\draw (0,0) rectangle node{$CF$} (1.5,1.5); \draw [arrows={-stealth}] (0,0.75)-- node[midway,above]{\small{($c$, $P$)}}(-1.5,0.75); 
			\draw [arrows={-stealth}] (1.5,0.75)-- node[midway,above]{\small{($c$, $P'$)}}(3,0.75);
			\node [align=center] at (-2.7,0.75) {\textbf{Alice}};
			\node [align=center] at (4,0.75) {\textbf{Bob}};
		\end{tikzpicture}
		
\vspace{16pt}

		\begin{tikzpicture}[line width=0.20mm, scale=0.8]
			\draw (0,0) rectangle node {$CF^p_B$} (1.5,1.5); \draw [arrows={-stealth}] (0,0.75)--(-1.5,0.75); 
			\node [align=center] at (-1.55,1.05) {\small{($c_o\in \{c,b\}$, $P$)}};
			\draw [arrows={-stealth}] (1.5,1.2)-- node[midway,above]{\small{($c$, $P_1$)}}(3,1.2); \draw [arrows={-stealth}] (3,0.3)--(1.5,0.3); \node[align=center] at (3.1,0.6) {\small{($b \in \{ 0,1\}$, $P_2$)}};
			\node [align=center, red] at (0.75,-0.75) {$P_1\prec P_2\prec P$};
		\end{tikzpicture}

\end{center}
\end{multicols}
\end{definition}

Note that by definition of $CF$, it should be clear that the uniformly random bit, $c$ is generated independently by the resource $CF$ and cannot be correlated with anything outside it. This is because the honest resource $CF$ takes no inputs that could possibly influence this output. Further, a bias of $0$ means that the coin flip is uniform, a bias of $1$ means that the dishonest player has complete control over the outcome, and a bias of $p$ means that any outcome can occur with probability at most $1/2+p/2$. 

\subsubsection{Bit commitment (\BC)}
\label{sssec: BC}

As mentioned in the introduction, bit commitment is an important cryptographic primitive and its security relates to its \emph{hiding} and \emph{binding} properties which were also introduced in Section~\ref{sec:introduction}. Here, we formally define what an ideal bit commitment resource behaves like in Minkowski space-time.

\begin{definition}[Bit commitment, \BC]
\label{definition: BC}
A bit commitment resource tuple $\mathcal{BC}:=\{BC,BC_A,BC_B\}$ is defined by the single resource $BC$ (with $BC_A$ and $BC_B$ identical to $BC$), which behaves as follows.
\begin{multicols}{2}
\begin{enumerate}
   \item Alice selects a classical bit $a\in \{0,1\}$ to commit to and inputs it at her interface of $BC$ at a time of her choice $t_1$. 
   \item Bob receives the message `$comm$' at time $t'_1 > t_1$ at his interface, indicating that Alice has committed to a bit. 
   \item Alice then inputs the command `$open$' at her interface at a time of her choice $t_2$.
   \item Her original commitment `$a$' is then revealed to Bob at time $t'_2 > t_2$.
\end{enumerate}
\begin{center}
	\begin{tikzpicture}[line width=0.20mm, scale=0.7]
		\draw (0,0) rectangle node{\BC} (2,2); \draw [arrows={-stealth}] (-1.5,1.32)-- node [midway,above] {\small{$(a, t_1)$}}(0,1.32); \draw [arrows={-stealth}] (-1.5,0.66)--(0,0.66);\node[align=center] at (-1,1) {\small{$(open, t_2)$}};
		\draw [arrows={-stealth}] (2,1.32)--(3.5,1.32); \node[align=center] at (3.5,1.7) {\small{$(comm, t'_1)$}}; \draw [arrows={-stealth}] (2,0.66)--(3.5,0.66);\node[align=center] at (3,1) {\small{$(a, t'_2)$}};
		\node [align=center] at (-3.5,1.1) {Alice};
		\node [align=center] at (5.5,1.1) {Bob};
	\end{tikzpicture}
\end{center}
\end{multicols}
\end{definition}

For simplicity, we only mention the times at which the messages are input and output in Definition~\ref{definition: BC}. This should naturally also include the location in space of the players.

\subsubsection{Channel with delay (\CD)}
\label{sssec: CD}

In special relativity, unless two agents meet at the exact same space-time location to exchange messages, there is necessarily a finite communication delay between them. A \emph{channel with delay} is a cryptographic primitive between two parties based on this physical intuition: Alice sends a message and Bob receives it unaltered with some delay.

\begin{definition}[Channel with delay]
\label{definition: CD}    
A channel with delay $\CD = (CD, CD_A, CD_B)$ between a sender Alice and a receiver Bob is a tuple of resources characterized by four space-time locations, $P \prec  P'\prec Q' \prec Q$, and defined as follows.
\begin{multicols}{2}
\begin{description}
    \item{$CD$:}
    Honest Alice inputs a quantum state $a$ into the channel at location $P$, i.e., the input message is $(a, P)$. Honest Bob receives $(a, Q)$ at  location $Q$.
    
    \item{$CD_A$:} Dishonest Alice inputs $(a, P')$. Honest Bob receives $(a, Q)$.
    
    \item{$CD_B$:} Honest Alice inputs $(a, P)$. Dishonest Bob receives  $(a, Q')$. 
\end{description}
\noindent
The \emph{\safezone}\ of the channel is defined as the causal diamond of $P'$ and $Q'$: the set $D(P',Q') := \{T: P' \prec T \prec Q'\}$. 
\begin{center}
	\begin{tikzpicture}[line width=0.20mm, scale=1]
		\draw[arrows={-latex'}] (-2,-0.5)--(-2,0); \draw[arrows={-latex'}] (-2,-0.5)--(-1.5,-0.5); \node[align=center] at (-2,0.2) {$\vec{t}$}; \node[align=center] at (-1.3,-0.5) {$\vec{x}$}; 
		\filldraw[black] (0,0) circle (0.5pt) node[anchor=north] {$P$}; \filldraw[black] (0.5,2.5) circle (0.5pt) node[anchor=south] {$Q$};
		\node[align=center] at (0.8,0) {\textbf{ALICE}}; \node[align=center] at (-.2,2.5) {\textbf{BOB}};
		\filldraw[black] (-0.1,0.7) circle (0.5pt) node[anchor=north] {$P'$}; \filldraw[black] (0.4,1.8) circle (0.5pt) node[anchor=south] {$Q'$};
		\draw (0.5,2.5)--(-1,1); \draw (0.5,2.5)--(1.5,1.5); \draw(0,0)--(1.5,1.5); \draw (0,0)--(-1,1);
		\draw[blue] (-0.1, 0.7)--(0.7,1.5); \draw[blue] (-0.1,0.7)--(-0.4,1); \draw[blue] (0.4, 1.8)--(-0.4,1);  \draw[blue] (0.4, 1.8)--(0.7,1.5);
		\draw [draw=none, fill=blue, fill opacity=0.2] (-0.1,0.7)--(-0.4,1)--(0.4,1.8)--(0.7,1.5);
		\draw[blue, fill=blue, fill opacity=0.2] (-2,-1) rectangle (-1.8,-0.8);
		\node[align=center] at (-.5,-0.9) {trusted region};
	\end{tikzpicture}
\end{center}
\end{multicols}
\end{definition}

That is, the \CD\ acts as an identity channel on the message, and as a shift on the space-time stamp. Furthermore, it allows dishonest players to send (respectively, recieve) the message after (respectively, before) the honest player. A formal definition of the causal box that implements the \CD\ can be found in Appendix~\ref{ssec: appendix.CD}. 
The \safezone\ of the \CD\ is the region where both players can be sure that the information in the channel remains secure, even when the other is dishonest; as we will see, it is the region where the \CD\ can be used to construct other resources such as \CF\ (Section~\ref{ssec: cons}).

\paragraph{Relation to relativistic bit commitment protocols.}

Typically, in a non-relativistic bit commitment resource, Alice is free to choose when to open her commitment and also has the choice to not open her commitment at all. In relativistic protocols, however, the commitment time is usually restricted by the time taken by light to travel between the remote agents, in which case Alice does not have the freedom of choosing arbitrary $t_1$ and $t_2$ as in Definition~\ref{definition: BC}: once $t_1$ is fixed, the commitment must be opened at the latest by $t_1 + \Delta t$ for some $\Delta t$ which depends on the protocol. Bob typically does not know whether Alice is committed before time $t_1 + \Delta t$, when by checking if he has a valid commitment or garbage, he can know whether she ran the honest protocol at $t_1$ or not, and retroactively decide if she has been committed to her bit. Furthermore, in some relativistic protocols, e.g., \cite{Kent2012}, Alice cannot choose to not open: if she honestly committed at time $t_1$, then after $\Delta t$, the commitment is always opened. The \CD\ resource from Definition~\ref{definition: CD} captures exactly this, and hence we analyze the (im)possibility of extending the delay of such a channel in this work.

Other protocols, e.g., \cite{Lunghi2015}, additionally offer the possibility to Alice of aborting before Bob receives the bit to which she committed. We thus define a variation of Definition~\ref{definition: CD} in Appendix~\ref{appendix:CDabort}, where after inputting her message into the channel, Alice may still change her mind and abort before Bob receives it. We prove in Appendix~\ref{appendix:CDabort} that our main results presented in Sec.~\ref{sec:results} still go through with this alternative definition of a channel with delay.

\section{Results}
\label{sec:results}
\subsection{Constructing \CF} 
\label{ssec: cons}

It was shown in \cite{Demay2013} that a $1/2$-biased coin flipping resource can be perfectly constructed from a bit commitment resource (Definition~\ref{definition: BC}), by using Blum's protocol \cite{Blum1983}. 
Here we show that it is in fact possible to construct an even stronger resource (an unbiased coin flip) from a channel with delay.

\begin{restatable}[Construction $\CD \to \CF$]{theorem}{ClaimConstructibility}
\label{claim: constructibility}
Given a classical channel with delay $\CD$, there exists a classical protocol $\Pi_{\mathcal{CD}\rightarrow \mathcal{CF}}=\{\Pi_A,\Pi_B\}$ that perfectly constructs an unbiased coin flipping resource $\CF^0$.

The constructed and ideal resources are indistinguishable for any possible distinguisher (including quantum and non-signalling distinguishers, see Remark~\ref{remark: cb_range} in Sec.~\ref{ssec: CB}). The honest protocol as well as the simulator require only elementary local operations and classical communication.
\end{restatable}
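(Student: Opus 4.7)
The plan is to realise $\CF^0$ from $\CD$ by a Blum-style XOR protocol in which the ``commitment'' is provided by relativistic causality rather than by a cryptographic primitive. Let the four locations $P \prec P' \prec Q' \prec Q$ of $\CD$ be fixed. Honest $\Pi_A$ samples $a \in \{0,1\}$ uniformly, inputs $(a,P)$ into $\CD$, waits for a bit $(b, R_B)$ with space-time stamp $R_B \in D(P', Q')$ from Bob via ordinary light-speed classical communication, and outputs $c := a \oplus b$ at a chosen point $R_A$ in the causal future of $R_B$. Honest $\Pi_B$ samples $b \in \{0,1\}$ uniformly at some fixed point $R_B$ in the interior of the trusted region $D(P', Q')$, sends $(b, R_B)$ to Alice, receives $(a, Q)$ from $\CD$, and outputs $c := a \oplus b$ at $Q$. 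Any missing or out-of-region message is replaced by the default value $0$. The constructed $CF$ then has output locations $(R_A, Q)$, and in the all-honest case the distribution is trivially uniform because $a$ and $b$ are independent uniform bits.

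\textbf{Dishonest Alice.} By definition of $\CD_A$, any bit $a$ that the distinguisher commits must enter the channel at some point $\tilde P \prec P'$, which is causally prior to $R_B$; hence $a$ is independent of the uniform $b$, and $c = a \oplus b$ is uniform in the distinguisher's view. The simulator $\sigma_A$ obtains the uniform bit $c$ from $\CF^0_A$ at its early location $P_1$, observes the distinguisher's committed bit $a$ on the simulated $\CD_A$ interface at $P'$, sets $b := c \oplus a$, and delivers $(b, R_B)$ to the distinguisher (feeding an arbitrary bit into the $P_2$ override interface of $\CF^0_A$, which is discarded because the bias is $0$). Causality is respected since $R_B$ lies in the causal future of both $P_1$ and $P'$. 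In both real and ideal systems the joint distribution of (committed $a$, delivered $b$, Bob's output $c$) is: $a$ arbitrary, $b$ uniform independent of $a$, $c = a \oplus b$; so the distinguishing advantage is exactly $0$. The dishonest-Bob case is symmetric: any message that Bob's box emits with stamp $R_B \in D(P', Q')$ must be produced causally before $Q'$ and is therefore independent of the $a$ that dishonest Bob can read out of $\CD_B$ no earlier than $Q'$. The simulator $\sigma_B$ reads $c$ from $\CF^0_B$, observes Bob's $b$, and injects $(c \oplus b,\, Q')$ into the simulated $\CD_B$ interface; missing or ill-stamped messages are handled by the same $b:=0$ fallback used by honest Alice.

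\textbf{Main obstacle.} The delicate point is the interaction between causal-box space-time stamps and dishonest behaviour: to conclude that a received message with stamp $R_B \in D(P', Q')$ is genuinely independent of events at $Q'$, I will invoke the defining property of a causal box that any output at location $T$ depends only on inputs at points in the causal past of $T$, so no dishonest box can affix an ``early'' stamp to a bit that secretly encodes ``late'' information. Once this is in hand, independence of $a$ and $b$ in each dishonest scenario follows directly, and the simulators above turn the geometric delay of $\CD$ into a perfect emulation of $\CF^0$; the remaining work is bookkeeping of locations $(P_1, P_2, P, R_A, R_B, Q', Q)$ to check that every simulator output lies in the causal future of the inputs it depends on.
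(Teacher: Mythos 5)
Your proposal is correct and follows essentially the same route as the paper's proof: the same Blum-style XOR protocol with the trusted region of $\CD$ replacing the commitment, and the same simulators that reconstruct the missing bit as the XOR of the ideal coin outcome with the bit observed at the dishonest interface, justified by the same causal separation between the dishonest party's channel access point ($P'$ resp.\ $Q'$) and the honest exchange point inside the causal diamond. The only differences are cosmetic: a default-$0$ fallback instead of the paper's uniform resampling for missing messages (which works equally well since your simulator mirrors it), and Bob's output should be placed strictly after $Q$ rather than at $Q$ itself, since a causal box requires a nonzero gap between an input and any output depending on it.
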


The protocol is described in Definition~\ref{definition: protocol}, and the security proof is given in  Appendix~\ref{appendix: cons}. 

\begin{definition}[Protocol $\Pi_{\CD \to \CF}$]
\label{definition: protocol}

Given a channel with delay  $\CD=(CD, CD_A, CD_B)$ characterized by locations $A\prec A'\prec B' \prec B$ (see Definition~\ref{definition: CD}), we define the following honest protocol $\Pi_{\CD \to \CF}= (\Pi_A, \Pi_B)$.

\begin{enumerate}
    \item  Alice picks a uniformly random bit, $a$ and sends it through $CD$ from her space-time location $A$. Bob receives this bit from $CD$ at his location $B$.
    \item Bob meets Alice at $P$ in the \safezone, i.e., the causal diamond $D(A',B')$ to pass on Bob's uniformly random bit, $b$.
    \item After receiving $b$ from her agent, Alice computes $a\oplus b=c$ and outputs this value at some point $P_F^A\succ P$. If Bob did not turn up for the meeting at $P$, she picks a uniform $b$ herself, and outputs $a\oplus b=c$ as before.
    \item After receiving $a$ from the channel, Bob computes $a\oplus b=c$ and outputs the result at a point $P_F^B\succ B$. If Bob does not receiving anything from the channel, he picks a uniform $a$ himself, and outputs $a\oplus b=c$ as before.
\end{enumerate}

\end{definition}

Note that it is important that the point $P$ in the above protocol lies in the \safezone, otherwise the protocol would not be secure.\footnote{The existence of the simulators $\sigma_A$ and $\sigma_B$ used in the proof of Theorem~\ref{claim: constructibility} relies crucially on $P \in D(A',B')$.} Furthermore, this protocol can be run by a single player on each side without the need for trusted agents since $P$ lies in the causal future of $A$ and $A'$ and in the causal past of $B$ and $B'$.

In Appendix~\ref{appendix:CDabort} we define a weaker channel with delay, namely one which allows Alice to abort and prevent her message from reaching Bob, $\CD^\bot$. We show in the same appendix (Lemma~\ref{lem:CDabort}), that if the protocol above is used with $\CD^\bot$ instead of $\CD$, then we construct an unfair coin flip $\CF^{\text{uf}}$ instead of an unbiased one $\CF^0$.

\subsection{Impossibility of \CF, \CD\ and \BC}
\label{ssec: impos}

\paragraph{Impossibility of coin flipping.} In the previous section, we showed that an unbiased coin flipping resource can be constructed from a suitable channel with delay. Here we show that in the absence of any such shared resource, it is impossible to construct any (biased) coin flip resource solely through the exchange of messages.

\begin{restatable}[Impossibility of \CF]{theorem}{ClaimImpossibilityCF}
\label{claim: impossibility}
It is impossible to construct, with $\epsilon<\frac{1}{6}(1-p)$, a $p$-biased coin flipping resource between two mutually distrusting parties solely through the exchange of messages through any relativistic or non-relativistic protocol, be it classical, quantum or non-signalling.

The distinguisher required to distinguish the real form ideal systems has the same complexity and memory requirements as the protocol $\Pi_A,\Pi_B$ and simulators $\sigma_A,\sigma_B$. In particular, if these are efficient, classical or have bounded or noisy memory, then so does the distinguisher.
\end{restatable}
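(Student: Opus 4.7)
The plan is to show that for every putative protocol $(\Pi_A,\Pi_B)$ and every pair of simulators $(\sigma_A,\sigma_B)$, at least one of the three conditions of Definition~\ref{definition: security} is violated with a distinguishing advantage exceeding $(1-p)/6$. The argument rests on the observation that the ideal resource $\CF^p$ bounds the probability with which any single party can force the honest counterpart's output to a chosen bit $t \in \{0,1\}$ by $(1+p)/2$: when a simulator receives $c$ at $P_1$ and inputs $b$ at $P_2$, the output is $b$ with probability $p$ and the unbiased $c$ with probability $1-p$, so no simulator strategy can push the probability of any specific outcome above $p + (1-p)/2 = (1+p)/2$.

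First I would analyse the ideal side jointly. For a target bit $t$, let $q_X^{\text{ideal}}(t)$ denote the probability that $X$'s honest counterpart outputs $t$ when $X$'s simulator optimally biases toward (resp.\ away from) $t$. Combining the two ideal bounds gives $q_A^{\text{ideal}}(t) - q_B^{\text{ideal}}(t) \leq p$. Conditions~(b) and~(c) of Definition~\ref{definition: security} then upgrade this, via a triangle-inequality chain, to $q_A^{\text{real}}(t) - q_B^{\text{real}}(t) \leq p + 2\epsilon$ for any concrete real-world attacks that Alice and Bob may respectively employ. Hence an impossibility will follow if I can exhibit two such attacks whose real-world gap is strictly larger than $p + 2\epsilon$.

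The heart of the argument is to construct such a pair of attacks for every message-only protocol. The intuition is that with no shared resource assumed, all correlations between the parties must be mediated by messages. By moving to a purified form of the protocol in which all randomness is deferred to a single final measurement (so that the argument applies uniformly to classical, quantum, and non-signalling players), one can identify a last round after which the remaining ``free randomness'' is held by one party only---say Alice by symmetry. A dishonest Alice aborting, or substituting messages, conditioned on her local prediction of the would-be outcome can then push Bob's output toward $0$ with high probability, and by the symmetric argument a dishonest Bob can push Alice's output toward $1$. Anchoring both attacks against the uniform coin produced by the honest execution (condition~(a) of Definition~\ref{definition: security}) converts these one-sided biases into a real-world gap of the form $q_A^{\text{real}}(0) - q_B^{\text{real}}(0) \geq 1 - O(\epsilon)$; calibrating the $\epsilon$-budget across all three conditions yields the stated threshold $\epsilon < (1-p)/6$.

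The main obstacle is the model-agnostic construction of the symmetric biasing attacks. For classical protocols, internal re-simulation of the opposite role suffices to predict the target output; for quantum protocols, I would work with a Stinespring/Lo--Chau-style purification so that the re-simulation is replaced by local unitaries on a purifying register; for non-signalling protocols, I would use that any non-signalling adversary can be realised by local operations on shared non-signalling boxes, of which none is available by hypothesis. Because the resulting distinguisher only internally runs copies of the honest protocol and the given simulators, Remark~\ref{remark: bounded systems} guarantees that the impossibility carries over verbatim to any model whose environment is at least as powerful as the players themselves---including the computationally bounded, bounded-storage, and noisy-storage settings mentioned in the theorem statement.
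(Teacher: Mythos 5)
Your overall strategy---derive a bound on what dishonest parties can achieve from the ideal-side simulators, then exhibit real-world attacks that beat it---diverges from the paper's proof in a way that leaves a genuine gap at its central step. The paper never constructs any attack on the real protocol. Instead it exploits the hypothesis that \emph{no resource is shared}: the real system in the dishonest-Alice condition is just $\Pi_B$ and in the dishonest-Bob condition just $\Pi_A$, so plugging the one into the other reproduces the honest real system $\Pi_A\Pi_B$. Chaining the three conditions of Definition~\ref{definition: security} with the triangle inequality and contractivity then yields $CF^p_B\,\sigma_B\sigma_A\,CF^p_A \approx_{3\epsilon} CF$. The left-hand side contains \emph{two independently generated} ideal coins $c$ and $c'$ which the composite simulator, limited by the bias $p$ on each resource, can make agree with probability at most $\tfrac{1}{2}(1+p)$, whereas the single resource $CF$ outputs equal bits with certainty; a distinguisher that merely compares the two output bits achieves advantage $\tfrac{1}{2}(1-p) \leq 3\epsilon$, giving $\epsilon \geq \tfrac{1}{6}(1-p)$. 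The only ``attack'' is against the fully specified ideal resources, which is why the argument is uniform over classical, quantum and non-signalling protocols for free.

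The step you cannot make rigorous is the ``heart of the argument'': that for \emph{every} message-only protocol a dishonest Alice can push honest Bob's output toward a chosen bit with probability $1-O(\epsilon)$ by aborting or substituting messages conditioned on a local prediction. As a stand-alone statement about arbitrary quantum protocols this is false: Kitaev's lower bound only guarantees that one party can force some outcome with probability $1/\sqrt{2}$, and the protocol of \cite{Chailloux2013} attains stand-alone bias bounded strictly away from total control, so no purification/last-round argument can deliver a near-certain bias. Your attack construction does not invoke the simulators at all, so it cannot borrow strength from composability to repair this; and the non-signalling case (``realised by local operations on shared non-signalling boxes, of which none is available'') is asserted rather than argued. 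Without that lemma the claimed real-world gap $q_A^{\text{real}}(0)-q_B^{\text{real}}(0) \geq 1-O(\epsilon)$, and hence the threshold $\epsilon < \tfrac{1}{6}(1-p)$, does not follow. The fix is precisely the paper's move: let the protocol converters cancel against each other rather than attacking them.
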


Note that this theorem includes as special case protocols that may send messages in superpositions of different causal orders. This follows from the fact that the impossibility holds for any causal boxes, thus in particular for causal boxes that use such superpositions of causal orders.

The proof of Theorem~\ref{claim: impossibility} can be found in Appendix~\ref{appendix: impos}. Here below we provide some intuition.

A coin flip $\CF^p$ does not only guarantee that the output bit is uniform (or biased with probability $p$), but also that it is independent of any other bit produced in parallel by some other resource (up to the bias). This is essential so that a dispute that is resolved with a coin flip would not only be settled fairly, but also independently from any other dispute. The man in the middle attack mentioned in Sec.~\ref{sec:introduction} would allow dishonest players to perfectly correlate the outcome of two coin flips that are expected to be independent: if Alice and Bob run a coin flipping protocol, Charlie and Danielle run a second one in parallel, and Bob and Charlie collude to forward all the communication between Alice and Danielle, Bob and Charlie could force them to agree on the same coin flip. The proof of Theorem~\ref{claim: impossibility} consists in showing that this is essentially possible for any protocol that does not use any resource other than communication between the parties involved. A sketch of the main proof idea is provided in \fref{fig: impossibility} in Appendix~\ref{appendix: impos}. It generalizes the techniques used in \cite{Maurer2011} to prove the analogous result for the non-relativistic case.

\paragraph{Impossibility of \BC\ and \CD.} Combined with Theorem~\ref{claim: constructibility} and Blum's construction \cite{Demay2013}, Theorem~\ref{claim: impossibility} implies impossibility of constructing any channel with delay $\CD$ or any commitment $\BC$ of no initial resource is shared by the players.

\begin{restatable}[Impossibility of \CD]{corollary}{ClaimImpossibilityCD}
\label{claim: cdimpos}
It is impossible to construct \CD, with $\epsilon<\frac{1}{6}$, between two mutually distrusting parties solely through the exchange of messages through any classical, quantum or relativistic protocol.

The distinguisher required to distinguish the real form ideal systems has the same complexity and memory requirements as the distinguisher used in Theorem~\ref{claim: impossibility} composed with the protocol $\Pi_A,\Pi_B$ used in Theorem~\ref{claim: constructibility}. In particular, if these are efficient, classical and have bounded or noisy memory, then so does the distinguisher.
\end{restatable}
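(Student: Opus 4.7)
The plan is to proceed by contradiction, chaining Theorem~\ref{claim: constructibility} with Theorem~\ref{claim: impossibility} via the composability of the Abstract Cryptography framework. Suppose a protocol $\Pi^{\CD}=(\Pi_A^{\CD},\Pi_B^{\CD})$ together with simulators $\sigma_A^{\CD},\sigma_B^{\CD}$ constructs $\CD$ with error $\epsilon<\frac{1}{6}$ from no initial resource (just communication), as per Definition~\ref{definition: security}. By Theorem~\ref{claim: constructibility}, there exists a classical protocol $\Pi_{\CD\to\CF}=(\Pi_A^{\CF},\Pi_B^{\CF})$ that perfectly constructs $\CF^0$ from $\CD$. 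Composing the two constructions in the natural way yields a candidate protocol $(\Pi_A^{\CF}\Pi_A^{\CD},\,\Pi_B^{\CF}\Pi_B^{\CD})$ that constructs $\CF^0$ from nothing.

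The quantitative bound follows from two standard properties of the distinguishing pseudo-metric recalled after Definition~\ref{definition: dist adv}: the triangle inequality, and the fact that applying a converter cannot increase the distinguishing advantage (when the distinguisher class is closed under pre-composition with that converter, which is the case for all classes considered here). Concretely, the composed construction has error at most $\epsilon+0=\epsilon<\frac{1}{6}$, with simulators given by composing $\sigma_A^{\CD}$ and $\sigma_B^{\CD}$ with $\Pi_A^{\CF}$ and $\Pi_B^{\CF}$ respectively, in the usual way that simulators compose along a chain of reductions. Setting $p=0$ in Theorem~\ref{claim: impossibility} gives the threshold $\frac{1}{6}(1-0)=\frac{1}{6}$, so the existence of this composed construction contradicts that theorem. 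Hence no such $\Pi^{\CD}$ exists.

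For the complexity statement, the distinguisher witnessing the contradiction is obtained by taking the distinguisher $\mathcal{D}$ supplied by Theorem~\ref{claim: impossibility} applied to the composed system, and absorbing the converters $\Pi_A^{\CF},\Pi_B^{\CF}$ and the composed simulators into it (or equivalently running them internally, as in Remark~\ref{remark: bounded systems}). Since $\Pi_{\CD\to\CF}$ uses only elementary local operations and classical communication, the resulting distinguisher has essentially the same complexity and memory profile as $\mathcal{D}$ together with $\Pi_A^{\CD},\Pi_B^{\CD},\sigma_A^{\CD},\sigma_B^{\CD}$. In particular, restricting to efficient, classical, or bounded/noisy-memory systems preserves these restrictions throughout, which matches the complexity claim in the corollary.

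The main, though mild, obstacle is a coordinate-matching issue: the protocol $\Pi_{\CD\to\CF}$ requires a rendezvous point $P$ lying in the trusted region $D(P',Q')$ of the underlying $\CD$. This must be handled by observing that any hypothesized construction of $\CD$ is parametrised by the four locations $P\prec P'\prec Q'\prec Q$ of Definition~\ref{definition: CD}, so we are free to instantiate the contradiction with any choice of such locations for which a suitable $P\in D(P',Q')$ exists; any non-degenerate choice works. Once this is noted, the composition is routine and the argument is complete.
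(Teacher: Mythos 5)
Your proposal is correct and follows exactly the paper's argument: the paper proves this corollary in one line by composing the perfect construction $\CD \to \CF^0$ of Theorem~\ref{claim: constructibility} with the impossibility of $\CF^p$ (at $p=0$, threshold $\frac16$) from Theorem~\ref{claim: impossibility}. Your additional remarks on simulator composition, distinguisher complexity, and the choice of the rendezvous point are consistent with the paper's framework and merely make explicit what the paper leaves implicit.
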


\begin{proof}
Follows directly from the impossibility of \CF\ in Theorem~\ref{claim: impossibility} together with the construction of unbiased \CF\ from \CD\ (Theorem~\ref{claim: constructibility}). 
\end{proof}

\begin{restatable}[Impossibility of \BC]{corollary}{ClaimImpossibilityBC}
\label{claim: bcimpos}
It is impossible to construct $\BC$, with $\epsilon<\frac{1}{12}$, between two mutually distrusting parties solely through the exchange of messages through any classical, quantum or relativistic protocol. This rules out both arbitrarily long and timed commitments. 

The distinguisher required to distinguish the real form ideal systems has the same complexity and memory requirements as the distinguisher used in Theorem~\ref{claim: impossibility} composed with the protocol $\Pi_A,\Pi_B$ used in Blum's protocol \cite{Demay2013}. In particular, if these are efficient, classical and have bounded or noisy memory, then so does the distinguisher.
\end{restatable}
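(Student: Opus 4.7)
The plan is to reduce this impossibility to Theorem~\ref{claim: impossibility} by contraposition, using the fact (stated before Theorem~\ref{claim: constructibility} and due to \cite{Demay2013,Blum1983}) that Blum's protocol $\Pi_{\BC \to \CF^{1/2}}$ \emph{perfectly} constructs a $1/2$-biased coin flip from a bit commitment resource. Concretely, suppose for contradiction that some protocol $\Pi_{\BC}=(\Pi_A,\Pi_B)$, using only direct message exchange, $\eps$-constructs $\BC$ in the sense of Definition~\ref{definition: security} for some $\eps<\tfrac{1}{12}$, together with simulators $\sigma_A,\sigma_B$ in the relevant class $\mathbbm S$.

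The first step is to form the composed protocol $\widetilde{\Pi} := \Pi_{\BC \to \CF^{1/2}} \circ \Pi_{\BC}$ by wiring Alice's (resp.\ Bob's) Blum converter onto the outer interface of $\Pi_A$ (resp.\ $\Pi_B$). Since Blum's protocol consists of local classical XOR operations on top of a single commit/open interaction, the composed system introduces no extra shared resource: $\widetilde{\Pi}$ still relies solely on direct message exchange, and its converters remain in the same class $\mathbbm S$ as $\Pi_A,\Pi_B$ (classical, quantum, non-signalling, bounded-storage, or efficient, as desired). The corresponding simulators for $\widetilde{\Pi}$ are obtained by composing $\sigma_A,\sigma_B$ with the (perfect) simulators of Blum's construction, and inherit the same class membership.

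The second step is to invoke composability. Because the distinguishing advantage $d^{\mathbbm D}$ is a pseudo-metric that is non-increasing under application of converters (Sec.~\ref{ssec: AC}), the triangle inequality combined with perfectness of Blum's protocol yields that $\widetilde{\Pi}$ constructs $\CF^{1/2}$ from no initial resource with distinguishing advantage at most $\eps + 0 = \eps < \tfrac{1}{12}$ for each of the three conditions of Definition~\ref{definition: security}. But Theorem~\ref{claim: impossibility} applied with $p = 1/2$ rules out any such construction whenever $\eps < \tfrac{1}{6}(1-\tfrac{1}{2}) = \tfrac{1}{12}$, giving the required contradiction. The distinguisher witnessing this contradiction is exactly the one supplied by Theorem~\ref{claim: impossibility} with Blum's lightweight classical wrapper hard-wired inside, so its computational and memory requirements match those of $\Pi_{\BC}$ and its simulators composed with Blum's protocol, as claimed.

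The remark that both arbitrarily long and timed commitments are ruled out follows because Definition~\ref{definition: BC} permits Alice to choose $t_1$ and $t_2$ freely, and Blum's protocol requires the commitment to hold only until the single opening message can be delivered; any timed variant that supports even this minimal lifetime would already be defeated by the above reduction. The only step that might require care is bookkeeping of the simulator composition under the two dishonest-party conditions, but since Blum's construction is perfectly simulatable, this raises no genuine obstacle: the argument is essentially a one-line application of composability to Theorem~\ref{claim: impossibility}.
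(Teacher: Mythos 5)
Your proposal is correct and follows essentially the same route as the paper's own (one-line) proof: reduce to Theorem~\ref{claim: impossibility} via Blum's perfect construction of $\CF^{1/2}$ from $\BC$, which yields the threshold $\frac{1}{6}(1-\frac12)=\frac{1}{12}$. Your version simply spells out the composability and simulator bookkeeping that the paper leaves implicit, and both the numerics and the distinguisher-complexity claim check out.
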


\begin{proof}
Follows directly from the impossibility of \CF\ in Theorem~\ref{claim: impossibility} together with the construction of $\frac12$-biased \CF\ from \BC\ using Blum's protocol \cite{Demay2013}. 
\end{proof}

Using the same techniques, we show in Appendix~\ref{appendix:CDabort} that an abort channel cannot be constructed either.

\subsection{Impossibility of Extending Delays}
\label{ssec: impos_delay}

We show that it is not possible to use several channels with delay to construct a better channel with delay: the \safezone\ of the constructed channel will be smaller than the \safezone\ of at least one of the individual channels used. In fact, the result is even stronger: the \safezone\ of the constructed channel is contained inside the \safezone of at least one of the assumed channels used in the construction. This means the maximal space-time region within which the information in the channel is guaranteed to be secure from both dishonest parties cannot be increased even with $n$ copies of a channel. If we view such a channel with delay as a relativistic bit commitment (Alice inputs a bit into the channel and is then committed to it, but the commitment is only opened when the bit arrives with a delay at Bob), this implies that it is not possible to increase the time within which the commitment is both hiding and binding even if $n$ timed commitment resources are given.

\begin{restatable}[Impossibility of extending \CD]{theorem}{ClaimImpossibilityExtendingCD}
\label{claim: impos2}
Given $n$ channels with delay  $\mathcal{CD}^1$,...,$\mathcal{CD}^n$ between two parties, it is impossible to construct with $\epsilon \leq \frac{1}{8}$  a channel $\mathcal{CD}'$ between the two parties with a \safezone\ that is larger than the \safezone\ of all of the individual channels used.

This holds for all protocols $\Pi_A,\Pi_B$ in $\mathbbm C$, which includes inefficient and non-signalling systems. The distinguisher needed to distinguish the real from ideal system has the same complexity requirements as the protocol $\Pi_A,\Pi_B$. In particular, if it is efficient or classical, then so is the distinguisher. Furthermore, if the channels constructed and used are classical, then the distinguisher also has the same quantum memory requirements as the protocol $\Pi_A,\Pi_B$.
\end{restatable}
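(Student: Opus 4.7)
The plan is to argue by contradiction. Suppose $\Pi = (\Pi_A, \Pi_B)$ constructs $\CD'$ (with points $P \prec P' \prec Q' \prec Q$) from $\CD^1, \ldots, \CD^n$ (with points $P_i \prec P_i' \prec Q_i' \prec Q_i$) within error $\eps \leq \tfrac{1}{8}$, and suppose $D(P', Q') \not\subseteq D(P_i', Q_i')$ for every $i$. For each $i$ this negation forces $P_i' \not\prec P'$ or $Q' \not\prec Q_i'$, so the index set admits a covering $\{1,\dots,n\} = \mathcal{I} \cup \mathcal{J}$ with $\mathcal{I} := \{i : P_i' \not\prec P'\}$ and $\mathcal{J} := \{i : Q' \not\prec Q_i'\}$; crucially, $\mathcal{I}^c \subseteq \mathcal{J}$. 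The two parts of the cover will be exploited jointly: $\mathcal{J}$ bounds how much the dishonest-Bob simulation can encode about the transmitted bit, while $\mathcal{I}$ bounds how much the dishonest-Alice simulation can learn in time.

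First I would exploit causality of the dishonest-Bob simulator $\sigma_B$. As a causal box, $\sigma_B$ receives the honest input $a$ from $CD'_B$ at $Q'$ and must produce outputs on the $CD_B^i$ interfaces at $Q_i'$; for every $i \in \mathcal{J}$ the causal gap $Q' \not\prec Q_i'$ forces $\sigma_B$'s output at $Q_i'$ to be causally independent of its $Q'$-input. Combined with the security bound $d(\Pi_A R_B, CD'_B \sigma_B) \leq \eps$ (where $R_B = \bigotimes_i CD_B^i$), a standard indistinguishability argument forces the reduced state on the $\mathcal{J}$-subsystem of $\Pi_A(a)$'s output to be within trace distance $2\eps$ for $a = 0$ and $a = 1$; equivalently, the best predictor of $a$ that observes only the $\mathcal{J}$-channel outputs of $\Pi_A$ succeeds with probability at most $\tfrac{1}{2} + \eps$.

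Second I would construct a distinguisher $\mathcal{D}$ playing dishonest Alice. $\mathcal{D}$ samples $a \in \{0,1\}$ uniformly, runs $\Pi_A(a)$ internally to produce the messages $m_i$ intended for each $\CD^i$, and inputs each $m_i$ on the $CD_A^i$ interface at the latest dishonest time $P_i'$. Because $CD_A^i$ delivers any input at $P_i'$ to honest Bob at $Q_i$ exactly as in the honest case, the real system $R_A \Pi_B$ under this strategy is behaviourally identical to $\Pi_A R \Pi_B$ with honest input $a$; honest correctness then yields $\Pr[b = a] \geq 1 - \eps$ for Bob's output $b$ at $Q$. In the ideal system $\sigma_A\, CD'_A$, the output at $Q$ equals whatever $\sigma_A$ inputs on $CD'_A$ by time $P'$; by causality this depends only on $CD_A^i$-inputs at times strictly before $P'$, i.e.\ only on $\{m_i\}_{i \in \mathcal{I}^c}$. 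Since $\mathcal{I}^c \subseteq \mathcal{J}$, the first step gives $\Pr[b = a] \leq \tfrac{1}{2} + \eps$ in the ideal world.

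Taking $\mathcal{D}$'s verdict to be $0$ iff $b = a$, the distinguishing advantage is at least $(1 - \eps) - (\tfrac{1}{2} + \eps) = \tfrac{1}{2} - 2\eps$, which must be bounded by $\eps$ (the security of $\sigma_A$); this forces $\eps \geq \tfrac{1}{6}$ and contradicts $\eps \leq \tfrac{1}{8}$. The complexity and memory claims in the theorem are then immediate: $\mathcal{D}$'s only overhead beyond internally running $\Pi_A$ is one uniform bit flip and a classical equality check, so if $\Pi_A, \Pi_B$ are classical, efficient, or memory-bounded, so is $\mathcal{D}$. The main obstacle I anticipate is making the first step fully rigorous inside the causal-box formalism: one must show that causal independence of $\sigma_B$'s $Q_i'$-output from its $Q'$-input really does imply an $a$-independent reduced state on the $\mathcal{J}$-subsystem of the symmetric-Fock-space wires (not merely for single classical messages, but for arbitrary quantum wires that may carry superpositions of numbers and space-time locations of messages), and then convert the resulting trace-distance bound into the claimed guessing-probability bound via the operational characterisation of trace distance.
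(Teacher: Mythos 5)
Your proposal takes a genuinely different route from the paper's: the paper never extracts marginal statements about $\Pi_A$'s message distribution, but instead chains all three security conditions into a single composite system $CD'_B\,\sigma_B\,\delta_B\,\bot_B\,\Pi_B \approx_{4\eps} CD'$ (using delay converters $\delta$ and blocking converters $\bot$ that discard every channel and every direct message not reaching Alice's side before $P'$), and then argues that no information can flow from $CD'_B$'s input to $\Pi_B$'s output unless some $\CD^i$ satisfies $P_i'\prec P' \prec Q' \prec Q_i'$. Your covering $\{1,\dots,n\}=\mathcal I\cup\mathcal J$ and the guessing-game reduction are an attractive alternative, but as written the argument has two genuine gaps. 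First, you ignore the direct communication between the parties, which is explicitly part of the setting (the region $R$ in the paper's Fig.~\ref{fig: cdimpos}). In the ideal dishonest-Alice system, \emph{everything} the distinguisher emits at Alice's interface goes to $\sigma_A$, so $\sigma_A$'s input to $CD'_A$ at $P'$ depends not only on $\{m_i\}_{i\in\mathcal I^c}$ but also on every direct message sent at a point $X\prec P'$. Your hiding step only constrains what is observable at \emph{Bob's} end at points not in the causal future of $Q'$; a direct message sent at $X\prec P'$ but delivered at $Y\succ Q'$ escapes that constraint entirely (it is simulatable by $\sigma_B$, which learns $a$ at $Q'$) and yet reaches $\sigma_A$ in time to determine its commitment. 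Ruling this out requires the binding condition against a dishonest Alice who \emph{delays} such messages, which is exactly what the paper's $\bot_A$ argument (combined with correctness) supplies and your proof does not.

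Second, the transfer of your step-1 bound into the step-2 ideal world is not justified for interactive protocols. Step 1 bounds the $a$-dependence of the $\mathcal J$-messages when $\Pi_A$ interacts with the real resources and some Bob-side system; in step 2's ideal world $\Pi_A$ interacts with $\sigma_A$, which receives Alice's channel inputs already at $P_i'$ — strictly earlier than any dishonest Bob, who only sees them at $Q_i'$. Hence $\sigma_A$'s responses on the feedback/direct channels need not be emulatable by any valid dishonest-Bob distinguisher, the two interactions are different closed systems, and the distribution of $\{m_i\}_{i\in\mathcal J}$ in the $\sigma_A$-context is not controlled by the hiding condition. (Your phrase ``the $\mathcal J$-subsystem of $\Pi_A(a)$'s output'' tacitly treats $\Pi_A$ as non-interactive.) Both gaps are closed in the paper precisely by refusing to decompose: one first shows, using the fact that $\sigma_A$ is committed by $P'$, that blocking all communication not reaching Alice's side before $P'$ perturbs the \emph{honest} execution by at most $3\eps$, and only then applies the dishonest-Bob condition to that blocked composite system. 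If you restructure your argument along those lines (or otherwise prove that $\sigma_A$'s pre-$P'$ view is $O(\eps)$-independent of $a$ \emph{in the ideal interaction, including direct messages}), the rest of your reduction and the arithmetic go through; the obstacle you yourself flag (Fock-space rigour) is the least of the problems.
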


The proof of Theorem~\ref{claim: impos2} is given in Appendix~\ref{appendix: impos_delay}. Note that this proof includes as special case protocols and distinguishers that may send messages in a superposition of going through one channel $\CD^i$ or another $\CD^j$, or in which a channel may be in a superposition of being used and not used. This follows from the fact that the impossibility holds for any causal boxes, thus in particular for causal boxes that use such superpositions of causal orders.

One may consider variations of this result in which slightly different resources are used or constructed. For example, one could wonder whether having channels with delay going from Bob to Alice may help. It is however easy to verify from that proof, that these have no impact on the impossibility. Another variation worth considering is if the channels are abort channels, as defined in Appendix~\ref{appendix:CDabort}. We prove in the same appendix that one cannot extend the delay of abort channels either.

\section{Discussion}
\label{sec:discussion}

The general framework for modelling composable security of relativistic quantum protocols developed here naturally lends itself to the study of novel possibility and impossibility results in relativistic cryptography and could provide key insights into classifying possible and impossible information-processing tasks.  

\paragraph{Composability issues raised previously.} Composability issues with Kent's 2012 protocol \cite{Kent2012} have been briefly discussed in \cite{Kaniewski2013}. A definition which is labeled ``composable'' is proposed in \cite[Appendix B]{Kaniewski2013}, but it is not derived using any composable framework. In fact, it is argued in \cite{Kaniewski2013} that bit commitment in the bounded and noisy storage models could satisfy this definition. Since our results carry over to these settings as well, it follows that either the proposed definition is not composable or it cannot be satisfied. Note that the impossibility of $\BC$ in the bounded storage model is already hinted at in \cite{Unr11}, where the author points out that the model he developed for concurrent composition does not guarantee that a protocol is secure when run in parallel with another instance of itself.

\paragraph{Superpositions of causal orders.} A unique feature of the the causal boxes formalism \cite{Portmann2017}, is that it can model superpositions of messages exchanged in a superposition of orders in (space-)time (e.g.\ the quantum switch \cite{Chiribella2013}) by assigning different space-time stamps (or superpositions thereof) to different messages. Combining this with the abstract cryptography framework \cite{Maurer2011}, as done in this paper, allows us to model security in settins where such superpositions are actively used. For example, this allows us to consider protocols where a message is in a superposition of being sent from Alice to Bob and from Alice to Charlie, i.e., where Bob and Charlie are in a superposition of having received no message and one message from Alice. Even for protocols that do not use such structures, possibility results consider distinguishers that have this capability. And impossibility results show that even such such superpositions of causal orders, the desired resource cannot be constructed. This is the case for all our results presented in this work.

A known example of a process involving a superposition of temporal orders of operations is the quantum switch \cite{Chiribella2013}. It was physically realized in \cite{Procopio2015,Rubino2017}, and  can be represented in the Causal Box framework as shown in \cite{Portmann2017}. Further, the quantum switch was shown to have an operational advantage over fixed ordering of (or classical mixtures thereof) operations in solving certain computational tasks \cite{Colnaghi2011,Araujo2014}. By modelling cryptographic protocols involving such superpositions of orders, one can study the operational advantage provided by quantum ordering of messages/operations over classical orderings. Such an approach to studying causal structures in terms of their operational advantages would be useful for characterising the properties of physically implementable causal structures. This is still an important open question since there exist more general frameworks for modelling causal structures, such as the \emph{process matrix framework} \cite{Oreshkov2012} which predict causal structures that are logically possible and yet, have no known physical implementation.

\paragraph{Error tolerance.} Realistic protocols, like those implemented with quantum preparations and measurements, always come with a small probability of error (for example, in Kent's protocol as in QKD schemes, this depends on the number of quantum states exchanged between the parties). The ideal resources we prove cannot be constructed are, by definition, not subject to any errors. But it follows directly from the composable framework used that impossibility to construct perfect resources (with some error $\eps$) implies impossibility to construct noisy versions of the resources. To see this, consider a resource $\CD^\eps$  that is $\eps$-close to \CD\ according to the distinguishing advantage. By the triangle inequality, if a real protocol implements a resource that is $\Delta$-distinguishable from the ideal \CD, it will be at least $(\Delta-\eps)$-distinguishable from $\CD^\eps$. 
For example, for an unbiased \CF, we have $\Delta=\frac16$, so it is still impossible to perfectly build any \CF\ that has an error tolerance smaller than that.

\paragraph{Minimal resources for constructions.}
Our results show that existing bit commitment protocols \cite{Kent2012,Lunghi2015} cannot construct the target resource \BC\  from an assumption of a shared resource. 
Nevertheless, we may still look for initial resources $\mathcal R$ that allow \BC\ to be constructed.
It would be interesting to explore the minimal resources necessary to achieve this. For example, an assumption (or assurance) that dishonest players cannot interact with third parties is a good candidate for such an initial resource $\mathcal R$. It remains open to formalize such a resource $\mathcal R$ within the framework and prove that it is sufficient to construct \BC.

\paragraph{Alternative space-time.}
We have proved our results taking the background physical theory to be special relativity (in the sense of Minkowski space-time with a finite speed of signalling). The results would still hold even for other space-time geometries with a \emph{fixed} background causal structure i.e., for different choices of the partially ordered set $\mathcal{T}$. However, if we consider a general relativistic framework (one where the causal order is not fixed until one solves for the metric by considering the mass distribution) that is compatible with quantum mechanics, there could arise situations where the background causal structure itself is subject to quantum uncertainty and is no longer fixed.\footnote{This can arise when large masses are superposed, resulting in the space-time geometry and hence the causal structure being in a superposition \cite{Zych2017}.} Such causal structures can no longer be explained by a single partially ordered set $\mathcal{T}$ and cannot be modelled within the Causal Boxes framework. In fact, there is currently no framework that can model this and has the properties required to define cryptographic security.\footnote{While Process Matrices \cite{Oreshkov2012} and Causaloids \cite{Hardy2005} are examples of frameworks that are capable of modeling such causal structures, they do not provide a model of discrete systems that can be composed, which is needed for cryptography.} Thus it remains open to define a quantum, general relativistic framework for composable cryptography, and study the problem of constructing bit commitment using it. 

\begin{acknowledgements}
We thank Renato Renner for  discussions on security definitions. 
VV acknowledges support from  FQXi for the funding to present this work at Oxford Quantum Networks 2017, the ETH Masters Scholarship from ETH Zurich, Switzerland and the Inlaks Scholarship from Inlaks Shivdasani Foundation, Mumbai, India for funding tuition and living expenses during her Masters.
CP acknowledges support from the Zurich Information Security
and Privacy Center.
LdR acknowledges support from the Swiss
National Science Foundation through 
SNSF project No.\ $200020\_165843$ and through the 
the National Centre of
Competence in Research \emph{Quantum Science and Technology}
(QSIT), and from  the FQXi grant \emph{Physics of the observer}.
\end{acknowledgements}



\newpage 
\appendix

\section*{\textsc{Appendix}}

\section{The causal box framework}
\label{appendix:causalboxes}

The causal box \cite{Portmann2017} formalism models information-processing systems that are closed under composition even when the order of operations indefinite (such as a superposition of orders) or dynamically determined during a protocol's runtime. Similar formalisms (e.g., \cite{Chiribella2009}, \cite{Gutoski2010} and \cite{Hardy2012}) have been previously developed but they are only suitable for modelling systems where the order of messages is predefined, they fail to be closed under composition when considering simple cryptographic protocols that involve dynamical ordering of messages during runtime \cite{Portmann2017}. In particular, the formalism allows us to model quantum causal systems in Minkowski space and construct new causal systems by composing them. This makes it suitable for modelling composable security of relativistic quantum protocols as done in this paper. We now review the formal definitions of the objects of the causal box framework \cite{Portmann2017}.

\subsection{Message space and wires}
\label{sec: messagewire}

\begin{enumerate}
  \item \textbf{Space of ordered messages:} Every message is modelled as a pair, $(v, t)$ where $v\in \mathcal{V}$ denotes the (classical/quantum) message and $t\in \mathcal{T}$ provides ordering information, where $\mathcal{T}$ is a countable, partially ordered set. The space of a single message is thus a Hilbert space with the orthonormal basis $\{\ket{v,t}\}_{v\in \mathcal{V},t\in \mathcal{T}}$. For a finite $\mathcal{V}$ and infinite $\mathcal{T}$, this Hilbert space corresponds to $\mathbb{C}^{|\mathcal{V}|}\otimes l^2(\mathcal{T})$ where $l^2(\mathcal{T})$ is the sequence space with a bounded 2-norm. Thus $\ket{t}$ can be seen as a sequence which consists of a 1 in position $t \in \mathcal{T}$ and 0 everywhere else.
  \item \textbf{Wires:} The inputs and outputs to a causal box are sent/received through wires which can carry any number (or a superposition of different numbers) of messages of a fixed dimension, which defines the dimension of the wire\footnote{For example a $2$ dimensional wire can carry any number of qubits, or can be in a superposition of carrying 2 and 3 qubits but cannot carry qutrits.}. Thus the state space of a wire is defined to be a symmetric Fock space. It is modelled as a Fock space to allow for superpositions of different numbers of messages and it is a symmetric Fock space since all ordering information associated with the arriving qudits is already contained in the label $t\in \mathcal{T}$ and given this label, there is no other ordering on the qudits. For the Hilbert space, $\mathcal{H}=\mathbb{C}^d\otimes l^2(\mathcal{T})$, the corresponding bosonic Fock space is given as
  \begin{equation}
    \mathcal{F}(\mathbb{C}^d\otimes l^2(\mathcal{T})):= \bigoplus\limits_{n=0}^{\infty} \vee^n(\mathbb{C}^d\otimes l^2(\mathcal{T})),
  \end{equation}
  where $\vee^n\mathcal{H}$ denotes the symmetric subspace of $\mathcal{H}^{\otimes n}$ and $\mathcal{H}^{\otimes 0}$ is the one-dimensional space containing the vacuum state $\ket{\Omega}$.\\
\end{enumerate}

For example,  the state space corresponding to a wire $A$ carrying $d_A$ dimensional messages is denoted by $\mathcal{F}_A^{\mathcal{T}}=\mathcal{F}(\mathbb{C}^{d_A}\otimes l^2(\mathcal{T}))$. The joint space of two wires can be written as $\mathcal{F}_A^{\mathcal{T}}\otimes \mathcal{F}_B^{\mathcal{T}}=\mathcal{F}_{AB}^{\mathcal{T}}$ and it can be shown \cite{Portmann2017} that for any two Hilbert spaces $\mathcal{H}_A=\mathbb{C}^{d_A}\otimes l^2(\mathcal{T})$ and $\mathcal{H}_B=\mathbb{C}^{d_B}\otimes l^2(\mathcal{T})$, 
  \begin{equation}
  \label{eq: isomorphism}
  \mathcal{F}(\mathcal{H}_A)\otimes \mathcal{F}(\mathcal{H}_B) \cong \mathcal{F}(\mathcal{H}_A\oplus \mathcal{H}_B),
  \end{equation}
  
  Isomorphism~\ref{eq: isomorphism} tells us that each valid state in the combined state space of two wires, one carrying $d_A$ dimensional messages and the other carrying $d_B$ dimensional messages, can be mapped to a valid state in the state space of a single wire carrying $d_A+d_B$ dimensional messages. Hence $\mathcal{F}_{AB}^{\mathcal{T}}$, can be interpreted as the state space of a wire carrying $(d_A+d_B)$ dimensional messages\footnote{Conversely, any wire $A$ of messages of dimension $d_A$ can be split in two sub-wires $A_1$ and $A_2$ of messages of dimensions $d_{A_1} + d_{A_2} = d_A$: $\mathcal{F}_A^{\mathcal{T}}\cong \mathcal{F}_{A_1}^{\mathcal{T}} \otimes \mathcal{F}_{A_2}^{\mathcal{T}}$.
  Further, for any subset $\mathcal{P} \subseteq \mathcal{T}$, $\mathcal{F}_A^{\mathcal{T}}\cong \mathcal{F}_A^{\mathcal{P}}\otimes \mathcal{F}_A^{\widetilde{\mathcal{P}}}$,
  where $\widetilde{\mathcal{P}}=\mathcal{T} \backslash \mathcal{P}$ and $\mathcal{F}_A^{\mathcal{P}}=\mathbb{C}^{d_A}\otimes l^2(\mathcal{P})$.}.

\par
We now proceed to formally review the definition of causality that causal boxes satisfy, we first define the notion of cuts on a partially ordered set $\mathcal{T}$ which forms an important part of the definition.

\subsection{Cuts and causality}
\label{sec: cuts}
\begin{definition}[Cuts \cite{Portmann2017}]
A cut of a partially ordered set $\mathcal{T}$ is any subset $\mathcal{C}\subseteq \mathcal{T}$ such that 
$\mathcal{C}=\bigcup\limits_{t\in \mathcal{C}}\mathcal{T}^{\leq t}$,
where $\mathcal{T} \leq t = \{p \in \mathcal{T} : p \leq t\}$. A cut $\mathcal{C}$ is \emph{bounded} if there exists a point $t\in \mathcal{T}$ such that $\mathcal{C}\subseteq \mathcal{T}^{\leq t}$. The set of all cuts of $\mathcal{T}$ is denoted as $\mathfrak{C}(\mathcal{T})$ and the set of all bounded cuts as $\overline{\mathfrak{C}}(\mathcal{T})$.
\end{definition}
In this paper, we have taken the partially order set $\mathcal{T}$ to be Minkowski space-time, this allows us to restrict to bounded cuts. This is because every cut in Minkowski space-time is a bounded cuts: any two space-time points (even those that are unordered i.e., space-like separated) necessarily share a common causal future. Note that this is not true for a general partially ordered set $\mathcal{T}$.

\begin{definition}[Causality function \cite{Portmann2017}]
\label{definition: causality}
A function $\chi : \mathfrak{C}(\mathcal{T}) \rightarrow \mathfrak{C}(\mathcal{T})$ is a \emph{causality function} if it satisfies the following conditions:

\begin{subequations}
\begin{equation}
\label{eq: causality1}
\forall \mathcal{C},\mathcal{D}\in \mathfrak{C}(\mathcal{T}), \quad \chi(\mathcal{C}\cup\mathcal{D})=\chi(\mathcal{C})\cup\chi(\mathcal{D}),
\end{equation}

\begin{equation}
\label{eq: causality2}
\forall \mathcal{C},\mathcal{D}\in \mathfrak{C}(\mathcal{T}), \quad \mathcal{C}\subseteq\mathcal{D} \Rightarrow \chi(\mathcal{C})\subseteq\chi(\mathcal{D}),
\end{equation}

\begin{equation}
\label{eq: causality3}
\forall\mathcal{C}\in\overline{\mathfrak{C}}(\mathcal{T})\backslash \{\emptyset\}, \quad \chi(\mathcal{C})\subset \mathcal{C},
\end{equation}

\begin{equation}
\label{eq: causality4}
\forall \mathcal{C}\in \overline{\mathfrak{C}}(\mathcal{T}),\forall t\in \mathcal{C},\exists n\in \mathbb{N}, \quad t\notin \chi^n(\mathcal{C}),
\end{equation}
\end{subequations}
where $\chi^n$ denotes $n$ compositions of $\chi$ with itself, $\chi^n = \chi \circ \cdot\cdot\cdot \circ \chi$.
\end{definition}
Conditions~\ref{eq: causality1} and \ref{eq: causality2} follow from the considerations that: If the output on $\mathcal{C}$ and $\mathcal{D}$ can be computed from $\chi(\mathcal{C})$ and $\chi(\mathcal{D})$ respectively, the output on $\mathcal{C} \cup \mathcal{D}$ can be computed from $\chi(\mathcal{C}) \cup \chi(\mathcal{D})$, if $\chi(\mathcal{C})$ is needed to compute the output on $\mathcal{C}$, then certainly it is needed to compute the output on $\mathcal{D} \supseteq \mathcal{C}$. Condition~\ref{eq: causality3} is essentially the causal condition that requires that outputs of a causal box can depend only on inputs produced in its causal past and Condition~\ref{eq: causality4} is to ensure that a causal box does not produce an infinite number of messages in a finite interval of time (See \cite{Portmann2017} for details).
Definition~\ref{definition: causality} is the general definition of the causality function and it simplifies for special choices of the set $\mathcal{T}$ \cite{Portmann2017}. We are now in a position to review the formal definition of a causal box.

\subsection{General definition of a causal box}
\label{sec: cbdef}
\begin{definition}[Causal box \cite{Portmann2017}]
\label{definition: CB}
A $(d_X , d_Y)$-causal box $\Phi$ is a system with input wire $X$ and output wire $Y$ of dimension $d_X$ and $d_Y$\footnote{It is enough to define a causal box as a map from one input wire to one output wire since a single wire of dimension $d$ can always be decomposed into $n$ wires of dimensions $d_1,...,d_n$ with $d=d_1+d_2+...+d_n$ using the isomorphism of Equation(~\ref{eq: isomorphism})}, defined by a set\footnote{In general, it is modelled as a set of maps and not a single map because this allows systems to be included which produce an unbounded number of messages and are thus not well-defined as a single map on the entire set $\mathcal{T}$, but only on subsets of $\mathcal{T}$ that are upper bounded by a set of unordered points. For example \cite{Portmann2017}.} of mutually consistent (Equation~(\ref{eq: mutcons})), completely positive, trace-preserving (CPTP) maps (Equation~(\ref{eq: CB}))

\begin{equation}
\label{eq: CB}
\Phi = \{ \Phi^{\mathcal{C}}: \mathfrak{T}(\mathcal{F}_X^{\chi(\mathcal{C})}) \rightarrow \mathfrak{T}(\mathcal{F}_Y^{\mathcal{C}}) \}_{\mathcal{C}\in \overline{\mathfrak{C}}(\mathcal{T})}
\end{equation}

These maps much be such that for all $\mathcal{C}, \mathcal{D} \in \overline{\mathfrak{C}}(\mathcal{T})$ with $\mathcal{C} \subseteq \mathcal{D}$,

\begin{equation}
\label{eq: mutcons}
tr_{\mathcal{D}\backslash \mathcal{C}} \circ \Phi^{\mathcal{D}} = \Phi^{\mathcal{C}} \circ tr_{\mathcal{T}\backslash \chi(\mathcal{C})},
\end{equation}

where $\mathfrak{T}(\mathcal{F})$ denotes the set of all trace class operators on the space $\mathcal{F}$ and the causality function $\chi(.)$ satisfies all the conditions of Definition~\ref{definition: causality}. $\mathcal{F}^{\mathcal{C}}$ is the subspace of $\mathcal{F}^{\mathcal{T}}$ that contains only messages in positions $t\in \mathcal{C}\subseteq{T}$ and $tr_{\mathcal{D}\backslash\mathcal{C}}$ traces out the messages occurring at positions in $\mathcal{D} \backslash \mathcal{C}$.
\end{definition}
\par
Equation~(\ref{eq: mutcons}) can be seen as the combination of the two requirements $\Phi^{\mathcal{C}}=tr_{\mathcal{D}\backslash \mathcal{C}} \circ \Phi^{\mathcal{D}}$ and $\Phi^{\mathcal{C}}=\Phi^{\mathcal{C}} \circ tr_{\mathcal{T}\backslash \chi(\mathcal{C})}$. The first one embodies the mutual consistency requirement while the second, that of causality.
\begin{remark}
Note that Definition~\ref{definition: CB} only considers trace-preserving causal boxes. The definition can be easily generalised to non-trace preserving causal boxes or \emph{sub-normalised causal boxes} to account for post-selection. This is done in \cite{Portmann2017} by defining a suitable projector on the space of \emph{normalised causal boxes}. 
\par
Further, just like CPTP maps on quantum states, causal boxes also admit Choi-Jamio\l{}kowski and Stinespring representations, and in addition, they also admit sequence representations that describe their sequential action over subsequent, disjoint sets of $\mathcal{T}$. We refer the reader to the original paper \cite{Portmann2017} for details regarding these as they are not of particular relevance to the results of this paper.
\end{remark}
\par

\subsection{Composition of causal boxes}
\label{sec: cbcomp}
Having defined causal boxes, we are in a position to see how they can be composed. Due to Isomorphism~\ref{eq: isomorphism}, an input/output wire to a causal box of dimension $d$ can be split into sub-wires of dimensions $d_1, d_2,...,d_n$ such that $d_1+d_2+...+d_n=d$ and similarly, wires can also be combined to form a wire with dimensions equal to the sum of the dimensions of the individual wires. Taking $Ports(\Phi)$ to represent a particular partition of the input and output wires of a causal box $\Phi$ into sub-wires, arbitrary composition of causal boxes can be achieved by combining the following two steps:
\begin{enumerate}
	\item \textbf{Parallel composition: } Two causal boxes $\Phi$ and $\Psi$ can be composed in parallel to obtain a new causal box $\Gamma=\Phi\parallel\Psi$ whose input and output ports are given by the union of the input and output ports of $\Phi$ and $\Psi$ respectively.
	\item \textbf{Loops: } Selected output ports of the causal box $\Gamma$ can be connected with input ports of the same dimension to form loops.
\end{enumerate}
A classical example of composition through loops can be found in Figure~\ref{fig: cbloop}. The formal definitions of parallel composition and loop composition of causal boxes, which generalise thisintuition to the quantum case can be found in the original paper, \cite{Portmann2017} where it is also shown that causal boxes are closed under these arbitrary composition operations.

\begin{figure}[t]
\centering
	\begin{tikzpicture}[line width=0.20mm, scale=0.8, transform shape]
    		\draw[fill=white] (0,0) rectangle node{\large{$P_{CD|AB}$}}(2,2); \draw [arrows={-latex'}] (-2,1.5)--node[midway,above]{$A$}(0,1.5); \draw [arrows={-latex'}] (2,1.5)--node[midway,above]{$D$}(4,1.5); 
		\draw [arrows={-latex'}] (-2,0.5)--node[midway,above]{$B$}(0,0.5); \draw [arrows={-latex'}] (2,0.5)--node[midway,above]{$C$}(4,0.5); 
		\node[align=center] at (5.5,1) {\Large{$\mathbf{\xrightarrow{C\hookrightarrow B}}$}};
		\draw[decoration={markings, mark=at position 0.75 with {\arrowreversed{latex'}}}, postaction={decorate}](10,0) ellipse (2 and 1);
    		\draw[fill=white] (9,0) rectangle node{\large{$P_{CD|AB}$}}(11,2);\draw [arrows={-latex'}] (7,1.5)--node[midway,above]{$A$}(9,1.5); \draw [arrows={-latex'}] (11,1.5)--node[midway,above]{$D$}(13,1.5); 
		\node[align=center] at (8,0.6) {$B$}; \node[align=center] at (12,0.6) {$C$};
	\end{tikzpicture}
	\caption{\textbf{Classical example for loop composition:} A system with classical inputs $A$, $B$ and classical outputs $C$, $D$ can be described by the probability distribution $P_{CD|AB}$. The new system obtained by adding a loop from the output $C$ to input $B$ is then described by the distribution $Q_{D|A}=\sum\limits_c P_{CD|AB}(cd|ac)$ and is a valid probability distribution as long as the system obeys causality \cite{Portmann2017}.}
	\label{fig: cbloop}
\end{figure}
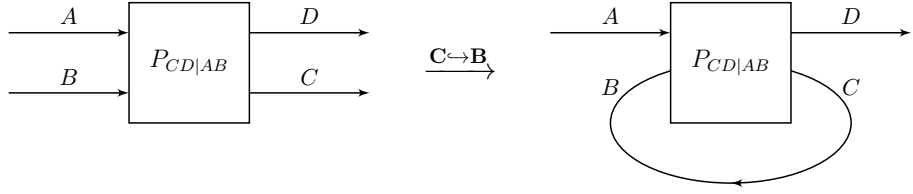

\subsection{The channel with delay as a causal box}
\label{ssec: appendix.CD}

The channel with delay was defined in Section~\ref{sssec: CD}. In this section, we show how to model the channel with delay using the causal box formalism, i.e, by defining it in terms of a set of mutually consistent maps $\{\Phi^{\mathcal{C}}\}$. Recall that a channel with delay is defined by the tuple of resources $\mathcal{CD}:=\{CD,CD_A,CD_B\}$, each of the resources $CD, CD_A$ and $CD_B$ can be equivalently described by the causal boxes $\Phi_{CD}, \Phi_{CD_A}$ and $\Phi_{CD_B}$. In the following, we consider the channel with delay resource characterised by the 4 space-time points $A\prec A'\prec B'\prec B$.

\begin{definition}[Causal box $\Phi_{CD}$ description of the channel with delay resource $CD$]
\label{definition: CDCB}
$\forall$ bounded cuts $\mathcal{C} \ni B \subseteq \mathcal{T}$ in Minkowski space $\mathcal{T}$, the causal box $\Phi_{CD} = \{ \Phi^{\mathcal{C}}_{CD}: \mathfrak{T}(\mathcal{F}_X^{\chi(\mathcal{C})}) \rightarrow \mathfrak{T}(\mathcal{F}_Y^{\mathcal{C}}) \}_{\mathcal{C}\in \overline{\mathfrak{C}}(\mathcal{T})}$ is defined by the maps $\Phi^{\mathcal{C}}_{CD}:=\mathcal{I}_{A\rightarrow B}\circ tr_{\chi(\mathcal{C})\backslash A}$, with $\mathcal{I}_{A\rightarrow B}=\mathcal{I}_{\mathcal{V}}\otimes \Big[\ket{B}\bra{A}+\ket{A}\bra{B}\Big]_{l^2(\mathcal{T})}$. $X$ and $Y$ label the input and output wires to the causal box, $\mathcal{I}_\mathcal{V}$ denotes the identity operation on the Hilbert space $\mathcal{V}$ of the quantum message, $l^2(\mathcal{T})$ is the sequence space (with bounded 2-norm) of the space-time stamps and $\chi(\mathcal{C})$ is any causality function that satisfies the conditions of Definition~\ref{definition: causality} and the condition that $B \in \mathcal{C} \Rightarrow A \in \chi(\mathcal{C})$.
\end{definition}

Similarly, the resources $CD_A$ and $CD_B$ can be defined by replacing $A$ with $A'$ and $B$ with $B'$ in Definition~\ref{definition: CDCB} respectively. Note that for any subset $\mathcal{P} \subseteq \mathcal{T}$, $\mathcal{F}^{\mathcal{T}}\cong \mathcal{F}^{\mathcal{P}}\otimes \mathcal{F}^{\tilde{\mathcal{P}}}$, where $\tilde{\mathcal{P}}=\mathcal{T}\backslash \mathcal{P}$. Further, a natural embedding of $\mathcal{F}^{\mathcal{P}}$ in $\mathcal{F}^{\mathcal{T}}$ can be obtained \cite{Portmann2017} by appending the vacuum state\footnote{$\ket{\Omega}^{\tilde{\mathcal{P}}}$ represents the one dimensional subspace of $\mathcal{F}^{\tilde{\mathcal{P}}}$ that contains the vacuum state.} to $\mathcal{F}^{\mathcal{P}}$
$$\mathcal{F}^{\mathcal{P}}\cong \mathcal{F}^{\mathcal{T}}\otimes \ket{\Omega}^{\tilde{\mathcal{P}}}\subseteq \mathcal{F}^{\mathcal{T}}$$
This allows us to equivalently view the trace $tr_{\mathcal{D}\backslash \mathcal{C}}$ for any two cuts $\mathcal{C}\subseteq \mathcal{D}$, as the operation of tracing out all the messages in space-time locations that belong to the cut $\mathcal{D}$, but not to the cut $\mathcal{C}$ and replacing all of them by the vacuum state $\ket{\Omega}$. With this, we can see that in Definition~\ref{definition: CDCB}, $tr_{\chi(\mathcal{C})\backslash A}(\rho)$ for an arbitrary input state $\rho \in \mathfrak{T}(\mathcal{F}_X^{\chi(\mathcal{C})})$ will always result in a state of the form $\sigma\otimes \ket{A}\bra{A} \otimes \ket{\Omega}\bra{\Omega}^{\tilde{A}}$ where $\sigma \in \mathcal{F}(\mathcal{V})$, which without loss of generality, we denote by $\sigma\otimes \ket{A}\bra{A}$ where it is understood that there is ``nothing" i.e., the vacuum state $\ket{\Omega}$ at all other space-time locations $\tilde{A}\in \mathcal{T}$.
\par
It is easy to verify that $\Phi_{CD}$ is indeed a causal box i.e., that it satisfies Equation~\ref{eq: mutcons}. The left hand side of the equation gives, for an arbitrary input state $\rho \in \mathfrak{T}(\mathcal{F}_X^{\chi(\mathcal{C})})$ and any cut $\mathcal{D} \ni B \supseteq \mathcal{C}$ 

\begin{equation}
\label{eq: CDCB1}
\begin{split}
    \Phi^{\mathcal{C}}_{CD}(\rho)=tr_{\mathcal{D}\backslash \mathcal{C}}\circ \Phi^{\mathcal{D}}_{CD}(\rho)=tr_{\mathcal{D}\backslash \mathcal{C}}\circ \mathcal{I}_{A\rightarrow B}\circ tr_{\chi(\mathcal{D})\backslash A} (\rho) \\
    = \begin{cases}
    tr_{\mathcal{D}\backslash \mathcal{C}}(\sigma \otimes \ket{B}\bra{B}), & A \in \chi(\mathcal{D})\\
    \ket{\Omega}\bra{\Omega}^{\mathcal{T}}, & \text{otherwise}
    \end{cases}\\
    =\begin{cases}
   \sigma \otimes \ket{B}\bra{B}, & B \in \mathcal{C}\\
    \ket{\Omega}\bra{\Omega}^{\mathcal{T}}, & \text{otherwise}
    \end{cases}
\end{split}
\end{equation}

Similarly, the right hand side of Equation~\ref{eq: mutcons} becomes
\begin{equation}
\label{eq: CDCB2}
\begin{split}
    \Phi^{\mathcal{C}}_{CD}(\rho)=\Phi^{\mathcal{C}}_{CD}\circ tr_{\mathcal{C}\backslash \chi(\mathcal{C})}(\rho)=\mathcal{I}_{A\rightarrow B}\circ tr_{\chi(\mathcal{C})\backslash A}\circ tr_{\mathcal{C}\backslash \chi(\mathcal{C})} (\rho) \\
    = \begin{cases}
   \sigma \otimes \ket{B}\bra{B}, & A \in \chi(\mathcal{C})\\
    \ket{\Omega}\bra{\Omega}^{\mathcal{T}}, & \text{otherwise}
    \end{cases}
\end{split}
\end{equation}

Since we have\footnote{Note that the implication $B \in \mathcal{C} \Rightarrow A \in \chi(\mathcal{C})$ follows from the definition of the causality function while the implication $A \in \chi(\mathcal{C}) \Rightarrow B \in \mathcal{C}$ follows from the fact that for any $\chi(\mathcal{C}) \ni A$, the causal box $\Phi_{CD}$ when acting on an arbitrary input state $\rho$, always produces an output on a cut containing $B$ (by definition).} $B \in \mathcal{C} \Leftrightarrow A \in \chi(\mathcal{C})$ by Definition~\ref{definition: CDCB},  and Equations~\ref{eq: CDCB1} and \ref{eq: CDCB2} hold for arbitrary input state $\rho$, the expressions in Equations~\ref{eq: CDCB1} and \ref{eq: CDCB2} are equal giving $tr_{\mathcal{D}\backslash \mathcal{C}}\circ \Phi^{\mathcal{D}}_{CD}=\Phi^{\mathcal{C}}_{CD}\circ tr_{\mathcal{C}\backslash \chi(\mathcal{C})}$ as required by Definition~\ref{definition: CB} of a causal box. This shows that $\Phi_{CD}$ of Definition~\ref{definition: CDCB} (and similarly $\Phi_{CD_A}$ and $\Phi_{CD_B}$) is indeed a causal box.
\begin{remark}
Note that Definition~\ref{definition: CDCB} and the fact that $\Phi_{CD}$ is a causal box imply that $\Phi_{CD}$ cannot produce any (non-vacuum) output on cuts that do not contain the point $B$. This is due to the fact that in Minkowski space, for any cut $\mathcal{C}$ with $B \notin \mathcal{C}$, we can find a cut $\mathcal{D} \supset \mathcal{C}$ containing $B$. The mutual consistency condition (Equation~\ref{eq: mutcons}) would then demand that no non-vacuum outputs are produced in the cut $\mathcal{C}$ as the only non-vacuum output in $\mathcal{D}$ will be produced at $B \notin \mathcal{C}$. Thus it is enough to define $\Phi_{CD}$ only on cuts that include $B$ as done in Definition~\ref{definition: CDCB}.
\end{remark}

\section{Proofs of all results}
\label{appendix:proofs}
\subsection{Constructing \CF} 
\label{appendix: cons}

\ClaimConstructibility*

\begin{proof}

 The protocol $\Pi_{\CD \rightarrow \CF}$ (Definition~\ref{definition: protocol}) constructs $\CF^0$ from $\CD$ iff all three conditions of \fref{fig: constructibility} are satisfied. The condition of \fref{fig: constructibilitya} trivially holds. To see that the conditions in Figures~\ref{fig: constructibilityb} and \ref{fig: constructibilityc} also hold, consider the following simulators.

    $\operatorname{\sigma_A}$ is defined as follows.
\begin{enumerate}
  \item Receive the input $a$ at the space-time location $A'$ at the outer interface. If no $a$ is received, pick one uniformly at random.
  \item On receiving input $c$ at $P_2$ at the inner interface, output $b=a\oplus c$ at the outer interface at $P$.
\end{enumerate}

For the above construction of $\sigma_A$ to work, both $P_2$ and $A'$ must lie in the causal past of $P$. Since $P$ lies in the \safezone, $A'\prec P$ necessarily holds. Since there are no constraints on the space-time points at which $CF^0_A$ can produce an output, we can always make it output at a point $P_2\prec P$. 

$\operatorname{\sigma_B}$ is defined as follows.
\begin{enumerate}
  \item Receive the input $b$ at the space-time location $P$ at the outer interface. If no $b$ is received, pick one uniformly at random.
  \item On receiving input $c$ at $P_1$ at the inner interface, output $a=b\oplus c$ at the outer interface at $B'$.
\end{enumerate}

For the above construction of $\sigma_B$ to work, both $P_1$ and $P$ must lie in the causal past of $B'$. Again, $P$ being in the \safezone ensures that $P\prec B'$ necessarily holds and $P_1\prec B'$ holds since there are no restrictions on the space-time points at which $CF_B$ can produce an output.

It is easy to see that for the above mentioned constructions of the simulators $\operatorname{\sigma_A}$ and $\operatorname{\sigma_B}$, the real and ideal systems of Figures~\ref{fig: constructibilitya}-\ref{fig: constructibilityc} are perfectly indistinguishable (for any distinguisher $\mathcal{D}$) since $a$, $b$ and $c$ always satisfy the condition that any two of them sum bit-wise to the third. Further, Alice can learn the value of both bits $a$ and $b$ before Bob does but she cannot bias the value of Bob's output, $a\oplus b$ in any way. Neither can she prompt Bob to abort the protocol after learning the value of her bit $a$, because she has to send $a$ into the channel before he receives $b$ at the point $P$ (by the non empty \safezone\ condition). Hence the protocol perfectly constructs an unbiased coin flipping resource $\mathcal{CF}^0$ from a channel with delay  $\mathcal{CD}$. 
\end{proof}




\begin{figure}[tbp]
	\begin{subfigure}{1.0\textwidth}
		\centering
		\begin{tikzpicture}[line width=0.20mm, scale=0.7, transform shape]
			\draw (0,0) rectangle (1,4); \draw (5,0) rectangle (6,4); 
			\draw (2.5,2.5) rectangle node{$CD$} (3.5,3.5); \draw (12,1) rectangle node{$CF$} (14, 3);
			\draw [arrows={stealth-}] (5,3)--node[midway,above]{($a$, $B$)}(3.5,3); \draw [arrows={stealth-}] (2.5,3)--node[midway,above]{($a$, $A$)}(1,3); 
			\draw [arrows={stealth-}] (1,1)--node[midway,above]{($b$, $P$)}(5,1); 
			\draw [arrows={-stealth}] (0,2)--node[midway,above]{($a\oplus b$, $P_F^A$)}(-2,2); \draw [arrows={-stealth}] (6,2)--node[midway,above]{($a\oplus b$, $P_F^B$)}(8,2); 
			\draw [arrows={-stealth}] (12,2)--node[midway,above]{($c$, $P_F^A$)}(10.5,2); \draw [arrows={-stealth}] (14,2)--node[midway,above]{($c$, $P_F^B$)}(15.5,2); 
			\node [align=center] at (0.5,4.5) {$\Pi_A$};
			\node [align=center] at (5.5,4.5) {$\Pi_B$};
			\node [align=center] at (9.25,2) {\large{$\approx_0$}};
			\node [align=center, red] at (7,-1) {$P_F^A\succ P$, $P_F^B\succ B$};
		\end{tikzpicture}
		\caption{Honest Alice and Bob: $\Pi_ACD\Pi_B\approx_0 CF$}
		\label{fig: constructibilitya}
	\end{subfigure}

	\begin{subfigure}{1.0\textwidth}
		\centering
		\begin{tikzpicture}[line width=0.20mm, scale=0.7, transform shape]
			\draw (5,0) rectangle (6,4); \draw (2.5,2.5) rectangle node{$CD_A$} (3.5,3.5); 
			\draw (15,1) rectangle node{$CF^0_A$} (17,3); \draw (12,1) rectangle node[rectangle split,rectangle split parts=2, blue] {$b=$ \nodepart{second} $a\oplus c$}(13.5,3);
			\draw [arrows={stealth-}] (5,3)--node[midway,above]{($a$, $B$)}(3.5,3); \draw [arrows={stealth-}] (2.5,3)--node[midway,above]{($a$, $A'$)}(1,3); 
			\draw [arrows={stealth-}] (1,1)--node[midway,above]{($b$, $P$)}(5,1); 
			\draw [arrows={-stealth}] (6,2)--node[midway,above]{($a\oplus b$, $P_F^B$)}(8,2); 
			\draw [arrows={stealth-}] (12,2.7)--node[midway,above]{($a$, $A'$)}(10.5,2.7); \draw [arrows={stealth-}] (10.5,1.3)--node[midway,above]{($b$, $P$)}(12,1.3); 
			\draw [arrows={-stealth}] (15,2)--node[midway,above]{($c$, $P_2$)}(13.5,2); \draw [arrows={-stealth}] (17,2)--node[midway,above]{($c$, $P_F^B$)}(18.5,2); 
			\node [align=center] at (5.5,4.5) {$\Pi_B$};
			\node [align=center] at (12.75,3.5) {$\sigma_A$};
			\node [align=center] at (9.25,2) {\large{$\approx_0$}};
			\node [align=center, red] at (10,-1) {$P_2\prec P$, $A'\prec P$};
		\end{tikzpicture}
		\caption{Dishonest Alice: $\exists \operatorname{\sigma_A}$ such that $CD_A\Pi_B\approx_0 \operatorname{\sigma_A}CF^0_A$.}
		\label{fig: constructibilityb}
	\end{subfigure}

	\begin{subfigure}{1.0\textwidth}
		\centering
		\begin{tikzpicture}[line width=0.20mm, scale=0.7, transform shape]
			\draw (0,0) rectangle (1,4); \draw (2.5,2.5) rectangle node{$CD_B$} (3.5,3.5); 
			\draw (9,1) rectangle node{$CF^0_B$} (11, 3); \draw (12.5,1) rectangle node[rectangle split,rectangle split parts=2, blue] {$a=$ \nodepart{second} $b\oplus c$}(14,3);
			\draw [arrows={stealth-}] (5,3)--node[midway,above]{($a$, $B'$)}(3.5,3); \draw [arrows={stealth-}] (2.5,3)--node[midway,above]{($a$, $A$)}(1,3); 
			\draw [arrows={stealth-}] (1,1)--node[midway,above]{($b$, $P$)}(5,1); 
			\draw [arrows={-stealth}] (0,2)--node[midway,above]{($a\oplus b$, $P_F^A$)}(-2,2); 
			\draw [arrows={stealth-}] (15.5,2.7)--node[midway,above]{($a$, $B'$)}(14,2.7); \draw [arrows={stealth-}] (14,1.3)--node[midway,above]{($b$, $P$)}(15.5,1.3); 
			\draw [arrows={-stealth}] (9,2)--node[midway,above]{($c$, $P_F^A$)}(7.5,2); \draw [arrows={-stealth}] (11,2)--node[midway,above]{($c$, $P_1$)}(12.5,2); 
			\node [align=center] at (0.5,4.5) {$\Pi_A$};
			\node [align=center] at (13.25,3.5) {$\operatorname{\sigma_B}$};
			\node [align=center] at (6.25,2) {\large{$\approx_0$}};
			\node [align=center, red] at (7,-1) {$P_1\prec B'$, $P\prec B'$};
		\end{tikzpicture}
		\caption{Dishonest Bob: $\exists \operatorname{\sigma_B}$ such that $\Pi_ACD_B\approx_0 CF^0_B\operatorname{\sigma_B}$.}
		\label{fig: constructibilityc}
	\end{subfigure}
	
	\caption{Conditions for constructibility of a fair and unbiased coin flip $\mathcal{CF}^0$ from a channel with delay $\mathcal{CD}$. Since the coin flip has $p=0$ the biasing bit that a dishonest party may input has no effect, so we do not draw it.}
	\label{fig: constructibility}
\end{figure}

\subsection{Impossibility of \CF}
\label{appendix: impos}

\ClaimImpossibilityCF*


\begin{figure}[tbp]
	\begin{subfigure}{1.0\textwidth}
		\centering
		\begin{tikzpicture}[line width=0.20mm, scale=0.8, transform shape]
			\draw [white] (-2.1,0) rectangle (16,4.5); \draw (0,0) rectangle (1,3); \draw (3,0) rectangle (4,3); \draw (10,0.75) rectangle node{$CF$} (11.5,2.25);
			\draw [arrows={-stealth}] (1,0.75)--(3,0.75); \draw [arrows={-stealth}] (3,1.5)--(1,1.5); \draw [arrows={-stealth}] (1,2.25)--(3,2.25);
			\draw [arrows={-stealth}] (0,1.5)--node [midway,above] {\small{($d$, $P^A$)}}(-1.5,1.5); \draw [arrows={-stealth}] (4,1.5)--node[midway,above]{\small{($d$, $P^B$)}}(5.5,1.5);
			\draw [arrows={-stealth}] (10,1.5)--node[midway,above]{\small{($c$, $P^A$)}}(8.5,1.5); \draw [arrows={-stealth}] (11.5,1.5)--node[midway,above]{\small{($c$, $P^B$)}}(13,1.5);
			\node [align=center] at (0.5,3.5) {$\Pi_A$};
			\node [align=center] at (3.5,3.5) {$\Pi_B$};
			\node [align=center] at (7,1.5) {\large{$\approx_{\epsilon}$}};
		\end{tikzpicture}
		\caption{$\Pi_A\Pi_B \approx _\epsilon CF$}
		\label{fig: impossibilitya}
	\end{subfigure}\vspace{-4mm}

	\begin{subfigure}{1.0\textwidth}
		\centering
		\begin{tikzpicture}[line width=0.20mm, scale=0.8, transform shape]
			\draw [white] (-2.1,0) rectangle (16,4.5); \draw (3,0) rectangle (4,3); \draw (12,0.75) rectangle node{$CF^p_A$} (13.5,2.25); \draw (9.5,0.75) rectangle (10.5,2.25);
			\draw [arrows={-stealth}] (1,0.75)--(3,0.75); \draw [arrows={-stealth}] (3,1.5)--(1,1.5); \draw [arrows={-stealth}] (1,2.25)--(3,2.25);
			\draw [arrows={-stealth}] (4,1.5)--node[midway,above]{\small{($d$, $P^B$)}}(5.5,1.5);
			\draw [arrows={-stealth}] (12,1.95)--node[midway,above]{\small{($c'$, $P_3$)}}(10.5,1.95); \draw [arrows={-stealth}] (10.5,1.05)--node[midway,above]{\small{($b'$, $P_4$)}}(12,1.05);
			\draw [arrows={-stealth}] (13.5,1.5)--node[midway,above]{\small{($c_o^B$, $P^B$)}}(15,1.5); \draw [arrows={-stealth}] (8.5,2)--(9.5,2); \draw [arrows={-stealth}] (9.5,1.5)--(8.5,1.5); \draw [arrows={-stealth}] (8.5,1)--(9.5,1);
			\node[align=center, blue] at (0.5,1.5) {$(1)$}; \node[align=center, black!30!green] at (8,1.5) {$(2)$};
			\node [align=center] at (3.5,3.5) {$\Pi_B$};
			\node [align=center] at (10,2.75) {$\operatorname{\sigma_A}$};
			\node [align=center] at (7,1.5) {\large{$\approx_{\epsilon}$}};
		\end{tikzpicture}
		\caption{Dishonest Alice: $\exists\, \sigma_A: \quad \Pi_B \approx _\epsilon \operatorname{\sigma_A}CF_A^b$}
		\label{fig: impossibilityb}
	\end{subfigure}\vspace{-4mm}

	\begin{subfigure}{1.0\textwidth}
		\centering
		\begin{tikzpicture}[line width=0.20mm, scale=0.8, transform shape]
			\draw [white] (-2.1,0) rectangle (16,4.5); \draw (0,0) rectangle (1,3); \draw (10,0.75) rectangle node{$CF^p_B$} (11.5,2.25); \draw (13,0.75) rectangle (14,2.25);
			\draw [arrows={-stealth}] (1,0.75)--(3,0.75); \draw [arrows={-stealth}] (3,1.5)--(1,1.5); \draw [arrows={-stealth}] (1,2.25)--(3,2.25);
			\draw [arrows={-stealth}] (0,1.5)--node [midway,above] {\small{($d$, $P^A$)}}(-1.5,1.5); 
			\draw [arrows={-stealth}] (11.5,1.95)--node[midway,above]{\small{($c$, $P_1$)}}(13,1.95); \draw [arrows={-stealth}] (13,1.05)--node[midway,above]{\small{($b$, $P_2$)}}(11.5,1.05);
			\draw [arrows={-stealth}] (10,1.5)--node[midway,above]{\small{($c_o^A$, $P^A$)}}(8.5,1.5); \draw [arrows={-stealth}] (14,2)--(15,2); \draw [arrows={-stealth}] (15,1.5)--(14,1.5); \draw [arrows={-stealth}] (14,1)--(15,1);
			\node[align=center, blue] at (3.5,1.5) {$(1)$}; \node[align=center, black!30!green] at (15.5,1.5) {$(2)$};
			\node [align=center] at (0.5,3.5) {$\Pi_A$};
			\node [align=center] at (13.5,2.75) {$\operatorname{\sigma_B}$};
			\node [align=center] at (7,1.5) {\large{$\approx_{\epsilon}$}};
		\end{tikzpicture}
		\caption{Dishonest Bob: $\exists\, \sigma_B: \quad \Pi_A \approx _\epsilon CF_B^b\operatorname{\sigma_B}$}
		\label{fig: impossibilityc}
	\end{subfigure}\vspace{-4mm}

	\begin{subfigure}{1.0\textwidth}
		\centering
		\begin{tikzpicture}[line width=0.20mm, scale=0.8, transform shape]
			\draw [white] (-2,0) rectangle (19,3.5); \draw (0,0) rectangle node{$CF^p_B$}(1.5,1.5); \draw (3,0) rectangle (4,1.5); \draw (5.5,0) rectangle node{$CF^p_A$} (7,1.5); \draw (13,0) rectangle node{$CF$} (14.5,1.5);
			\draw [arrows={-stealth}] (13,0.75)--node [midway,above] {\small{($c$, $P^A$)}}(11.5,0.75); \draw [arrows={-stealth}] (14.5,0.75)--node[midway,above]{\small{($c$, $P^B$)}}(16,0.75);
			\draw [arrows={-stealth}] (0,0.75)--node [midway,above] {\small{($c_o^A$, $P^A$)}}(-1.5,0.75); \draw [arrows={-stealth}] (7,0.75)--node[midway,above]{\small{($c_o^B$, $P^B$)}}(8.5,0.75);
			\draw [arrows={-stealth}] (1.5,1.2)--node[midway,above]{\small{($c$, $P_1$)}}(3,1.2); \draw [arrows={-stealth}] (3,0.3)--node[midway,above]{\small{($b$, $P_2$)}}(1.5,0.3);
			\draw [arrows={-stealth}] (5.5,1.2)--node[midway,above]{\small{($c'$, $P_3$)}}(4,1.2); \draw [arrows={-stealth}] (4,0.3)--node[midway,above]{\small{($b'$, $P_4$)}}(5.5,0.3);
			\node [align=center] at (3.5,2) {$\operatorname{\sigma_{AB}}$};
			\node [align=center] at (10,0.75) {\large{$\approx_{3\epsilon}$}};
			\node [red, align=center] at (8.5,-1) {$P_2\prec P^A$, $P_4\prec P^B$};
		\end{tikzpicture}
		\caption{$\exists \sigma_{AB}: \quad CF_B^b\, \operatorname{\sigma}_{AB} \, CF_A^b \approx _{3\epsilon} CF$}
		\label{fig: impossibilityd}
	\end{subfigure}

	\caption{{\bf Impossibility of coin flipping: proof sketch.} For a $p$-biased Coin Flipping to be $\epsilon$-constructible solely through the exchange of messages, conditions (a)-(c) must be satisfied. 
	The composition \textcolor{blue}{$(1)$} of the system on the l.h.s.\ of (c) ($\Pi_A$) with that on the l.h.s.\ of (b) ($\Pi_B$) yields the system on the l.h.s.\ of (a) ($\Pi_A\Pi_B$) which gives the condition (d) for the corresponding right hand sides \textcolor{black!30!green}{$(2)$} with $\operatorname{\sigma_{AB}}=\operatorname{\sigma_B \ \sigma_A}$.
	To prove impossibility, we show in Appendix~\ref{appendix: impos} that for any causal order of the messages $c$, $c'$, $b$ and $b'$, the best possible classical, quantum or non-signalling strategy of $\operatorname{\sigma}$ leads to a distinguishing advantage of at least $3\epsilon=\frac{1}{2}(1-p)$ between $CF^p_B\operatorname{\sigma}CF^p_A$ and $CF$ in (d).
	Note that if the parties had access to a shared resource $\mathcal{R}$, a condition analogous to (d) could not be obtained by composing (b)  and (c), and the same impossibility proof would no longer be applicable.  
	}
	\label{fig: impossibility}
\end{figure}

		
		

\begin{proof}
For the construction to be valid, all conditions of \fref{fig: impossibility} must hold. As explained in the figure caption, the first step is to combine the three conditions and use the triangle inequality to obtain \fref{fig: impossibilityd}.

Next we will show that for any causal order of the messages $c$, $c'$, $b$ and $b'$ in \fref{fig: impossibilityd}, the best possible classical, quantum or non-signalling strategy of ${\sigma}$ leads to a distinguishing advantage of at least $\frac{1}{2}(1-p)$ between $CF_B^p{\sigma}CF_A^p$ and $CF$. 
We present here only the optimal strategy---it is a straight-forward if tedious calculation to verify that all other causal orderings and possible input-output correlations in each case do not yield a lower distinguishing advantage.

The simulator's task is to ensure to the best of its capabilities that $c_o^A$ and $c_o^B$ are equal.
The causal order of the messages that provide $\operatorname{\sigma}$ with the maximum information to achieve this task is the one depicted by the \emph{directed acyclic graph} (DAG)\footnote{DAGs are widely used in the literature to represent causal structures. For classical causal structures (as is the case here,  given that the inputs and outputs to $\sigma$ are classical bits), the nodes (circles) represent random variables and the edges (arrows) represent causal influences.} in \fref{fig: causalorder}, where $\operatorname{\sigma}$ can learn the values of $c$ and $c'$ first and accordingly correlate the values of $b$ and $b'$ which are then input to $CF_A^p$ and $CF_B^p$ respectively. In this case, the best possible strategy that the simulator could adopt would be one where it produces the input-output correlations $b=b'=c$ or $b=b'=c'$ all the time.
The probability that $c_o^A$ equals $c_o^B$ for such a strategy (say, $b=b'=c$) is:
\begin{align*}
P(c_o^A=c_o^B) &= P(c_o^A=c_o^B|c=c').P(c=c')+P(c_o^A=c_o^B|c\neq c').P(c\neq c') \\
&= \frac{1}{2} + \frac{1}{2}[P(c_o^A=c_o^B|c\neq c', \text{both biased}).P(\text{both biased}) \\
&\quad+ P(c_o^A=c_o^B|c\neq c', \text{A biased}).P(\text{A biased}) \\
&\quad+ P(c_o^A=c_o^B|c\neq c', \text{B biased}).P(\text{B biased})  \\
&\quad+ P(c_o^A=c_o^B|c\neq c', \text{none biased}).P(\text{none biased})] \\
&= \frac{1}{2} + \frac{1}{2}[1.p^2 + 0.p(1-p) + 1.p(1-p) + 0.(1-p)^2] \\
&= \frac{1}{2}(1+p)
\end{align*}

\begin{figure}[tb]
	\centering
	\begin{tikzpicture}[line width=0.20mm, scale=1.7]
		\node [draw, circle, name=c1] at (0,0) {$C$}; \node [draw, circle, name=c2] at (1.5,0) {$C'$}; 
		\node [draw, circle, name=c3] at (0,1.5) {$B$}; \node [draw, circle, name=c4] at (1.5,1.5) {$B'$};  
		\draw [thick, arrows={-stealth}] (c1) -- (c3); \draw [thick, arrows={-stealth}] (c2) -- (c4); \draw [thick, arrows={-stealth}] (c1) -- (c4); \draw [thick, arrows={-stealth}] (c2) -- (c3);   
	\end{tikzpicture}
	\caption{The causal ordering of inputs and outputs of simulator $\operatorname{\sigma}$ (see Figure \ref{fig: impossibilityd}) that provide it the maximum information about the outputs $c_o$ and $c'_o$. $C$, $C'$, $B$ and $B'$ $(\in \{0,1\})$ represent the random variables of which the corresponding lower case alphabets are specific instances of. In addition, $B$ and $B'$ may causally influence each other, but this does not offer any advantage to $\operatorname{\sigma}$ because the optimal strategy is where both $b$ and $b'$ depend on $c$ (or $c'$).}
	\label{fig: causalorder}
\end{figure}


A distinguisher connected to $CF_B^p\operatorname{\sigma}CF_A^p$ or $CF$ can access the two outputs produced at the outer interfaces of these systems. If the distinguisher guesses $CF_B^p\operatorname{\sigma}CF_A^p$ every time the two outputs differ in value and $CF_B^p\operatorname{\sigma}CF_A^p$ or $CF$ with uniform probability every time the two outputs are equal, the distinguishing advantage would be:
\begin{equation}
3 \epsilon \geq d(CF_B^p\operatorname{\sigma}CF_A^p, CF) \geq P(c_o^A\neq c_o^B) = \frac{1}{2}(1-p)
\end{equation}
This distinguishing advantage $\epsilon$ is equal to zero only when $p=1$  (totally biased coin) and thus, for a non-trivial $p$, it is not possible to make the distinguishing advantage $\epsilon$ arbitrarily small.

The distinguisher used to distinguish the left and right-hand sides in \fref{fig: impossibilityd} is quite a trivial system, that only needs one bit of memory and compare the two output bits. The existence of such a distinguisher with advantage $3\epsilon$ implies that there exists another distinguisher with advantage $\eps$ that can distinguish the left and right-hand sides from either \fref{fig: impossibilitya}, \ref{fig: impossibilityb} or \ref{fig: impossibilityc}. We now go through the steps of this argument more slowly, to determine the exact complexity (both in terms of memory and computation) of the distinguisher that we have proven to exist. To construct \fref{fig: impossibilityd} from the three first conditions in \fref{fig: impossibility}, we use the following two arguments several times.

The first is the triangle inequality, namely that
\begin{equation*}
    \left.\begin{aligned}
      R \approx_\eps S\\
      S \approx_{\eps'} T
    \end{aligned}\right\} \implies R \approx_{\eps+\eps'} T.
\end{equation*}
Note that this holds for individual distinguishers, hence the contrapositive states that if there exists a distinguisher that can distinguish $R$ from $T$ with advantage $\eps+\eps'$, then \emph{exactly the same} distinguisher can distinguish either $R$ from $S$ with advantage $\eps$ or $S$ from $T$ with advantage $\eps'$.

The second generic argument---contractivity---uses the fact that for any resources $R,S$ and any other system $\alpha$,
\begin{equation*}
    R \approx_{\eps} S \implies \alpha R \approx_\eps \alpha S.
\end{equation*}
Unlike the previous argument, this one involves a change of distinguisher, namely if for some $\mathcal{D}$, $d^{\mathcal{D}}(\alpha R, \alpha S) > \eps$, then $d^{\mathcal{D}\alpha}(R,S) > \eps$, where $\mathcal{D}\alpha$ corresponds to the composition of $\mathcal{D}$ with $\alpha$.

We now start with the existence of the trivial distinguisher $\mathcal{D}$ for \fref{fig: impossibilityd} described above, and which has $d^{\mathcal{D}}(CF_B^p\operatorname{\sigma}CF_A^p, CF) > 3\epsilon$. From the triangle inequality we know that one of the three following conditions must hold
\begin{gather}
    d^{\mathcal{D}}(\Pi_A \Pi_B, CF) > \epsilon, \label{eq:impossibility.eq1} \\
    d^{\mathcal{D}}(\Pi_A\sigma_A CF_A^p,\Pi_A \Pi_B) > \epsilon, \label{eq:impossibility.eq2} \\
    d^{\mathcal{D}}(CF_B^p \sigma_B \sigma_A CF_A^p,\Pi_A\sigma_A CF_A^p) > \epsilon. \label{eq:impossibility.eq3}
\end{gather}
If it is \eqref{eq:impossibility.eq1} that holds, we are done, since we have a trivial distinguisher that can break the condition from \fref{fig: impossibilitya}. If it is either \eqref{eq:impossibility.eq2} or \eqref{eq:impossibility.eq3}, then using the contractivity rule, we find that either $\mathcal{D}\Pi_A$ can distinguish the left and right-hand sides of \fref{fig: impossibilityb} or $\sigma_A CF_A^p\mathcal{D}$ can distinguish the left and right-hand sides of \fref{fig: impossibilityc}.

Thus, both the computational requirements and memory requirements of the distinguisher are the same as the computational and memory requirements of either $\Pi_A$ or $\sigma_A CF_A^p$. 
\end{proof}

The proof of Theorem~\ref{claim: impossibility} is completely general and applies to quantum and non-signalling protocols as well. The apparent ``classicality'' of the proof is due to the fact that all inputs and outputs are classical bits as per the definition of the resources used. However, we only talk about the input-output correlations produced by the simulator $\operatorname{\sigma}$ and not the internal machinery used to produce these correlations, which could be classical, quantum or non-signalling and the impossibility holds for all classical, quantum and non-signalling strategies that $\operatorname{\sigma}$ could adopt to produce these correlations. A particular input-output correlation could be generated through many different strategies but it turns out in this particular case that there exists a simple classical strategy that perfectly produces these correlations (look at the value of $c$ and set $b=b'=c$ all the time), which is why we use correlations produced by $\operatorname{\sigma}$ and strategy adopted by $\operatorname{\sigma}$ quite interchangeably. But one must keep in mind that this in no way restricts the simulator to classical strategies.

\subsection{Impossibility of extending delays}
\label{appendix: impos_delay}

\ClaimImpossibilityExtendingCD*

\begin{proof}
Let $\mathcal{CD}^1,\dotsc,\mathcal{CD}^n$ denote the $n$ given channels with $\mathcal{CD}^i = (CD^i, CD^i_A,CD^i_B)$ and associated locations $P_i\prec P_i'\prec Q_i'\prec Q_i$.
Our goal is to  construct a channel $\mathcal{CD}'$, characterized by points $P_I \prec P_I' \prec P_F' \prec P_F$, given those channels and additional (direct) communication taking place in a space-time region $R$. 
The conditions given in \fref{fig: cdimpos} must be satisfied such that $\epsilon$ is a small, non-negative number $\forall$ distinguishers $\mathcal D \in \mathbbm D$.
In the following we write $CD = CD^1 \| \dotsb \| CD^n$ to denote the resource consisting of the  parallel composition  of the $n$ resources $CD^i$ that are available to Alice and Bob (similarly $CD_A$ and $CD_B$ for dishonest Alice and Bob respectively).

Note that for each channel with delay, there exists a converter $\delta^i_A$ such that  $\delta^i_A CD^i_A = CD^i$: this is simply a system that takes the input $a$ from Alice at position $P_i$ and outputs it at position $P_i'$.  Let  $\delta_A = \delta^1_A \| \dotsb \| \delta^n_A$ denote the parallel composition of these converters such that $\delta_A CD_A = CD$.


\begin{figure}[tbp]
	\begin{subfigure}{1.0\textwidth}
		\centering
		\begin{tikzpicture}[line width=0.20mm, scale=0.8, transform shape]
			\draw [white] (-2,0) rectangle (17,5); \draw (-0.5,0) rectangle (0.5,7); \draw (4.5,0) rectangle (5.5,7);
			\draw (2,5.5) rectangle node{$CD^1$} (3,6.5);  
			\node[align=center] at (2.5,5) {.}; \node[align=center] at (2.5,4.5) {.}; \node[align=center] at (2.5,4) {.};
			\draw (2,2.5) rectangle node{$CD^n$} (3,3.5); 
			\draw (11,2.5) rectangle node{$CD'$} (13, 4.5); 
			\draw [arrows={-stealth}] (0.5,6)--node[midway,above] {$(a_1,P_1)$}(2,6); \draw [arrows={-stealth}] (3,6)-- node[midway,above]{$(a_1, Q_1)$}(4.5,6); 
			\draw [arrows={-stealth}] (0.5,3)--node[midway,above] {$(a_n,P_n)$}(2,3); \draw [arrows={-stealth}] (3,3)-- node[midway,above]{$(a_n, Q_n)$}(4.5,3); 
			\draw [arrows={-stealth}] (0.5,0.5)--(4.5,0.5); \draw [arrows={-stealth}] (4.5,0.83)--(0.5,0.83); \draw [arrows={-stealth}] (0.5,1.16)--node[midway,above] {$(b, R)$}(4.5,1.16);
			\draw [arrows={-stealth}] (-2,3.5)--node[midway,above] {$(a, P_I)$}(-0.5,3.5); \draw [arrows={-stealth}] (5.5,3.5)--node[midway,above] {$(a, P_F)$}(7,3.5);
			\draw [arrows={-stealth}] (9.5,3.5)--node[midway,above] {$(a, P_I)$}(11,3.5); \draw [arrows={-stealth}] (13,3.5)--node[midway,above] {$(a, P_F)$}(14.5,3.5);
			\node [align=center] at (0,7.5) {$\Pi_A$};
			\node [align=center] at (5,7.5) {$\Pi_B$};
			\node [align=center] at (8.25,3.5) {\large{$\approx_{\epsilon}$}};
		\end{tikzpicture}
		\caption{$\Pi_ACD^1\parallel...\parallel CD^n\Pi_B \approx _\epsilon CD'$}
		\label{fig: cdimposa}
	\end{subfigure}
		
	\begin{subfigure}{1.0\textwidth}
		\centering
		\begin{tikzpicture}[line width=0.20mm, scale=0.8, transform shape]
		\draw [white] (-2,0) rectangle (17,5); \draw (-0.5,0) rectangle (0.5,7); 
			\draw (2,5.5) rectangle node{$CD_B^1$} (3,6.5);  
			\node[align=center] at (2.5,5) {.}; \node[align=center] at (2.5,4.5) {.}; \node[align=center] at (2.5,4) {.};
			\draw (2,2.5) rectangle node{$CD_B^n$} (3,3.5); 
			\draw (11,2.5) rectangle node{$CD_B'$} (13, 4.5); \draw (14.5,1.5) rectangle (15.5,5.5);
			\draw [arrows={-stealth}] (0.5,6)--node[midway,above] {$(a_1,P_1)$}(2,6); \draw [arrows={-stealth}] (3,6)-- node[midway,above]{$(a_1, Q_1')$}(4.5,6); 
			\draw [arrows={-stealth}] (0.5,3)--node[midway,above]{$(a_n, P_n)$}(2,3); \draw [arrows={-stealth}] (3,3)--node[midway,above]{$(a_n, Q_n')$}(4.5,3);
			\draw [arrows={-stealth}] (0.5,0.5)--(4.5,0.5); \draw [arrows={-stealth}] (4.5,0.83)--(0.5,0.83); \draw [arrows={-stealth}] (0.5,1.16)--node[midway,above]{$(b, R)$}(4.5,1.16);
			\draw [arrows={-stealth}] (-2,3.5)--node[midway,above]{$(a, P_I)$}(-0.5,3.5);
			\draw [arrows={-stealth}] (9.5,3.5)--node[midway,above]{$(a, P_I)$}(11,3.5); \draw [arrows={-stealth}] (13,3.5)--node[midway,above]{$(a, P_F')$}(14.5,3.5);
			\draw [arrows={-stealth}] (15.5,5.2)--node[pos=0.7,above]{$(a_1, Q_1')$}(16.5,5.2); \node[align=center] at (16,4.9) {.}; \node[align=center] at (16,4.6) {.}; \node[align=center] at (16,4.3) {.};
			\draw [arrows={-stealth}] (15.5,3.7)--node[pos=0.7,above]{$(a_n, Q_n')$}(16.5,3.7); \draw [arrows={-stealth}] (15.5,2.499)--node[midway,above]{$(b, R)$}(16.5,2.499); 
			\draw [arrows={-stealth}] (16.5,2.166)--(15.5,2.166); \draw [arrows={-stealth}]  (15.5,1.833)--(16.5,1.833);
			\node [align=center] at (0,7.5) {$\Pi_A$};
			\node [align=center] at (15,6) {$\sigma_B$};
			\node [align=center] at (8.25,3.5) {\large{$\approx_{\epsilon}$}};
		\end{tikzpicture}
		\caption{$\exists \operatorname{\sigma_B}$ such that $\Pi_A CD_B^1 \parallel ...\parallel CD_B^n \approx _\epsilon CD'_B\operatorname{\sigma_B}$.}
		\label{fig: cdimposb}
	\end{subfigure}
	
	\begin{subfigure}{1.0\textwidth}
		\centering
		\begin{tikzpicture}[line width=0.20mm, scale=0.8, transform shape]
		\draw [white] (-2,0) rectangle (17,5); \draw (4.5,0) rectangle (5.5,7); \draw (2,2.5) rectangle node{$CD_A^n$} (3,3.5); 
			\draw (13.5,2.5) rectangle node{$CD_A'$} (15.5, 4.5); \draw (11,1.5) rectangle (12,5.5);
			\draw (2,5.5) rectangle node{$CD_A^1$} (3,6.5);  
			\node[align=center] at (2.5,5) {.}; \node[align=center] at (2.5,4.5) {.}; \node[align=center] at (2.5,4) {.};
			\draw [arrows={-stealth}] (0.5,6)--node[midway,above] {$(a_1,P_1')$}(2,6); \draw [arrows={-stealth}] (3,6)-- node[midway,above]{$(a_1, Q_1)$}(4.5,6); 
			\draw [arrows={-stealth}] (0.5,3)--node[midway,above]{$(a_n, P_n')$}(2,3); \draw [arrows={-stealth}] (3,3)--node[midway,above]{$(a_n, Q_n)$}(4.5,3); 
			\draw [arrows={-stealth}] (0.5,0.5)--(4.5,0.5); \draw [arrows={-stealth}] (4.5,0.83)--(0.5,0.83); \draw [arrows={-stealth}] (0.5,1.16)--node[midway,above]{$(b, R)$}(4.5,1.16);
			\draw [arrows={-stealth}] (5.5,3.5)--node[midway,above]{$(a, P_F)$}(7,3.5);
			\draw [arrows={-stealth}] (12,3.5)--node[midway,above]{$(a, P_I')$}(13.5,3.5); \draw [arrows={-stealth}] (15.5,3.5)--node[midway,above]{$(a, P_F)$}(17,3.5);
			\draw [arrows={-stealth}] (10,5.2)--node[pos=0.3,above]{$(a_1, P_1')$}(11,5.2); \node[align=center] at (10.5,4.9) {.}; \node[align=center] at (10.5,4.6) {.}; \node[align=center] at (10.5,4.3) {.};
			\draw [arrows={-stealth}] (10,3.7)--node[pos=0.3,above]{$(a_n, P_n')$}(11,3.7); \draw [arrows={-stealth}] (10,2.499)--node[midway,above]{$(b, R)$}(11,2.499);
			\draw [arrows={-stealth}] (11,2.166)--(10,2.166); \draw [arrows={-stealth}] (10,1.833)--(11,1.833);
			\node [align=center] at (5,7.5) {$\Pi_B$};
			\node [align=center] at (11.5,6) {$\sigma_A$};
			\node [align=center] at (8.25,3.5) {\large{$\approx_{\epsilon}$}};
		\end{tikzpicture}
		\caption{$\exists \operatorname{\sigma_A}$ such that $CD_A^1\parallel ... \parallel CD_A^n \Pi_B \approx _\epsilon \operatorname{\sigma_A}CD'_A$.}
		\label{fig: cdimposc}
	\end{subfigure}
	\caption{Conditions for building a channel with delay $\CD'$ out of $n$ channels with delay $\CD^1, \dots, \CD^n$.}
	\label{fig: cdimpos}
\end{figure}

 From \fref{fig: cdimposc} we have
\begin{align}
    CD_A\ \Pi_B \approx_\epsilon  \sigma_A \ CD'_A &\implies
    \Pi_A\ \delta_A \ CD_A\ \Pi_B \approx_\epsilon \Pi_A \ \delta_A \  \sigma_A CD'_A \nonumber \\
    &\iff  \Pi_A\ CD\ \Pi_B \approx_\epsilon \Pi_A \ \delta_A \  \sigma_A \ CD'_A
    \label{eq:cdimpos1}
\end{align}
If we look at the right-hand side of (\ref{eq:cdimpos1}), the joint system $\Pi_A\delta_A\sigma_A$ produces an output at position $P'_I$, but nothing after. Hence, communication that does not reach $\sigma_A$ before $P'_I$ cannot influence the output and is not relevant to the output of $\Pi_A\delta_A\sigma_A$. Let $\bot_A$ denote a converter that blocks all channels $\CD^i$ with $P'_i \nprec P'_I$ and also blocks all communication in the region $R$ at points $P \nprec P'_I$. We then have $\Pi_A\bot_A\delta_A\sigma_A = \Pi_A\delta_A\sigma_A$. Combining this with \fref{fig: cdimposc}, \eqref{eq:cdimpos1}, and \fref{fig: cdimposa}, we get
\begin{equation*}
\Pi_A \bot_A \delta_A CD_A\Pi_B \approx_\epsilon \Pi_A \bot_A \delta_A \sigma_A CD'_A = \Pi_A \delta_A \sigma_A CD'_A \approx_\epsilon \Pi_A CD \Pi_B \approx_\epsilon CD',
\end{equation*}
from which we conclude that
\begin{equation} \label{eq:cdimpos2} \Pi_A \bot_A CD\Pi_B \approx_{3\epsilon} CD'. \end{equation}

We now turn our attention to \fref{fig: cdimposb}. Similarly to the argument above, we define a converter $\delta_B$ such that $CD_B\delta_B = CD$ and a converter $\bot_B$ that blocks exactly the same channels and points as $\bot_A$, but which is plugged into Bob's interface. We then get from \fref{fig: cdimposb} that
\begin{equation} \label{eq:cdimpos3} \Pi_A CD_B \delta_B \bot_B \Pi_B \approx_\epsilon CD'_B \sigma_B \delta_B \bot_B \Pi_B. \end{equation}
If we look at the left-hand side of \eqref{eq:cdimpos3}, we see that $CD_B\delta_B\bot_B = CD\bot_B = \bot_A CD$, hence it follows from \eqref{eq:cdimpos2} and \eqref{eq:cdimpos3} that 
\begin{equation} \label{eq:cdimpos4} CD'_B \sigma_B \delta_B \bot_B \Pi_B \approx_{4\epsilon} CD '. \end{equation}

\eqref{eq:cdimpos4} can only hold with $\epsilon < 1/8$ if information flows from the left interface of $CD_B'$ to the right interface of $\Pi_B$. Communication between $CD_B'$ and $\sigma_B$ only occurs in position $P'_F$, so for the message to make its way through to $\Pi_B$, there must also be communication between $\sigma_B$ and $\Pi_B$ at some point $P \succ P'_F$. The region $R$ cannot be used for this, as $P'_I \prec P'_F$ and $\bot_B$ blocks all communication after $P'_I$. The only remaining option is for there to exist a channel $\CD^i$ with $Q'_i \succ P'_F$ and which is not blocked by $\bot_B$, i.e., $P'_i \prec P'_I$. But in this case we would have $P'_i \prec P'_I \prec P'_F \prec Q_i'$, i.e., the \safezone\ of $\CD^i$ would contain the \safezone\ of $\CD'$.

To finish the proof, we still need to analyze the complexity of the distinguisher used to distinguish the real and ideal systems. The proof assumes that the protocol is secure, and then concludes that \eqref{eq:cdimpos4} must hold, which implies that the \safezone\ of the constructed channel must be contained in the \safezone\ of one of the assumed channels. Taking the contrapositive, we assume that the constructed $\CD'$ has a larger \safezone\ than the assumed channels, which implies that there exists a distinguisher that can distinguisher the left and right-hand sides of \eqref{eq:cdimpos4}, which in turn implies that there exists a distinguisher that can distinguisher the real from ideal in one of the equations from \fref{fig: cdimpos}. We will now go through the arguments of the proof to determine the complexity of this distinguisher that we have proven to exist.

The systems on the left and right-hand sides of \eqref{eq:cdimpos4} just take a message as input and output a message of the same dimension. $CD'$ performs an identity operation on the value of the message, whereas $CD'_B \sigma_B \delta_B \bot_B \Pi_B$ must trace out the input and output some fixed state, since by assumption $\CD'$ has a larger \safezone\ than the assumed channels, so there is no communication from Alice's interface to Bob's interface. If the channel is classical, an optimal system that distinguishes a fixed (possibly probabilistic) output from the identity channel, inputs a fixed message (that has low probability of being output by the channel on the left-hand side of \eqref{eq:cdimpos4}), and checks to see if the same message is output. This has probability of success at least $1/2$, and requires no memory and one equality check. If the channel is quantum, the distinguisher may perform the same (which then involves preparing one quantum state and performing a projective measurement). Alternatively, the distinguisher may input half of an EPR pair, keep the purification, and perform the projective measurement on the joint system of the output and the purification, which has a probability of success of at least $3/4$, but now involves quantum memory of the size of the message.

There are two generic arguments used in the proof to construct the distinguisher for one of the equations in \fref{fig: cdimpos} from the distinguisher for \eqref{eq:cdimpos4}. The first is the triangle inequality, namely that
\begin{equation*}
    \left.\begin{aligned}
      R \approx_\eps S\\
      S \approx_{\eps} T
    \end{aligned}\right\} \implies R \approx_{2\eps} T.
\end{equation*}
Note that this holds for individual distinguishers, hence the contrapositive states that if there exists a distinguisher that can distinguish $R$ from $T$ with advantage $2\eps$, then \emph{exactly the same} distinguisher can distinguish either $R$ from $S$ or $S$ from $T$ with advantage $\eps$.

The second generic argument uses the fact that for any resources $R,S$ and any converter $\alpha$,
\begin{equation*}
    R \approx_{\eps} S \implies \alpha R \approx_\eps \alpha S.
\end{equation*}
Unlike the previous argument, this one involves a change of distinguisher, namely if for some $\mathcal{D}$, $d^{\mathcal{D}}(\alpha R, \alpha S) > \eps$, then $d^{\mathcal{D}\alpha}(R,S) > \eps$, where $\mathcal{D}\alpha$ corresponds to the composition of $\mathcal{D}$ with $\alpha$. This was used several times in the proof with $\alpha = \Pi_A\bot_A\delta_A$, $\alpha=\Pi_A\delta_A$, and $\alpha=\delta_B\bot_B\Pi_B$. Putting this together, we prove that there exists a distinguisher than can distinguish at least one of the pairs of systems from \fref{fig: cdimpos}, and this distinguisher has the same computational requirements as either $\Pi_A$ or $\Pi_B$ along with one extra measurement needed to distinguish the left and right-hand sides of \eqref{eq:cdimpos4} (since $\delta$ and $\bot$ and forward and trace out messages, respectively, they do not perform any computation). Furthermore, if the channels are classical, then the distinguisher has the same quantum memory requirements as either $\Pi_A$ or $\Pi_B$, since $\delta$ and $\bot$ do not require any quantum memory.
\end{proof}

\section{Unfair resources}
\label{appendix:extraresults}
\subsection{Unfair coin flipping}
\label{appendix:unfairCF}

In Section~\ref{sssec: CF}, we defined the  $p$-biased coin flipping resource tuple $\CF^p=\{CF,CF_A^p,CF_B^p\}$. Here we define another variation, the \emph{unfair} coin flipping resource tuple $\CF^{\text{uf}}$ and prove that a $1/2$-biased coin flip resource $\CF^{1/2}$ can be constructed from it. Then, by reduction, Theorem~\ref{claim: impossibility} implies the impossibility of unfair coin flipping solely through the exchange of messages.

\begin{definition}[Unfair coin flipping, $\CF^{\text{uf}}$]

An \emph{unfair coin flip} $\CF^{\text{uf}} = (CF,CF^{\text{uf}}_A,CF^{\text{uf}}_B)$ has the same resource $CF$ as $\CF^p$, and $CF^{\text{uf}}_A$ and $CF^{\text{uf}}_B$ are given by:

\begin{description}
    \item[$CF^{\text{uf}}_B$:] Bob receives a uniformly random bit $c$ at location $P_1$. At location $P_2 \succ P_1$, he can input a bit $b \in \{\perp, \overline{\perp}\}$ that may depend on the value of $c$ received at $P_1$. 
    Alice then receives message $c_o^A$ at the location $P\succ P_2$ depending on Bob's input $b$ at $P_2$: 
    if $b=\perp$, then $c_o^A = \perp$, else  $c_o^A = c$ i.e., dishonest Bob can prompt an abort ($\perp$) on Alice's interface by setting $b=\perp$.
    \item [$CF^{\text{uf}}_A$:] analogous to $CF^p_B$, with the roles reversed.
\end{description}
\end{definition}

This is illustrated in \fref{fig:unfairCF}.

	\begin{figure}[tb]
		\centering
		\begin{tikzpicture}[line width=0.20mm, scale=0.8]
			\draw (0,0) rectangle node{$CF^{\text{uf}}_B$} (1.5,1.5); \draw [arrows={-stealth}] (0,0.75)--(-1.5,0.75); \node [align=center] at (-1.65,1.05) {\small{($c_o\in \{c,\perp\}$, $P$)}};
			\draw [arrows={-stealth}] (1.5,1.2)-- node[midway,above]{\small{($c$, $P_1$)}}(3,1.2); \draw [arrows={-stealth}] (3,0.3)--(1.5,0.3); \node [align=center] at (3.2, 0.6) {\small{($b \in \{\perp, \overline{\perp}\}$, $P_2$)}};
			\node [align=center] at (-4,0.75) {\textbf{Alice}};
			\node [align=center] at (5.5,0.75) {\textbf{Bob}};
			\node [align=center, red] at (0.75,-0.75) {$P_1 \prec P_2 \prec P$};
		\end{tikzpicture}
		\caption{\label{fig:unfairCF}An unfair coin flip resource with honest Alice and dishonest Bob.}
	\end{figure}
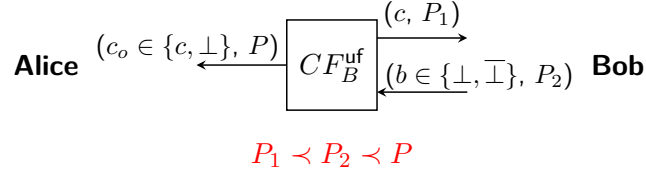

\begin{lemma}
\label{lem:unfairCF}
There exists a protocol $\Pi_{\mathcal{CF}^{\text{uf}}\rightarrow \mathcal{CF}^{1/2}}=\{\Pi_A',\Pi_B'\}$ that perfectly constructs a $1/2$-biased coin flipping resource $\mathcal{CF}^{1/2}$ from an unfair coin flipping resource $\mathcal{CF}^{\text{uf}}$.

The constructed and ideal resources are indistinguishable for any possible distinguisher (including quantum and non-signalling distinguishers). The honest protocol as well as the simulator require only elementary local operations and classical communication.
\end{lemma}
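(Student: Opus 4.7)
The plan is to exhibit an explicit protocol $\Pi' = (\Pi'_A, \Pi'_B)$ together with simulators $\sigma'_A, \sigma'_B$ and verify the three conditions of Definition~\ref{definition: security} with $\epsilon = 0$. The honest protocol on each side is: query the interface of $\CF^{\text{uf}}$ at the agreed location; if a bit $c \in \{0,1\}$ is received, output $c$; if instead the abort symbol $\perp$ is received, sample a fresh uniform local bit and output that. When both parties are honest the shared resource is $CF$, which never produces $\perp$, so both parties forward the same uniformly random $c$, matching the ideal $CF$ of $\CF^{1/2}$ exactly.

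For dishonest Alice, the simulator $\sigma'_A$ connects its inner interface to $CF^{1/2}_A$ and exposes outward the dishonest-Alice interface of $CF^{\text{uf}}_A$. Concretely, on receiving the coin $c$ from $CF^{1/2}_A$ at $P_1$ it forwards $c$ outward at $P_1$; on later receiving the abort decision $b \in \{\perp, \overline{\perp}\}$ at $P_2$, it inputs to $CF^{1/2}_A$ the biasing bit $b' = c$ if $b = \overline{\perp}$, and $b' = \overline{c}$ (the complement) if $b = \perp$. The simulator $\sigma'_B$ is defined symmetrically.

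The key step will be verifying that $CF^{\text{uf}}_A \Pi'_B$ and $\sigma'_A CF^{1/2}_A$ induce identical joint distributions on the outer interfaces. When $b = \overline{\perp}$, the real system delivers $c$ to Bob; in the ideal one, $c_o^B$ equals either $b' = c$ (with probability $1/2$) or $c$ (with probability $1/2$), hence always $c$. When $b = \perp$, the real system outputs an independent uniform bit; in the ideal one, $c_o^B$ equals $\overline{c}$ or $c$ with probability $1/2$ each, so conditional on every value of $c$ it is uniform, and therefore uniform and independent of $c$. Since the triple $(c, b, c_o^B)$ has the same distribution in both worlds, any distinguisher---classical, quantum, or non-signalling---has advantage $0$.

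The only real subtlety, and the main thing to get right, is the choice $b' = \overline{c}$ in the abort branch: this is precisely what exploits the $1/2$-bias of $CF^{1/2}_A$ to randomize Bob's output while keeping it independent of $c$, mimicking the fresh local coin used by $\Pi'_B$. Everything else reduces to forwarding, XOR, and local sampling, so the protocols and simulators require only elementary local operations and classical communication, as claimed.
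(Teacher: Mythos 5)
Your proposal is correct and matches the paper's own proof essentially verbatim: the same honest protocol (forward the bit, substitute a fresh local coin on $\perp$) and the same simulator (forward $c'$, input $b'=c'$ on $\overline{\perp}$ and $b'=\overline{c'}$ on $\perp$), with the identical case analysis showing the real and ideal output distributions coincide. No gaps.
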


\begin{proof}

We define the honest protocol $\Pi_{\mathcal{CF}^{\text{uf}}\rightarrow \mathcal{CF}^{1/2}}=\{\Pi_A',\Pi_B'\}$ as follows:

\begin{enumerate}
	\item Receive the coin flip outcome from the corresponding interface of the unfair coin flipping resource $\mathcal{CF}^{\text{uf}}$ at the inner interface.
	\item If this outcome has a bit value (say $c$), output $c$ at the outer interface. If this outcome is an abort ($\perp$), then output $c_u=0$ or $c_u=1$ each with probability $p=1/2$ at the outer interface.
\end{enumerate}

$\Pi_{\mathcal{CF}^{\text{uf}}\rightarrow \mathcal{CF}^{1/2}}$ perfectly constructs a $1/2$-biased coin flipping resource for the following simulators (the same for $Sim_A$ and $Sim_B$):

\begin{enumerate}
	\item Receive the output bit $c'$ from the biased coin flipping resource on the inner interface and output the same bit at the outer interface.
	\item Upon receiving the additional input of $\perp$ or $\overline{\perp}$ at the outer interface, forward $b'=c'$ to the resource at the inner interface if this input is not an abort ($\overline{\perp}$) and forward $b'=\bar{c'}=c'\oplus 1$ to the resource if the input at the outer interface is an abort ($\perp$).
\end{enumerate}

One can easily verify that the real and ideal systems are identical, for convenience, we have drawn this in Figure~\ref{fig: unf-b}. 
\end{proof}

\begin{figure}[tbp]
	\begin{subfigure}{1.0\textwidth}
		\centering
		\begin{tikzpicture}[line width=0.20mm, scale=0.85, transform shape]
			\draw [white] (-1,0) rectangle (18,5); \draw (0,0) rectangle (1,4); \draw (5,0) rectangle (6,4); \draw [arrows={-stealth}] (2,2)--node[midway,above]{$c$}(1,2); \draw [arrows={-stealth}] (4,2)--node[midway,above]{$c$}(5,2);
			\draw (2,1) rectangle node{$CF^{\text{uf}}$} (4,3); \draw (12,1) rectangle node{$CF^{1/2}$} (14, 3);  
			\draw [arrows={-stealth}] (0,2)--node[midway,above]{$c$}(-1,2); \draw [arrows={-stealth}] (6,2)--node[midway,above]{$c$}(7,2); 
			\draw [arrows={-stealth}] (12,2)--node[midway,above]{$c'$}(11,2); \draw [arrows={-stealth}] (14,2)--node[midway,above]{$c'$}(15,2); \draw [dashed,blue] (0,2)--(1,2); \draw [dashed, blue] (5,2)--(6,2);
			\node [align=center,blue] at (0.5,3.5) {if $\perp$,}; \draw [blue] (0.5,3) circle (0.3) node{$c_u$}; \draw [arrows={-stealth},blue] (0.5,2.7)--(0.5,2);
			\node [align=center,blue] at (5.5,3.5) {if $\perp$,}; \draw [blue] (5.5,3) circle (0.3) node{$c_u$}; \draw [arrows={-stealth},blue] (5.5,2.7)--(5.5,2);
			\node [align=center] at (0.5,4.5) {$\Pi_A$};
			\node [align=center] at (5.5,4.5) {$\Pi_B$};
			\node [align=center] at (9,2) {\large{$\approx_0$}};
		\end{tikzpicture}
		\caption{When both parties are honest, the outcomes of the unfair resource $CF^{\text{uf}}$ are never equal to $\perp$ and the protocols $\Pi_A$ and $\Pi_B$ simply forward the bit $c$ received at the inner interface to their outer interface. This is a perfect construction since the honest resources $CF^{\text{uf}}$ and $CF^{1/2}$ are the same.}
		\label{fig: unf-ba}
	\end{subfigure}
	
	\begin{subfigure}{1.0\textwidth}
		\centering
		\begin{tikzpicture}[line width=0.20mm, scale=0.85, transform shape]
			\draw [white] (-1,0) rectangle (18,5); \draw (0,0) rectangle (1,4); \draw [arrows={-stealth}] (2,2)--node[midway,above]{$\{c,\perp\}$}(1,2); 
			\draw (15,1) rectangle (16.8,3); \node [align=center] at (15.9,3.3) {$\operatorname{Sim_B}$};
			\draw [arrows={-stealth}] (4,2.5)--node[midway,above]{$c$}(5,2.5); \draw [arrows={stealth-}] (4,1.5)--(5,1.5); \node [align=center] at (4.6,1.8) {$\{\perp,\overline{\perp}\}$};
			\draw (2,1) rectangle node{$CF^{\text{uf}}$} (4,3); \draw (12,1) rectangle node{$CF^{1/2}_B$} (14, 3);  
			\draw [arrows={-stealth}] (0,2)--node[midway,above]{$\{c,c_u\}$}(-1,2);  
			\draw [arrows={-stealth}] (12,2)--node[midway,above]{$\{c',b'\}$}(11,2); \draw [arrows={-stealth}] (14,2.5)--node[midway,above]{$c'$}(15,2.5); 
			\draw [arrows={stealth-}] (14,1.5)--node[midway,above]{$b'$}(15,1.5); \draw [dashed,blue] (0,2)--(1,2); \draw [dashed,blue] (15,2.5)--(16.7,2.5); 
			\draw [arrows={-stealth}] (16.8,2.5)--node[midway,above]{$c'$}(17.8,2.5); \draw [arrows={-stealth}] (17.8,1.5)--(16.8,1.5); \node [align=center] at (17.4,1.8) {$\{\perp,\overline{\perp}\}$};
			\node [align=center,blue] at (0.5,3.5) {if $\perp$,}; \draw [blue] (0.5,3) circle (0.3) node{$c_u$}; \draw [arrows={-stealth},blue] (0.5,2.7)--(0.5,2);
			\node [align=center,blue] at (15.9, 2) {\small{$\perp \Rightarrow b'=\bar{c'}$}}; \node [align=center,blue] at (15.9, 1.5) {\small{$\overline{\perp} \Rightarrow b'=c'$}};
			\node [align=center] at (0.5,4.5) {$\Pi_A$};
			\node [align=center] at (9,2) {\large{$\approx_0$}};
		\end{tikzpicture}
		\caption{The simulator $\operatorname{Sim_B}$ for dishonest Bob simply forwards $c'$ received at its inner interface to its outer interface and sets $b'=c'$ if it receives $\overline{\perp}$ at the outer interface and $b'=\bar{c'}$ otherwise. Now the protocol $\Pi_A$ may also receive the abort input $\perp$ from the unfair CF resource, in which case it forwards the uniformly random bit $c_u$ which equals either $0$ or $1$ each with probability $1/2$ and simply forwards the input $c$ from the unfair CF resource otherwise. The construction is perfect because the probability distribution of inputs and outputs from the real system is the same as the input and output probability distribution of the ideal system and hence the two are perfectly indistinguishable. More specifically, whenever a dishonest player does not abort, the outputs at both interfaces will be equal to an independently generated, uniformly random bit (labelled as $c$ for the real system and $c'$ for the ideal system. If the dishonest player aborts, the two outputs will be equal to an independently generated, uniformly random bit ($c$ or $c'$) with a probability of $1/2$ and they will be uniformly random but completely uncorrelated ($c_u$ and $c$ for the real system and $b'$ and $c'$ for the ideal system) with a probability of $1/2$. The argument for dishonest Alice is identical.}
		\label{fig: unf-bb}
	\end{subfigure}
	\caption{Constructibility of a $1/2$-biased CF resource from an unfair CF resource. We have dropped the space-time labels corresponding to the messages to avoid unnecessary annotations, but it is easy to see that there exist space-time labels for each message involved such that the above construction is satisfied.}
	\label{fig: unf-b}
\end{figure}

\subsection{Abort channel}
\label{appendix:CDabort}

In Sec.~\ref{sssec: CD}, a \CD\ is defined such that once Alice inputs her message at $P$ (respectively, $P'$, if she is dishonest), Bob is guaranteed to receive it at $Q$ (or $Q'$ if he is dishonest). In this section we consider a version of a channel with delay in which Alice may additionally abort, and prevent Bob from getting her message. We call this an abort channel, and write $\CD^{\bot}$.

\begin{definition}[Abort channel, $\CD^{\bot}$]
\label{definition: CDabort}    
An abort channel $\CD^{\bot} = (CD, CD^{\bot}_A, CD_B)$ between a sender Alice and a receiver Bob is a tuple of resources characterized by five space-time locations, $P \prec  P'\prec R \prec Q' \prec Q$. $CD$ and $CD_B$ are defined identically to a standard \CD\ (Definition~\ref{definition: CD}). $CD^{\bot}_A$ is defined as follows
\begin{description}
    \item{$CD^\bot_A$:} Dishonest Alice inputs $(a, P')$. She may also input $(\bot,R)$. If she input $(\bot,R$), Bob does not receive anything. Otherwise, Bob receives $(a, Q)$.
\end{description}
\end{definition}

Nearly the same protocol as used in Theorem~\ref{claim: constructibility} can be used to construct an unfair coin flip from an abort channel.

\begin{lemma}[Construction $\CD^\bot \to \CF^{\text{uf}}$]
\label{lem:CDabort}
Given a classical abort channel $\CD^\bot$, there exists a classical protocol $\Pi_{\mathcal{CD}^\bot\rightarrow \mathcal{CF^{\text{uf}}}}=\{\Pi_A,\Pi_B\}$ that perfectly constructs an unfair coin flipping resource $\mathcal{CF}^{\text{uf}}$.

The constructed and ideal resources are indistinguishable for any possible distinguisher (including quantum and non-signalling distinguishers). The honest protocol as well as the simulator require only elementary local operations and classical communication.
\end{lemma}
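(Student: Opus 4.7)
The plan is to adapt the protocol $\Pi_{\CD \to \CF}$ of Definition~\ref{definition: protocol} by replacing the ``pick a uniform fallback bit'' instructions with explicit $\bot$-outputs. Concretely, I would define $\Pi_{\CD^\bot\to\CF^{\text{uf}}}$ identically to $\Pi_{\CD \to \CF}$, except that (i) if honest Alice does not receive $b$ from Bob at $P$, she outputs $\bot$ at $P_F^A$ rather than sampling $b$ locally, and (ii) if honest Bob does not receive any message through the channel by his arrival location, he outputs $\bot$ at $P_F^B$ rather than sampling $a$ locally. I would then check the three conditions of Definition~\ref{definition: security} by reusing the analysis of Theorem~\ref{claim: constructibility} and adding handling for the abort branches.

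For the honest case both parties follow the protocol, $\CD^\bot$ delivers as $\CD$, and the argument of Theorem~\ref{claim: constructibility} gives perfect indistinguishability from $CF$. For dishonest Alice, I would extend the simulator from Theorem~\ref{claim: constructibility}: $\sigma_A$ observes Alice's input $a$ at $A'$ on its outer interface, reads $c$ from $CF^{\text{uf}}_A$ at some $P_1 \prec P$, sends $b = a \oplus c$ to Alice at $P$, and inputs at $P_2$ either $\overline{\bot}$ (if Alice did not invoke the abort at $R$) or $\bot$ (if she did). For dishonest Bob, $\sigma_B$ waits at $P$: if Bob sends a bit $b$ there, $\sigma_B$ reads $c$ from $CF^{\text{uf}}_B$, delivers $a = b \oplus c$ to Bob at $B'$, and inputs $\overline{\bot}$ at $P_2$; if Bob does not show up at $P$, $\sigma_B$ delivers a uniformly random $a$ at $B'$ and inputs $\bot$ at $P_2$.

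The core of the verification is that in every branch the joint distributions of the dishonest party's view and the honest party's output agree. In the non-abort branches this reduces to the computation already carried out in Theorem~\ref{claim: constructibility}, since the choice $b = a \oplus c$ (resp.\ $a = b \oplus c$) guarantees that the honest output $a \oplus b$ in the real world equals $c$ in the ideal world while the dishonest party's view on the other end of the channel remains a uniform bit independent of the coin, as required by the \safezone\ constraints. In the abort branches both worlds deliver $\bot$ to the honest party, and the dishonest party's simulated view is statistically identical to what $\Pi_A$ or $\Pi_B$ would produce honestly.

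The main technical obstacle is ensuring that the simulators' causality constraints are simultaneously satisfiable. In addition to the constraints already verified for Theorem~\ref{claim: constructibility} ($A' \prec P \prec B'$, $P_1 \prec P$, $P_2 \prec P_F^{A/B}$), one must accommodate the abort point $R$: since $\sigma_A$'s choice of $b' \in \{\bot,\overline{\bot}\}$ must depend on whether Alice inputs $\bot$ at $R$, we need $R \prec P_2$. This is feasible because $R \in D(A',B')$ by Definition~\ref{definition: CDabort} and $P_2$ can be placed anywhere in the causal past of $P^B = P_F^B$ with $P_1 \prec P_2$, so one may choose $P_1 \prec P \prec R \prec P_2 \prec P_F^B$ (and symmetrically for $\sigma_B$). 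Since every step the simulators perform is local classical computation, indistinguishability is perfect and holds against arbitrary distinguishers in $\mathbbm C$, mirroring the conclusion of Theorem~\ref{claim: constructibility}.
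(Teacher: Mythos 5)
Your construction is essentially the paper's: the paper also reuses the protocol of Definition~\ref{definition: protocol}, replacing Bob's fallback ``pick a uniform $a$'' by ``output $\bot$'' and adapting $\sigma_A$ to trigger the abort of $CF^{\text{uf}}_A$. (The paper leaves Alice's fallback unchanged, so its $\sigma_B$ never needs the abort option; your additional change to Alice's fallback is harmless, since $CF^{\text{uf}}_B$ lets your $\sigma_B$ trigger the abort whenever Bob fails to show up at $P$, and your causality bookkeeping for that is fine.)

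There is, however, one genuine gap in your simulator for dishonest Alice. You condition $\sigma_A$'s choice of $\bot$ versus $\overline{\bot}$ at $P_2$ solely on whether Alice inputs $\bot$ at $R$. But a dishonest Alice may simply input nothing into $\CD^\bot$ at $A'$ at all. In the real world your modified honest Bob then receives nothing from the channel and outputs $\bot$ at $P_F^B$; in your ideal world $\sigma_A$ sees no abort at $R$, inputs $\overline{\bot}$, and $CF^{\text{uf}}_A$ delivers the bit $c$ to Bob. A distinguisher that plays a silent Alice and checks whether Bob's output is $\bot$ or a bit then distinguishes the two systems perfectly, so as written the second condition of Definition~\ref{definition: security} fails. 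The fix is the one the paper makes explicit: $\sigma_A$ must notify $CF^{\text{uf}}_A$ to abort both when it receives $(a,A')$ followed by $(\bot,R)$ \emph{and} when it receives no input $(a,A')$ in the first place. Your verification paragraph misses this branch because it is neither an ``abort branch'' in your sense (no $\bot$ was input at $R$) nor covered by the Theorem~\ref{claim: constructibility} analysis (whose honest Bob samples a uniform $a$ on non-receipt rather than outputting $\bot$). A minor further quibble: the chain $P_1\prec P\prec R\prec P_2\prec P_F^B$ you assert is not forced ($P$ and $R$ are both in $D(A',B')$ but need not be comparable); what your argument actually needs is only $A'\prec P$, $P_1\prec P$, and $R\prec P_2\prec P_F^B$, all of which are indeed satisfiable.
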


\begin{proof}
The protocol is the same as the one used to construct $\CF^0$ from $\CD$, except that if Bob does not receive anything from the channel, he outputs $\bot$ instead of picking a uniform $a$ himself. The simulator $\sigma_A$ has to be changed in the same way: if it does not receive an input $(a,A')$ or if it recives $(a,A')$, but later gets an abort $\bot$ (which is now allowed by $\CD^\bot$), it notifies the resource $\CF^\bot_A$ to abort and output $\bot$ at Bob's interface. Drawing up a figure similar to \fref{fig: constructibility}, one can see that here too we have perfect security.

\end{proof}

It then follows from Theorem~\ref{claim: impossibility} that an abort channel cannot be constructed without any setup assumptions either.

\begin{corollary}[Impossibility of $\CD^\bot$]
\label{cor: noCDabort}
It is impossible to construct $\CD^\bot$, with $\epsilon<\frac{1}{12}$, between two mutually distrusting parties solely through the exchange of messages through any classical, quantum or relativistic protocol.

The distinguisher required to distinguish the real form ideal systems has the same complexity and memory requirements as the distinguisher used in Theorem~\ref{claim: impossibility} composed with the protocols used in Lemmas~\ref{lem:unfairCF} and \ref{lem:CDabort}. In particular, if these are efficient, classical and have bounded or noisy memory, then so does the distinguisher.
\end{corollary}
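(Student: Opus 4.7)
The plan is a two-step reduction to Theorem~\ref{claim: impossibility} via the two perfect constructions established earlier in this appendix. Suppose for contradiction that some protocol $\Pi=(\Pi_A,\Pi_B)$ $\eps$-constructs $\CD^\bot$ from no initial shared resource with $\eps<\tfrac{1}{12}$. The strategy is to turn $\Pi$ into a protocol that $\eps$-constructs $\CF^{1/2}$ from nothing, and then invoke Theorem~\ref{claim: impossibility} with $p=\tfrac12$, which forbids every such construction with $\eps<\tfrac{1}{6}\bigl(1-\tfrac12\bigr)=\tfrac{1}{12}$.

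Concretely, I would first compose $\Pi$ with the protocol $\Pi_{\CD^\bot\to\CF^{\text{uf}}}$ of Lemma~\ref{lem:CDabort}, which perfectly constructs $\CF^{\text{uf}}$ from $\CD^\bot$. By transitivity of the construction relation---equivalently, by the fact that the distinguishing advantage is non-increasing when both sides of each security condition are composed with the same converter (the pseudo-metric property recalled in Sec.~\ref{ssec: AC})---the composite protocol $\eps$-constructs $\CF^{\text{uf}}$ from no shared resource. Iterating once more with the perfect protocol $\Pi_{\CF^{\text{uf}}\to\CF^{1/2}}$ of Lemma~\ref{lem:unfairCF} yields an $\eps$-construction of $\CF^{1/2}$ from nothing, delivering the contradiction.

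The only remaining task---and it is routine rather than a genuine obstacle---is to verify the complexity claim in the second half of the statement. The contrapositive of Theorem~\ref{claim: impossibility} supplies a distinguisher $\mathcal D$ acting at the outer interfaces of $\CF^{1/2}$; pulling it back to a distinguisher against $\Pi$ amounts to pre-composing $\mathcal D$ with the two honest converters $\Pi_{\CF^{\text{uf}}\to\CF^{1/2}}$ and $\Pi_{\CD^\bot\to\CF^{\text{uf}}}$ at the appropriate interfaces. Both of these converters involve only elementary classical operations and constant memory (see the definitions in Appendices~\ref{appendix:unfairCF} and \ref{appendix:CDabort}), so the resulting distinguisher inherits whatever classical, efficient, or bounded-/noisy-memory feature $\mathcal D$ enjoyed, matching the precise complexity claim in the corollary statement. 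Since composition with converters cannot amplify the advantage, the threshold $\tfrac{1}{12}$ is preserved throughout the reduction.
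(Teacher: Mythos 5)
Your proposal is correct and follows essentially the same route as the paper: reduce via the perfect constructions $\CD^\bot \to \CF^{\text{uf}}$ (Lemma~\ref{lem:CDabort}) and $\CF^{\text{uf}} \to \CF^{1/2}$ (Lemma~\ref{lem:unfairCF}), then invoke Theorem~\ref{claim: impossibility} with $p=\tfrac12$ to obtain the threshold $\tfrac{1}{6}(1-\tfrac12)=\tfrac{1}{12}$. The paper's own proof is a terser version of exactly this chain, and your additional remarks on the distinguisher's complexity match the claim in the corollary statement.
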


\begin{proof}
Lemma~\ref{lem:CDabort} constructs $\CF^{\text{uf}}$ from $\CD^\bot$, and Lemma~\ref{lem:unfairCF} constructs $\CF^{1/2}$ from $\CF^{\text{uf}}$. Thus, the impossiblity of constructing $\CF^p$ from Theorem~\ref{claim: impossibility} immediately implies the impossibility of constructing $\CD^\bot$.
\end{proof}

Finally, we can show that Theorem~\ref{claim: impos2} also holds for abort channels.

\begin{lemma}[Impossibility of extending $\CD^\bot$]
\label{lem: noExtCDabort}
Given $n$ abort channels with delay  $\mathcal{CD}_1^\bot$,...,$\mathcal{CD}_n^\bot$ between two parties, it is impossible to construct with $\epsilon \leq \frac{1}{8}$  a channel $\mathcal{CD}'^\bot$ between the two parties with a \safezone\ that is larger than the \safezone\ of all of the individual channels used.

This holds for all protocols $\Pi_A,\Pi_B$ in $\mathbbm C$, which includes inefficient and non-signalling systems. The distinguisher needed to distinguish the real from ideal system has the same complexity requirements as the protocol$\Pi_A,\Pi_B$. In particular, if it is efficient or classical, then so is the distinguisher. Furthermore, if the channels constructed and used are classical, then the distinguisher also has the same quantum memory requirements as the protocol $\Pi_A,\Pi_B$.
\end{lemma}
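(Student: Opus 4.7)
The plan is to reduce Lemma~\ref{lem: noExtCDabort} to the same chain of inequalities that proves Theorem~\ref{claim: impos2}, observing that the abort functionality granted to dishonest Alice by $\CD^{\bot}$ neither helps nor hurts the argument. Fix candidate protocols $(\Pi_A,\Pi_B)$ and simulators $(\sigma_A,\sigma_B)$ that would $\eps$-construct $\CD'^{\bot}$ from $\CD_1^{\bot},\dots,\CD_n^{\bot}$, and let $A\prec A'\prec R^i \prec B' \prec B$ be the five locations associated with each $\CD_i^{\bot}$, and $P_I\prec P_I'\prec R'\prec P_F'\prec P_F$ the analogous locations for $\CD'^{\bot}$.

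First I would introduce, for each $i$, the converter $\delta_A^i$ that accepts a message at location $P_i$ at its outer interface and forwards it through the inner interface at $P_i'$, and \emph{never} issues the abort symbol at $R_i$. By construction $\delta_A^i\, CD_{A,i}^{\bot} = CD^i$ as a system, since the only way $CD_{A,i}^{\bot}$ differs from $CD^i$ is through the abort option that $\delta_A^i$ refuses to use. Setting $\delta_A := \delta_A^1\|\cdots\|\delta_A^n$ we obtain $\delta_A CD_A^{\bot} = CD$. On Bob's side, $CD_{B,i}$ and the converter $\delta_B^i$ are defined exactly as in Theorem~\ref{claim: impos2}, since the dishonest-Bob resource is unchanged by the addition of the abort. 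Next, define the blocking converters $\bot_A,\bot_B$ as in the proof of Theorem~\ref{claim: impos2}: they discard all channel inputs/outputs and direct communication outside the cut determined by $P_I'$ (respectively $P_F'$).

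With these ingredients in hand, the three-step triangle-inequality argument from the proof of Theorem~\ref{claim: impos2} carries over verbatim. From the dishonest-Alice condition $CD_A^{\bot}\Pi_B \approx_\eps \sigma_A CD_A'^{\bot}$, pre-composing with $\Pi_A\delta_A$ and using $\delta_A CD_A^{\bot}=CD$ gives $\Pi_A CD \Pi_B \approx_\eps \Pi_A\delta_A\sigma_A CD_A'^{\bot}$; then the observation that $\Pi_A\delta_A\sigma_A$ produces no output past location $P_I'$ lets us insert $\bot_A$ freely, and combining with the honest condition yields $\Pi_A \bot_A CD\Pi_B \approx_{3\eps} CD'^{\bot}$. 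An analogous step on Bob's side produces
\begin{equation*}
CD_B'^{\bot}\,\sigma_B\,\delta_B\,\bot_B\,\Pi_B \;\approx_{4\eps}\; CD'^{\bot}.
\end{equation*}

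The final step is the same geometric argument: since $\bot_A$ and $\bot_B$ together prevent any use of a channel $\CD_i^{\bot}$ whose \safezone does not already contain $D(P_I',P_F')$, and also kill all direct communication outside that region, the only way information can flow from Alice's input at $P_I$ to Bob's output at $P_F$ is through some channel $\CD_i^{\bot}$ satisfying $P_i'\prec P_I'$ and $Q_i'\succ P_F'$, i.e., whose \safezone contains the \safezone of the constructed channel. Otherwise the left-hand side above cannot carry any information, so a distinguisher that inputs a message and checks Bob's output succeeds with advantage at least $\tfrac{1}{2}$, contradicting $\eps<\tfrac{1}{8}$. The main technical point to check is precisely the equality $\delta_A^i CD_{A,i}^{\bot}=CD^i$; once that is in place, the complexity tracking of the distinguisher is identical to Theorem~\ref{claim: impos2}, giving the stated computational, classical, and quantum-memory inheritance.
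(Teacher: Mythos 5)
Your proposal is correct and follows essentially the same route as the paper, which in fact omits the proof entirely with the remark that it is identical to that of Theorem~\ref{claim: impos2} because $\CD$ and $\CD^\bot$ differ only on the adversarial interface; your explicit verification that $\delta_A^i\,CD^{\bot}_{A,i}=CD^i$ (the forwarding converter simply never uses the abort port) is exactly the one compositional identity that needs checking on the side of the \emph{assumed} channels. One small point you gloss over, and which the paper also leaves implicit: on the \emph{constructed} resource's side, $CD'^{\bot}_A$ gives the simulator $\sigma_A$ an extra output port $(\bot,R')$ with $R'\succ P_I'$, so the statement that $\Pi_A\delta_A\sigma_A$ ``produces no output past $P_I'$'' is no longer literally true and the insertion of $\bot_A$ is not strictly an equality of systems; this is harmless because the honest-case condition forces $\Pi_A\delta_A\sigma_A\,CD'^{\bot}_A\approx_{2\eps}CD'$ and $CD'$ never aborts, so $\sigma_A$'s abort probability is bounded and the final distinguisher (which only checks delivery of the honest message) is unaffected, but it deserves a sentence rather than the word ``verbatim.''
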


The proof of Lemma~\ref{lem: noExtCDabort} is identical to the proof of Theorem~\ref{claim: impos2} found in Appendix~\ref{appendix: impos_delay}, because the distinguisher used runs the honest protocol $Pi_A,Pi_B$, and $\CD$ and $\CD^\bot$ only differ on the adversarial interface (a dishonest Alice can provoke an abort). So we omit it.

\vspace{2cm}



\end{document}